\renewcommand\xnotesize\normalsize
\definecolor{linkcol}{rgb}{0.0,0.55,0.7}
\definecolor{citecol}{rgb}{0.0, 0.6, 0.45}
\definecolor{urlcol}{rgb}{0.7, 0.0, 0.55}
\def\01{\{0,1\}}
\DeclareMathOperator*{\argmin}{arg\,min}
\newcommand{\rsubseteq}{\rotatebox[origin=c]{270}{$\subseteq$}}
\newcommand{\doubleyou}{\mathrm{w}}
\newtheoremstyle{mydefinitionsty}
{10pt}
{10pt}
{}
{}
{}
{}
{.5em}
{\textbf{\thmname{#1}~\thmnumber{#2}:  }\thmnote{(#3)}}
\theoremstyle{mydefinitionsty}
\newtheorem{definition}{Definition}
\newtheorem{example}{Example}
\newtheorem{observation}{Observation}
\newtheorem{simplealgorithm}{Algorithm}
\newtheoremstyle{myproblemsty}
{10pt}
{10pt}
{}
{}
{}
{}
{.5em}
{\textbf{\thmname{#1}~\thmnumber{#2}:  }\thmnote{(#3)}\newline}
\theoremstyle{myproblemsty}
\newtheoremstyle{mythmsty}
{10pt}
{10pt}
{\itshape}
{}
{}
{}
{.5em}
{\textbf{\thmname{#1}~\thmnumber{#2}:  }\thmnote{(#3)}}
\theoremstyle{mythmsty}
\newtheorem{theorem}{Theorem}
\newtheorem{lemma}{Lemma}
\newtheorem{question}{Question}
\definecolor{alexcolor}{rgb}{0.0, 0.47, 0.75}   
\definecolor{questioncolor}{rgb}{0.36, 0.54, 0.66}
\title{Potential and limitations of random Fourier features for dequantizing quantum machine learning}
\begin{document}

\author[1]{Ryan Sweke}
\author[2,3]{Erik Recio-Armengol}
\author[4]{Sofiene Jerbi}
\author[4,5]{Elies Gil-Fuster}
\author[7]{Bryce Fuller}
\author[4,5,6]{Jens Eisert}
\author[4]{Johannes~Jakob~Meyer}

\affil[1]{IBM Quantum, Almaden Research Center, San Jose, CA, USA}
\affil[2]{ICFO-Institut de Ciencies Fotoniques, The Barcelona Institute of Science and Technology, 08860 Castelldefels, Spain}
\affil[3]{Eurecat, Centre Tecnologic de Catalunya, Multimedia Technologies, Barcelona, Spain}
\affil[4]{Dahlem Center for Complex Quantum Systems, Freie Universit\"at Berlin, Berlin, Germany}
\affil[5]{Fraunhofer Heinrich Hertz Institute, 10587 Berlin, Germany}
\affil[6]{Helmholtz-Zentrum Berlin f\"{u}r Materialien und Energie, 14109 Berlin, Germany}
\affil[7]{IBM Quantum, IBM T.J. Watson Research Center, Yorktown Heights, NY 10598}

\begin{abstract}
\noindent Quantum machine learning is arguably one of the most explored applications of near-term quantum devices. Much focus has been put on notions of variational quantum machine learning where \emph{parameterized quantum circuits} (PQCs) are used as learning models. These PQC models have a rich structure which suggests that they might be amenable to efficient dequantization via \emph{random Fourier features} (RFF). In this work, we establish necessary and sufficient conditions under which RFF does indeed provide an efficient dequantization of  variational quantum machine learning for regression. We build on these insights to make concrete suggestions for PQC architecture design, and to identify structures which are necessary for a regression problem to admit a potential quantum advantage via PQC based optimization.
\end{abstract}

\maketitle

\section{Introduction}\label{s:introduction}

In recent years, the technique of using \emph{parameterized quantum circuits} (PQCs) to define a model class, which is then optimized over via a classical optimizer, has emerged as one of the primary methods of using near-term quantum devices for machine learning tasks~\cite{Cerezo_2021,Benedetti_2019}. We will refer to this approach as \textit{variational quantum machine learning} (variational QML), although it is often also referred to as hybrid quantum/classical optimization. While a large amount of effort has been invested in both understanding the theoretical properties of variational QML, and experimenting on trial datasets, it remains unclear whether variational QML on near-term quantum devices can offer any meaningful advantages over state-of-the-art classical methods.

One approach to answering this question is via \textit{dequantization}. In this context, the idea is to use insights into the structure of PQCs, and the model classes that they define, to design quantum-inspired classical methods which can be proven to match the performance of variational QML. Ultimately, the goal is to understand when and why variational QML can be dequantized, in order to better identify the PQC architectures, optimization algorithms and problem types for which one might obtain a meaningful quantum advantage via variational QML. 

In order to discuss notions of dequantization of variational QML, we note that for typical applications variational QML consists of two distinct phases. Namely, a \textit{training} stage and an \textit{inference} stage. In the training stage,  one uses the available training data to identify an optimal PQC model, and in the inference stage one uses the identified model to make predictions on previously unseen data, or in the case of generative modelling, to generate new samples from the unknown data distribution.

A variety of works have recently proposed dequantization methods for \textit{inference} with PQC models. The first such work was Ref.~\cite{classical_surrogates}, which used insights into the functional analytic structure of PQC model classes to show that, given a trained quantum model, one can sometimes efficiently extract a purely classical model -- referred to as a \textit{classical surrogate} -- which performs inference just as well as the PQC model. More recently, Ref.~\cite{jerbi2023shadows} used insights from shadow tomography to show that it is sometimes possible to extract a classical shadow of a trained PQC model -- referred to as a \textit{shadow model} -- which is again guaranteed to perform as well as the PQC model for inference. Interestingly, however, Ref.~\cite{jerbi2023shadows} also proved that, under reasonable complexity-theoretic assumptions, there exist PQC models whose inference \textit{cannot} be dequantized by any efficient method -- i.e., \textit{all} efficient methods for the dequantization of PQC inference possess some fundamental limitations.

In this work, we are concerned with dequantization of the \textit{training} stage of variational QML -- i.e., the construction of efficient classical learning algorithms which can be proven to match the performance of variational QML in learning from data. To this end, we start by noting that Ref.~\cite{jerbi2023shadows} constructed a learning problem which admits an efficient variational QML algorithm but, again under complexity-theoretic assumptions, cannot be dequantized by any efficient classical learning algorithm. As such, we know that all methods for the dequantization of variational QML must posses some fundamental limitations, and for any given method we would like to understand its domain of applicability. 

With this in mind, one natural idea is to ask whether the effect of \textit{noise} allows direct efficient classical simulation of the the PQC model, and, therefore, of the entire PQC training process and subsequent inference. Indeed, a series of recent works has begun to address this question, and to delineate the conditions under which noise renders PQC models classically simulatable~\cite{fontana2022spectral,fontana2023classical,shao2023simulating}. Another recent result has shown that the presence of \textit{symmetries}, often introduced to improve PQC model performance for symmetric problems~\cite{BerlinSymmetry,larocca2022group}, can also constrain PQC models in a way which allows for efficient classical simulation~\cite{SymmetryDequantization}. 

In this work however, inspired by the dequantization of inference with PQC models, we start from the idea of ``training the surrogate model''.  More specifically, Ref.~\cite{classical_surrogates} used the insight that the class of functions realizable by PQC models is a subset of trigonometric polynomials with a specific set of frequencies~\cite{data_encoding}, to ``match'' the trained PQC model to the closest trigonometric polynomial with the correct frequency set (which is then the classical surrogate for inference). This, however, immediately suggests the following approach to dequantizing the \textit{training} stage of variational QML --  simply directly optimize from data over the ``PQC-inspired'' model class of trigonometric polynomials with the appropriate frequency set. Indeed, this is in some sense what happens during variational QML! 

The above idea was explored numerically in the original work on classical surrogates for inference~\cite{classical_surrogates}. Unfortunately, for typical PQC architectures the number of frequencies in the corresponding frequency set grows exponentially with the problem size (defined as the dimensionality of the input data), which prohibits efficient classical optimization over the relevant class of trigonometric polynomials. However, for the PQC architectures for which numerical experiments were possible, direct classical optimization over the PQC-inspired classical model class yielded trained models which outperformed those obtained from variational QML. 

\newpage
Inspired by Ref.~\cite{classical_surrogates}, the subsequent work of Ref.~\cite{RFF} introduced a method for addressing the efficiency bottleneck associated with exponentially growing frequency sets. The authors of Ref.~\cite{RFF} noticed that all PQC models are linear models with respect to a  trigonometric polynomial feature map. As such, one can optimize over all PQC-inspired models via kernel ridge regression, 
which will be efficient if one can efficiently evaluate the kernel defined from the feature map. While naively evaluating the appropriate kernel classically will be inefficient -- again due to the exponential growth in the size of the frequency set -- the clever insight of Ref.~\cite{RFF} was to see that one can gain efficiency improvements by using the technique of \textit{random Fourier features}~\cite{rahimi2007random} to \textit{approximate} the PQC-inspired kernel. Using this technique, Ref.~\cite{RFF} obtained a variety of theoretical results concerning the sample complexity required for RFF-based regression with the PQC-inspired kernel to yield a model whose performance matches that of variational QML. However, the analysis of Ref.~\cite{RFF} applied only to PQC architectures with \textit{universal} parameterized circuit blocks -- i.e.,  PQC models which can realize \textit{any} trigonometric polynomial with the appropriate frequencies. This contrasts with the PQC models arising from practically relevant PQC architectures, which due to depth constraints, can only realize a subset of trigonometric polynomials. 

In light of the above, the idea of this work is to further explore the potential and limitations of RFF-based linear regression as a method for dequantizing the training stage of variational QML, with the goal of providing an analysis which is applicable to practically relevant PQC architectures. In particular, we identify a collection of necessary and sufficient requirements -- of the PQC architecture, the regression problem, and the RFF procedure -- for RFF-based linear regression to provide an efficient classical dequantization method for PQC based regression. This allows us to show clearly that:

\begin{enumerate}
\item RFF-based linear regression \textit{cannot} be a generic dequantization technique. At least, there exist regression problems and PQC architectures for which  RFF-based linear regression \textit{cannot} efficiently dequantize PQC based regression. As mentioned before, we already knew from the results of Ref.~\cite{jerbi2023shadows} that \textit{all} dequantization techniques must posses some limitations, and our results shed light on the specific nature of these limitations for RFF-based dequantization.
\item There exist practically relevant PQCs, and regression problems, for which RFF-based linear regression can be \textit{guaranteed} to efficiently produce output models which perform as well as the best possible output of PQC based optimization. In other words, there exist problems and PQC architectures for which PQC dequantization via RFF is indeed possible.
\end{enumerate}
Additionally, using the necessary and sufficient criteria that we identify, we are able to provide concrete recommendations for PQC architecture design, in order to mitigate the possibility of dequantization via RFF. Moreover, we are able to identify a necessary condition on the structure of a regression problem which ensures that dequantization via RFF is \textit{not} possible. This, therefore, provides a guideline for the identification of problems which admit a potential quantum advantage (or at least, cannot be dequantized via RFF-based linear regression). 

This paper is structured as follows: We begin in Section~\ref{s:setting_prelim} by providing all the necessary preliminaries and background material. Following this, we proceed in Section~\ref{s:PQCvsRFF} to motivate and present RFF-based linear regression with PQC-inspired kernels as a method for the dequantization of variational QML. Given this, we then go on in Section~\ref{ss:RFF_analysis} to provide a detailed theoretical analysis of RFF-based linear regression with PQC-inspired kernels. Finally, we conclude in Section~\ref{s:conclusions} with a discussion of the consequences of the previous analysis, and with an overview of natural directions for future research.

\section{Setting and preliminaries}\label{s:setting_prelim}

Here we provide the setting and required background material.

\subsection{Statistical learning framework}\label{ss:stat_framework}

Let $\mathcal{X}$ denote a set of all possible data points, and $\mathcal{Y}$ a set of all possible labels. In this work, we will set $\mathcal{X} \coloneqq [0,2\pi)^d\subset\mathbb{R}^d$ 
for some integer $d$ and $\mathcal{Y} = \mathbb{R}$.  We assume the existence of some unknown probability distribution $P$ over $\mathcal{X}\times\mathcal{Y}$, which we refer to as a regression problem. Note that we consider $d$, the dimensionality of the input data, as the size of the problem, and as such the relevant scaling parameter for the analysis of algorithms which aim to solve the problem. Additionally, we assume a parameterized class of functions $\mathcal{F} = \{f_{\theta}\colon\mathcal{X}\rightarrow\mathcal{Y}\,|\,\theta\in\Theta\}$, which we call hypotheses. Given access to some finite dataset $S = \{(x_i,y_i)\sim P\}|_{i = 1}^n$ the goal of the regression problem specified by $P$ is to identify the optimal hypothesis $f_{\theta^*}$, i.e.,  the hypothesis which minimizes the \textit{true risk}, defined via
\begin{equation}
R(f) \coloneqq \mathop{\mathbb{E}}_{(x,y)\sim P}\left[\mathcal{L}(y,f(x))\right],
\end{equation}
where $\mathcal{L}\colon\mathcal{Y}\times\mathcal{Y}\rightarrow\mathbb{R}$ is some loss function. In this work, we will consider only the quadratic loss defined via 
\begin{equation}
{\mathcal{L}(y,y')
\coloneqq (y-y')^2}.
\end{equation} 
We also define the \textit{empirical risk} with respect to the dataset $S$ as 
\begin{equation}
\hat{R}(f) \coloneqq \frac{1}{n}\sum_{i = 1}^n\mathcal{L}(y_i,f(x_i)).
\end{equation}

\subsection{Linear and kernel ridge regression}

Linear ridge regression  and kernel ridge regression are two popular classical learning algorithms. In linear ridge regression we consider \textit{linear} functions $f_{\doubleyou}(x) = \langle \doubleyou,x\rangle$, and given a dataset $S=\{(x_i,y_i)\}|^n_{i=1}$, we proceed by minimizing the empirical risk, regularized via the $2$-norm, i.e., 
\begin{equation}\label{eq:lrr_reg_risk}
\hat{R}_\lambda(f_{\doubleyou}) \coloneqq \frac{1}{n}\sum_{i=1}^n\left(y_i - \langle \doubleyou,x_i\rangle \right)^2 + \lambda \|\doubleyou\|_2^2.
\end{equation}
With this regularization, which is added to prevent over-fitting,  minimizing Eq.~\eqref{eq:lrr_reg_risk} becomes a convex quadratic problem, which admits the closed form solution
\begin{equation}
\doubleyou = \left(\hat{X}^T\hat{X} + \lambda n\mathds{1}\right)^{-1}\hat{X}^T\hat{Y},
\end{equation}
where $\hat{X}$ is the $n\times d$ ``data matrix'' with $x_i$ as rows, and $\hat{Y}$ is the $n$ dimensional ``target vector'' with $y_i$ as the $i$'th component~\cite{mohri2018foundations}. Linear ridge regression requires $\mathcal{O}(nd)$ space and $\mathcal{O}(nd^2 + d^3)$ time. As linear functions are often not sufficiently expressive, a natural approach is to consider linear functions in some higher dimensional feature space.  More specifically, one assumes a feature map $\phi\colon\mathbb{R}^d\rightarrow \mathbb{R}^{D}$, and then considers linear functions of the form $f_v(x) = \langle v,\phi(x)\rangle$, where $v$ is an element of the feature space $\mathbb{R}^{D}$. Naively, one could do linear regression at a space and time cost of $\mathcal{O}(nD)$ and $\mathcal{O}(nD^2 + D^3)$,  respectively. However, often we would like to consider $D$ extremely large (or infinite) and this is therefore infeasible. The solution is to use ``the kernel trick'' and consider instead a kernel function $K\colon\mathcal{X}\times\mathcal{X}\rightarrow \mathbb{R}$ which satisfies
\begin{equation}
K(x,x') = \langle\phi(x),\phi(x')\rangle,
\end{equation}
but which can ideally be evaluated more efficiently than by explicitly constructing $\phi(x)$ and $\phi(x')$ and taking the inner product.
Given such a function, we know that the minimizer of the regularized empirical risk is given by 
\begin{align}
f_\alpha(x) &= \sum_{i = 1}^n\alpha_iK(x_i,x)\\
\nonumber
            &= \left\langle \sum_{i = 1}^n\alpha_i \phi(x_i),\phi(x)\right\rangle\\
            \nonumber
            & = \langle v,\phi(x)\rangle,
            \nonumber
\end{align}
where 
\begin{equation}\label{eq:KRR}
\alpha = \left(\hat{K} + n\lambda\mathds{1}\right)^{-1}\hat{Y},
\end{equation}
with $\hat{K}$ the kernel matrix (or Gram matrix) with entries $\hat{K}_{i,j} = K(x_i,x_j)$. Solving Eq.~\eqref{eq:KRR} is known as \emph{kernel ridge regression}. If one assumes that evaluating $K(x,x')$ requires constant time, then kernel ridge regression has space and time cost $\mathcal{O}(n^2)$ and $\mathcal{O}(n^3)$, respectively.
We note that in practice one hardly ever specifies the feature map $\phi$, and instead works directly with a suitable kernel function $K$.

\subsection{Random Fourier features}\label{ss:RFF}

For many applications, in which the number of samples $n$ can be extremely large, a space and time cost of $\mathcal{O}(n^2)$ and $\mathcal{O}(n^3)$, respectively, prohibits the implementation of kernel ridge regression. This has motivated the development of methods which can bypass these complexity bottlenecks. The method of \emph{random Fourier features} (RFF) is one such method~\cite{rahimi2007random}. To illustrate this method we follow the presentation of Ref.~\cite{genRFF}, and start by assuming that the kernel $K\colon\mathcal{X}\times\mathcal{X}\rightarrow \mathbb{R}$, has \textit{an integral representation}. More specifically, we assume there exists some probability space $(\Phi,\pi)$ and some function $\psi\colon\mathcal{X}\times \Phi \rightarrow \mathbb{R}$ such that for all $x,x'\in \mathcal{X}$ one has that
\begin{equation}\label{eq:integral_rep}
K(x,x') = \int_{\Phi}\psi(x,\nu)\psi(x',\nu)\,\mathrm{d}\pi(\nu).
\end{equation}
The method of random Fourier features is then based around the idea of approximating $K(x,x')$ by using Monte-Carlo type integration to perform the integral in Eq.~\eqref{eq:integral_rep}. More specifically, one uses
\begin{equation}\label{eq:RFF_approx}
K(x,x')\approx  \langle\tilde{\phi}_M(x), \tilde{\phi}_M(x')\rangle,
\end{equation}
where $\tilde{\phi}_M\colon\mathcal{X}\rightarrow \mathbb{R}^M$ is a randomized feature map of the form
\begin{equation}\label{eq:RFF_feature_map}
\tilde{\phi}_M(x) = \frac{1}{\sqrt{M}}\big(\psi(x,\nu_1),\ldots,\psi(x,\nu_M)\big),
\end{equation}
where $\nu_1,\ldots,\nu_m$ are $M$ features sampled randomly from $\pi$. Using the approximation in Eq.~\eqref{eq:RFF_approx} allows one to replace kernel ridge regression via $K$ with linear regression with respect to the random feature map $\tilde{\phi}_M$. This yields a learning algorithm with time and space cost $\mathcal{O}(nM)$ and $\mathcal{O}(nM^2 + M^3)$, respectively, which is more efficient than kernel ridge regression whenever $M<n$. Naturally, the quality (i.e.,  true risk) of the output solution will depend heavily on how large $M$ is chosen. However, we postpone until later a detailed discussion of this issue, which is central to the results and observations of this work.

So far we have \textit{assumed} the existence of an integral representation for the kernel, as in Eq.~\eqref{eq:integral_rep}. However, such a representation is not typically provided, and as such the first step towards the implementation of linear regression with RFF is the derivation of an integral representation for the kernel of interest. Luckily however, for \textit{shift-invariant} kernels, which are kernels of the form $K(x,x') = \overline{K}(x-x')$ for some function $\overline{K}\colon\mathcal{X}\rightarrow \mathbb{R}$, the integral representation can be easily derived from the Fourier transform of $\overline{K}$~\cite{rahimi2007random,errorRFF}. More specifically, for any shift-invariant kernel we can write
\begin{equation}
\overline{K}(x-x') = \int_{\omega\in\mathcal{X}}e^{i\langle\omega,x-x'\rangle}q(\omega)\mathrm{d}\omega,
\end{equation}
where $q:\mathcal{X}\rightarrow \mathbb{R}$ is the Fourier transform of $\overline{K}$. Bochner's theorem ensures that $q$ is a non-negative measure, and when the kernel is scaled such that $\overline{K}(0) = 1$, then it additionally ensures that $q$ is indeed a proper probability distribution. Additionally, as the kernel is real-valued, we can replace the integrand  $e^{i\langle\omega,x-x'\rangle}$ with $\cos(\langle\omega,x-x'\rangle)$. Doing this, we then have
\begin{align}
K(x-x') &= \int_{\omega\in\mathcal{X}}\cos(\langle\omega,x-x'\rangle)q(\omega)\mathrm{d}\omega\nonumber
\\
&=\frac{1}{2\pi}\int_{\omega\in\mathcal{X}}\int_{\gamma\in[0,2\pi)} \sqrt{2}\cos(\langle\omega, x\rangle + \gamma)\sqrt{2}\cos(\langle\omega, x'\rangle + \gamma)q(\omega)\,\mathrm{d}\omega \,\mathrm{d}\gamma\label{eq:K_FT}\\
&\coloneqq\int_{\Phi=\mathcal{X}\times[0,2\pi)}\psi(x,\nu)\psi(x',\nu)\,\mathrm{d}\pi(\nu),
\nonumber
\end{align}
with $\psi(x,\nu) \coloneqq \sqrt{2}\cos(\langle\omega, x\rangle + \gamma)$, and $\pi = q\times\mu$ where $\mu$ is the uniform measure over $[0,2\pi)$. As such, we have indeed arrived at an integral representation for $K$. Given this derivation, we note that the name \textit{random Fourier features} comes from the fact that the measure $\pi$ is proportional to the Fourier transform of $\overline{K}$.

\subsection{PQC models for variational QML}\label{ss:PQC}

As discussed in Section~\ref{s:introduction}, variational QML is based on the classical optimization of models defined via \emph{parameterized quantum circuits} (PQCs)~\cite{Benedetti_2019}. In the context of regression, one begins by fixing a parameterized quantum circuit $C$, whose gates can depend on both data points $x\in\mathcal{X}$ and  components of a vector $\theta\in\Theta$ of variational parameters, where typically $\Theta= [0,2\pi)^c$ for some $c$. For each data point $x$ and each vector of variational parameters $\theta$, this circuit realizes the unitary $U(x,\theta)$. Given this, we then choose an observable $O$, and define the associated PQC model class $\mathcal{F}_{(C,O)}$ as the set of all functions $f_\theta\colon\mathcal{X}\rightarrow \mathbb{R}$ defined via
\begin{equation}
f_{\theta}(x) = \langle 0|U^{\dagger}(x,\theta)OU(x,\theta)|0\rangle
\end{equation}
for all $x\in\mathcal{X}$, i.e.,  
\begin{equation}
\mathcal{F}_{(C,O)} = \{f_\theta(\cdot) = \langle 0|U^{\dagger}(\cdot,\theta)OU(\cdot,\theta)|0\rangle\,|\, \theta\in \Theta\}.
\end{equation}
One then proceeds by using a classical optimization algorithm to optimize over the variational parameters $\theta$. In this work, we consider an important sub-class of PQC models in which the classical data $x$ enters only via Hamiltonian time evolutions, whose duration is controlled by a single component of $x$. To be more precise, for $x = (x_1,\ldots,x_d)\in\mathbb{R}^d$, we assume that each gate in the circuit $C$ which depends on $x$ is of the form
\begin{equation}
V_{(j,k)}(x_j) = e^{-iH^{(j)}_kx_j},
\end{equation}
for some Hamiltonian $H^{(j)}_k$. We stress that we exclude here more general encoding schemes, such as those which allow for time evolutions parameterized by functions of $x$, or time evolutions of parameterized linear combinations of Hamiltonians.   We denote by $\mathcal{D}^{(j)} = \{H^{(j)}_k\,|\, k\in [L_j]\}$ the set of all $L_j$ Hamiltonians which are used to encode the component $x_j$ at some point in the circuit, and we call the tuple
\begin{equation}
\mathcal{D} \coloneqq \left(\mathcal{D}^{(1)},\ldots,\mathcal{D}^{(d)}\right)
\end{equation}
the \textit{data-encoding strategy}. It is by now well known that these models admit a succinct ``classical'' description~\cite{data_encoding,gil2020input, Caro_2021} given by
\begin{align}\label{eq:fourier_rep}
f_\theta(x) = \sum_{\omega\in\tilde{\Omega}_\mathcal{D}} c_{\omega}(\theta) e^{i\langle\omega, x\rangle},
\end{align}
where 
\begin{enumerate}
\item the set of frequency vectors $\tilde{\Omega}_\mathcal{D}\subseteq\mathbb{R}^d$ is completely determined by the data-encoding strategy. We describe the construction of $\tilde{\Omega}_\mathcal{D}$ from $\mathcal{D}$ in Appendix~\ref{app:freq_construction}. 
\item the frequency coefficients $c_{\omega}(\theta)$ depend on the trainable parameters $\theta$, but in a way which usually does not admit a concise expression.
\end{enumerate}
As described in Ref.~\cite{data_encoding}, we know that $\omega_0\coloneqq(0,\ldots, 0)\in \tilde{\Omega}_\mathcal{D}$ and that the non-zero frequencies in $\tilde{\Omega}_\mathcal{D}$ come in mirror pairs -- i.e., $\omega\in\tilde{\Omega}_\mathcal{D}$ implies $-\omega\in\tilde{\Omega}_\mathcal{D}$. Additionally, one has $c_\omega(\theta) = c^*_{-\omega}(\theta)$ for all $\omega\in\tilde{\Omega}_\mathcal{D}$ and all $\theta$, which ensures the function $f_\theta$ evaluates to a real number. As a result, we can perform an arbitrary splitting of pairs to redefine $\tilde{\Omega}_\mathcal{D} \coloneqq\Omega_{\mathcal{D}}\cup \left(-\Omega_{\mathcal{D}}\right)$, where $\Omega_{\mathcal{D}}\cap\left(-\Omega_{\mathcal{D}}\right) = \{\omega_0\}$. It will also be convenient to define $\Omega^{+}_\mathcal{D}\coloneqq \Omega_\mathcal{D}\setminus\{\omega_0\}$. Given this, by defining
\begin{align}
a_\omega(\theta) &\coloneqq c_\omega(\theta) + c_{-\omega}(\theta), \\
b_\omega(\theta) &\coloneqq i(c_\omega(\theta) - c_{-\omega}(\theta))
\end{align}
for all $\omega\in\Omega^{+}_{\mathcal{D}}$, and writing $\Omega_\mathcal{D} = \{\omega_0,\omega_1,\ldots,\omega_{|\Omega^{+}_\mathcal{D}|}\},$ we can rewrite Eq.~\eqref{eq:fourier_rep} as
\begin{align}
f_\theta(x) &= c_{\omega_0}(\theta) + \sum_{i = 1}^{|\Omega^{+}_\mathcal{D}|}\left(a_{\omega_i}(\theta)\cos(\langle \omega_i,x\rangle) + b_{\omega_i}(\theta)\sin(\langle \omega_i,x\rangle)\right)
\nonumber
\\
&= \langle c(\theta) , \phi_\mathcal{D}(x)\rangle\label{eq:linear_rep}
\end{align}
where
\begin{align}
c(\theta) &\coloneqq \sqrt{|\Omega_\mathcal{D}|}\left(c_{\omega_0}(\theta),a_{\omega_1}(\theta),b_{\omega_1}(\theta),\ldots,a_{\omega_{|\Omega^{+}_\mathcal{D}|}}(\theta),b_{\omega_{|\Omega^{+}_\mathcal{D}|}}(\theta) \right),\\
\phi_\mathcal{D}(x) &\coloneqq \frac{1}{\sqrt{|\Omega_\mathcal{D}|}}\bigg(1,\cos(\langle \omega_1, x\rangle), \sin(\langle \omega_1, x\rangle),\ldots,\cos(\langle \omega_{|\Omega^{+}_\mathcal{D}|}, x\rangle),\sin(\langle \omega_{|\Omega^{+}_\mathcal{D}|}, x\rangle)\bigg),\label{eq:PQC_feature_map}
\end{align}
and the normalization constant has been chosen to ensure that $\langle \phi_\mathcal{D}(x),\phi_\mathcal{D}(x)\rangle = 1$, which will be required shortly. The formulation in Eq.~\eqref{eq:linear_rep} makes it clear that $f_{\theta}$ is a \textit{linear} model in $\mathbb{R}^{|\tilde{\Omega}_{\mathcal{D}}|}$, taken with respect to the feature map ${\phi_\mathcal{D}\colon\mathcal{X}\rightarrow\mathbb{R}^{|\tilde{\Omega}_{\mathcal{D}}|}}$. We can now define the model class of all linear models realizable by the parameterized quantum circuit with data-encoding strategy $\mathcal{D}$, variational parameter set $\Theta$ and observable $O$ via
\begin{align}
\mathcal{F}_{(\Theta,\mathcal{D},O)}  &= \{ f_\theta(\cdot) = \langle 0|U^{\dagger}(\theta,\cdot)OU(\theta,\cdot)|0\rangle\,|\, \theta \in \Theta\} \\
&= \{f_{\theta}(\cdot) = \langle c(\theta) , \phi_\mathcal{D}(\cdot)\rangle \,|\, \theta \in \Theta\}.
\end{align}
In what follows, we will use a tuple $(\Theta,\mathcal{D},O)$ to represent a PQC architecture, as we have done above. We note that for all $f\in \mathcal{F}_{(\Theta,\mathcal{D},O)}$ we have that $\|f\|_\infty \leq \|O\|_\infty$.

It is critical to note that due to the constraints imposed by the circuit architecture $(\Theta,\mathcal{D},O)$, the class $\mathcal{F}_{(\Theta,\mathcal{D},O)}$ may not contain \textit{all} possible linear functions with respect to the feature map $\phi_\mathcal{D}$. Said another way, the circuit architecture gives rise to an \textit{inductive bias} in the set of functions which can be realized. However, for the analysis that follows, it will be very useful for us to define the set of \textit{all} linear functions $\mathcal{F}_\mathcal{D}$ with respect to the feature map $\phi_\mathcal{D}$, i.e.,
\begin{equation}\label{eq:closure}
\mathcal{F}_{\mathcal{D}} = \left\{f_v(\cdot) = \langle v, \phi_\mathcal{D}(\cdot)\rangle \,|\, v\in\mathbb{R}^{|\tilde{\Omega}_{\mathcal{D}}|}\right\}.
\end{equation}
As shown in Figure.~\ref{fig:sets}, we stress that for any architecture $(\Theta,\mathcal{D},O)$, we have that
\begin{equation}\label{eq:first_inclusion}
\mathcal{F}_{(\Theta,\mathcal{D},O)} \subset \mathcal{F}_{\mathcal{D}}. 
\end{equation}
The inclusion is strict due to the fact that for all $f\in \mathcal{F}_{(\Theta,\mathcal{D},O)}$ we know that $\|f\|_\infty \leq \|O\|_\infty$, whereas 
$\mathcal{F}_{\mathcal{D}}$ contains functions of arbitrary infinity norm. However, we note that if one defines the set
\begin{equation}
\mathcal{F}_{(\mathcal{D},O)}= \left\{f\in \mathcal{F}_\mathcal{D} \,|\, \|f\|_{\infty}\leq \|O\|_{\infty}\right\},
\end{equation}
then in principle there could exist architectures for which $\mathcal{F}_{(\Theta,\mathcal{D},O)} =\mathcal{F}_{(\mathcal{D},O)}$, and therefore one has $\mathcal{F}_{(\Theta,\mathcal{D},O)} \subseteq\mathcal{F}_{(\mathcal{D},O)}$ for all architectures $(\Theta,\mathcal{D},O)$. As such, one can in some sense think of $\mathcal{F}_{(\mathcal{D},O)}$ as the ``closure'' of $\mathcal{F}_{(\Theta,\mathcal{D},O)}$.

\subsection{PQC feature map and PQC-kernel}\label{ss:PQC_kernels}

Given the observation from Section~\ref{ss:PQC} that all PQC models are linear in some high-dimensional feature space fully defined by the data-encoding strategy, we can very naturally associate to each data-encoding strategy both a feature map and an associated kernel function, which we call the \textit{PQC-kernel}:

\begin{definition}[PQC feature map and PQC-kernel]\label{def:PQC_kernel} Given a data-encoding strategy $\mathcal{D}$, we define the PQC feature map $\phi_\mathcal{D}\colon\mathcal{X}\rightarrow \mathbb{R}^{|\tilde{\Omega}_\mathcal{D}|}$ via Eq.~\eqref{eq:PQC_feature_map}, i.e.,
\begin{equation}
\phi_\mathcal{D}(x) \coloneqq \frac{1}{\sqrt{|\Omega_\mathcal{D}|}}\left(1,\cos(\langle \omega_1, x\rangle), \sin(\langle \omega_1, x\rangle),\ldots,\cos(\omega_{|\Omega^+_{\mathcal{D}}|}, x\rangle),\sin(\omega_{|\Omega^+_{\mathcal{D}}|}, x\rangle)\right)
\end{equation}
for $\omega_i\in \Omega^{+}_\mathcal{D}$. We then define the PQC-kernel $K_\mathcal{D}$ via
\begin{equation}
K_{\mathcal{D}}(x,x') \coloneqq \langle \phi_\mathcal{D}(x),\phi_\mathcal{D}(x')\rangle.
\end{equation}
\end{definition}
It is crucial to stress that the classical PQC-kernel defined in Definition~\ref{def:PQC_kernel} is fundamentally \textit{different} from the so called ``quantum kernels'' often considered in QML -- see, for example, Refs.~\cite{mariaQMLkernels,beyondkernel} --  which are defined from a data-parameterized unitary $U(x)$ via $K(x,x') = \mathrm{Tr}[\rho(x)\rho(x')]$ with $\rho(x) = U(x)|0\rangle\langle 0|U^\dagger(x)$. Additionally, we note that the feature map $\phi_\mathcal{D}$ defined in Definition~\ref{def:PQC_kernel} is \textit{not} the unique feature map with the property that all functions in $\mathcal{F}_{(\Theta,\mathcal{D},O)}$ are linear with respect to the feature map. Indeed, we will see in Section~\ref{ss:re-weight} that any ``re-weighting'' of $\phi_\mathcal{D}$ will preserve this property.

\section{Potential of RFF-based linear regression for dequantizing variational QML}\label{s:PQCvsRFF}

Let us now imagine that we have a regression problem to solve. More precisely, imagine that we have a dataset $S$, with $n$ elements drawn from some distribution $P$, as per Section~\ref{ss:stat_framework}. One option is for us to use hybrid quantum classical optimization. More specifically, we choose a PQC circuit architecture $(\Theta,\mathcal{D},O)$ -- consisting of data-encoding gates, trainable gates and measurement operator -- and then variationally optimize over the trainable parameters. We can summarize this as follows.

\begin{simplealgorithm}[Variational QML]\label{alg:VQML}
Choose a PQC architecture $(\Theta, \mathcal{D},O)$ and optimize over the parameters $\theta\in\Theta$. The output is some linear function $f_\theta \in \mathcal{F}_{(\Theta,\mathcal{D},O)}$.
\end{simplealgorithm}

We note that Algorithm~\ref{alg:VQML} essentially performs a variational search (typically via gradient based optimization) through $\mathcal{F}_{(\Theta,\mathcal{D},O)}$, which as per Lemma~\ref{lem:function_inclusions}, is some parameterized subset of $\mathcal{F}_\mathcal{D}$, the set of all linear functions with respect to the feature map $\phi_\mathcal{D}$. But this begs the question: Why run Algorithm~\ref{alg:VQML}, when we could just do classical linear regression with respect to the feature map $\phi_\mathcal{D}$? More specifically, instead of running Algorithm~\ref{alg:VQML}, why not just run the following purely \textit{classical} algorithm:

\begin{simplealgorithm}[Classical linear regression over $\mathcal{F}_{\mathcal{D}})$]\label{alg:LR} Given a PQC architecture $(\Theta, \mathcal{D},O)$, construct $\mathcal{D}$ and the feature map $\phi_\mathcal{D}$, and then perform linear regression with respect to the feature map $\phi_\mathcal{D}$. The output is some $f_v\in \mathcal{F}_{\mathcal{D}}$.
\end{simplealgorithm}
Unfortunately, Algorithm~\ref{alg:LR} has the following shortcomings:
\begin{enumerate}
\item \textbf{Exponential complexity}: Recall that $\phi_\mathcal{D}\colon\mathcal{X}\rightarrow \mathbb{R}^{|\tilde{\Omega}_{\mathcal{D}}|}$. As such, the space and time complexity of Algorithm~\ref{alg:LR} is $\mathcal{O}(n|\tilde{\Omega}_\mathcal{D}|)$ and $\mathcal{O}(n|\tilde{\Omega}_\mathcal{D}|^2 + |\tilde{\Omega}_\mathcal{D}|^3)$, respectively. Unfortunately, as detailed in Table 1  of Ref.~\cite{Caro_2021} and discussed in Section~\ref{ss:implementation}, the Cartesian product structure of $\tilde{\Omega}_\mathcal{D}$ results in a curse of dimensionality which leads to $|\tilde{\Omega}_\mathcal{D}|$ scaling \textit{exponentially} with respect to the dimension of the input data $d$ (which gives the size of the problem). For example, if one uses a data encoding strategy consisting only of Pauli Hamiltonians, and if each component is encoded via $L$ encoding gates, then one obtains $|\tilde{\Omega}_\mathcal{D}| = \mathcal{O}\left(L^d\right)$.
\item \textbf{Potentially poor generalization}: As we have noted in Eq.~\eqref{eq:first_inclusion}, and illustrated in Figure.~\ref{fig:sets}, due to the constrained depth/expressivity of the trainable parts of any PQC architecture which uses the data-encoding strategy $\mathcal{D}$, we have that
\begin{equation}\label{eq:inclusion}
\mathcal{F}_{(\Theta,\mathcal{D},O)} \subset\mathcal{F}_{\mathcal{D}},
\end{equation}
i.e., that $\mathcal{F}_{(\Theta,\mathcal{D},O)}$ is a \textit{subset} of $\mathcal{F}_{\mathcal{D}}$. Now, let the output of Algorithm~\ref{alg:VQML} be some $f_\theta \in \mathcal{F}_{(\Theta,\mathcal{D},O)}$ and the output of Algorithm~\ref{alg:LR} be some $f_v\in \mathcal{F}_{\mathcal{D}}$. Due to the fact that linear regression is a perfect empirical risk minimizer, and the inclusion in Eq.~\eqref{eq:inclusion}, we are guaranteed that
\begin{equation}
\hat{R}(f_v)\leq \hat{R}(f_\theta),
\end{equation}
i.e., the \textit{empirical risk} achieved by Algorithm~\ref{alg:LR} will always be better than the empirical risk achieved by Algorithm~\ref{alg:VQML}. However, it could be that Algorithm~\ref{alg:LR} ``overfits'' -- more precisely, it could be the case that the \textit{true risk} achieved by $f_\theta$ is better than that achieved by $f_v$, i.e., that
\begin{equation}\label{eq:true_risk_relative}
R(f_v)\geq R(f_\theta).
\end{equation}
Said another way, the PQC architecture results in an \textit{inductive bias}, which constrains the set of linear functions which are accessible to Algorithm~\ref{alg:VQML}. As illustrated in Figure~\ref{fig:sets}, it could be the case that this inductive bias leads the output of Algorithm~\ref{alg:VQML} to generalize better than that of Algorithm~\ref{alg:LR}. 
\end{enumerate}

In light of the above, the natural question is then as follows.

\begin{figure}[t]
\begin{center}
\includegraphics[width=\linewidth]{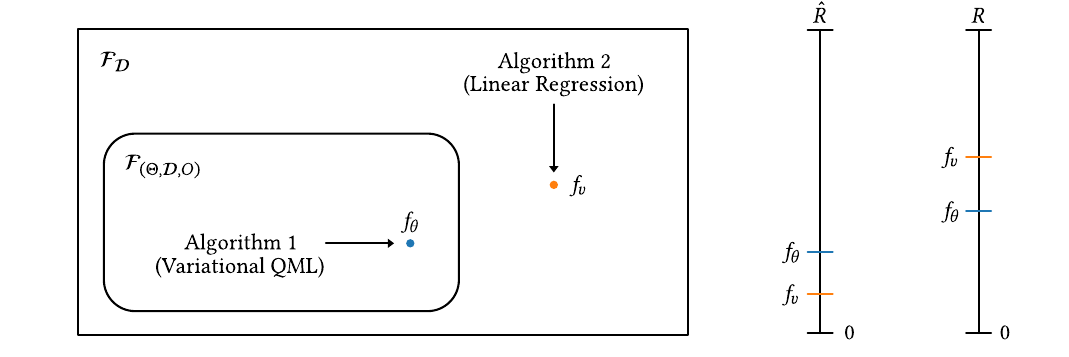} 
\caption{Illustration of the relationship between $\mathcal{F}_\mathcal{D}$ and $\mathcal{F}_{(\Theta,\mathcal{D},O)}$, and the output of Algorithms~\ref{alg:VQML} and~\ref{alg:LR}. In particular, we always have that $\mathcal{F}_{(\Theta,\mathcal{D},O)}\subset \mathcal{F}_\mathcal{D}$. As a consequence of this, and the fact that Algorithm~\ref{alg:LR} is a perfect empirical risk minimizer, we always have that $\hat{R}(f_v)\leq \hat{R}(f_\theta)$. However, it might be the case that $R(f_v)\geq R(f_\theta)$.}
\label{fig:sets}
\end{center}
\end{figure}

\begin{question}[Existence of efficient linear regression]
Can we modify Algorithm~\ref{alg:LR} (classical linear regression) such that it is both efficient with respect to $d$, and with high probability outputs a function which is just as good as the output of Algorithm~\ref{alg:VQML} (variational QML), with respect to the true risk?
\end{question}

As we have already hinted at in the preliminaries, one possible approach -- at least for addressing the issue of poor complexity -- is to use random Fourier features to approximate the classical PQC-kernel $K_\mathcal{D}(x,x') = \langle\phi_\mathcal{D}(x), \phi_\mathcal{D}(x')\rangle$ which is used implicitly in Algorithm~\ref{alg:LR}. Indeed, this has already been suggested and explored in Ref.~\cite{RFF}. More specifically Ref.~\cite{RFF}, at a very high level, suggests the following algorithm:

\begin{simplealgorithm}[Classical linear regression over $\mathcal{F}_{\mathcal{D}}$ with random Fourier features]\label{alg:RFF}Given a data-encoding strategy $\mathcal{D}$, implement RFF-based $\lambda$-regularized linear regression using the classical PQC-kernel $K_\mathcal{D}$, and obtain some function $h_\nu\in\mathcal{F}_\mathcal{D}$. More specifically:
\begin{enumerate}
\item Sample $M$ ``features'' $\nu_i = (\omega_i,\gamma_i)\in \mathbb{R}^d\times[0,2\pi)$ from the distribution $\pi = p_\mathcal{D}\times\mu$, which as per Eq.~\eqref{eq:K_FT} is the product distribution appearing in the integral representation of the shift-invariant kernel $K_\mathcal{D}$.
\item Construct the randomized feature map $\tilde{\phi}_M(x) = \frac{1}{\sqrt{M}}\left(\psi(x,\nu_1),\ldots,\psi(x,\nu_M)\right)$, where $\psi(x,\nu) = \sqrt{2}\cos(\langle\omega, x\rangle + \gamma)$. 
\item Implement $\lambda$-regularized linear regression with respect to the feature map $\tilde{\phi}_M$.
\end{enumerate}
\end{simplealgorithm}

In the above description of the algorithm we have omitted details of how to sample frequencies from the distribution $p_\mathcal{D}$, which will depend on the kernel $K_\mathcal{D}$. We note that in order for Algorithm~\ref{alg:RFF} to be efficient with respect to $d$, it is necessary for this sampling procedure to be efficient with respect to $d$. For simplicity, we assume for now that this is the case, and postpone a detailed discussion of this issue to Section~\ref{ss:implementation}. As discussed in Section~\ref{ss:RFF}, the space and time complexity of Algorithm~\ref{alg:RFF} (post-sampling) is $\mathcal{O}(nM)$ and $\mathcal{O}(nM^2 + M^3)$,  respectively, where $M$ is the number of frequencies which have been sampled. Given this setup, the natural question is then as follows:

\begin{question}[Efficiency of RFF for PQC dequantization]\label{q:efficiency} Given a regression problem $P$ and a circuit architecture $(\Theta,\mathcal{D},O)$, how many data samples $n$ and frequency samples $M$ are necessary to ensure that, with high probability, the true risk of the function $h_v$ output by Algorithm~\ref{alg:RFF} (RFF) is no more than $\epsilon$ worse than the true risk of the function $f_\theta$ output by Algorithm~\ref{alg:VQML} (variational QML) -- i.e., to ensure that $R(h_\nu) - R(f_\theta)\leq \epsilon$?
\end{question}

Said another way, Question~\ref{q:efficiency} is asking when classical RFF-based regression (Algorithm~\ref{alg:RFF}) can be used to efficiently \textit{dequantize} variational QML (Algorithm~\ref{alg:VQML}). In Ref.~\cite{RFF} the authors addressed a similar question, but with two important differences:

\begin{enumerate}
\item It was implicitly assumed that $(\Theta,\mathcal{D},O)$ is universal -- i.e.,  that using the PQC one can realize all functions in $\mathcal{F}_{(\mathcal{D},O)}$. Recall however from the discussion above that we are precisely interested in the case in which $(\Theta,\mathcal{D},O)$ is not universal -- i.e.,  the case in which $\mathcal{F}_{(\Theta,\mathcal{D},O)} \subset \mathcal{F}_{(\mathcal{D},O)}$ due to constraints on the circuit architecture. This is for two reasons: Firstly, because this is the case for practically realizable near-term circuit architectures. Secondly, because it is in this regime in which the circuit architecture induces an \textit{inductive bias} which may lead to better generalization than the output of linear regression over $\mathcal{F}_\mathcal{D}$.
\item Instead of considering true risk, Ref.~\cite{RFF} considered the complexity necessary to achieve ${|h_\nu(x) - f_\theta(x)| \leq \epsilon}$ for all $x$. We note that this is \textit{stronger} than $|R(h_\nu) - R(f_\theta)|\leq \epsilon$, due to the latter comparing the functions with respect to the data distribution $P$. Here we are concerned with the latter, which is the typical goal in statistical learning theory.
\end{enumerate}

In light of these considerations, we proceed in Section~\ref{ss:RFF_analysis} to provide answers to Question~\ref{q:efficiency}.

\textbf{Note on recent related work:} As per the discussion above, the two motivations for introducing Algorithm~\ref{alg:RFF} were the poor efficiency and potentially poor generalization associated to Algorithm~\ref{alg:LR}. However, we note that in Ref.~\cite{shin2023analyzing}, which appeared recently, the authors show via a tensor network (TN) analysis that to every PQC architecture one can associate a feature map -- different from the PQC feature map $\phi_\mathcal{D}$ --  for which (a) all functions in the PQC model class are linear with respect to the feature map, and (b) the associated kernel \textit{can} be evaluated efficiently classically. As such, using this feature map, for any number of data-samples $n$, one \textit{can} run Algorithm~\ref{alg:LR} classically efficiently with respect to $d$ -- i.e.,  there is no need to approximate the kernel via RFF! Indeed, this approach to dequantization is suggested by the authors of Ref.~\cite{shin2023analyzing}. However as above,  and discussed in Ref.~\cite{shin2023analyzing}, this does not immediately yield an efficient dequantization of variational QML, due to the potentially poor generalization of linear regression over the entire function space. In light of this, the generalization of linear regression with respect to the tensor network kernel introduced in Ref.~\cite{shin2023analyzing} certainly deserves attention, and we hope that the methods and tools introduced in this work, and others such as Ref.~\cite{khavari2021lower}, can be useful in that regard.

\section{Generalization and efficiency guarantees for RFF-based linear regression}\label{ss:RFF_analysis}

In this section, we attempt to provide rigorous answers to Question~\ref{q:efficiency} -- i.e.,  for which PQC architectures and for which regression problems does RFF-based linear regression yield an efficient dequantization of variational QML?

In particular, this section is structured as follows: We begin in Section~\ref{ss:digression} with a brief digression, providing definitions for some important kernel notions. With these in hand, we continue in Section~\ref{ss:RFF_efficiency} to state Theorem~\ref{thm:efficiency_RFF} which provides a concrete answer to Question~\ref{q:efficiency}. In particular, Theorem~\ref{thm:efficiency_RFF} provides an upper bound on both the number of data samples $n$, and the number of frequency samples $M$, which are sufficient to ensure that, with high probability, the output of RFF-based linear regression with respect to the kernel $K_\mathcal{D}$ is a good approximation to the best possible PQC function, with respect to true risk. As we will see, these upper bounds depend crucially on two quantities which are defined in Section~\ref{ss:digression}, namely the operator norm of the kernel integral operator associated to $K_\mathcal{D}$, and the reproducing kernel Hilbert space norm of the optimal PQC function. 

Given the results of Section~\ref{ss:RFF_efficiency}, in order to make any concrete statements it is necessary to gain a deeper quantitative understanding  of both the operator norm of the kernel integral operator and the RKHS norm of functions in the PQC model class. However, before doing this, we show in Section~\ref{ss:re-weight} that the PQC feature map $\phi_\mathcal{D}$ is in fact a special instance of an entire family of feature maps -- which we call ``re-weighted PQC feature maps'' -- and that Theorem~\ref{thm:efficiency_RFF} holds not only for $K_\mathcal{D}$, but for the kernel induced by any such re-weighted feature map. With this in hand, we then proceed in Section~\ref{ss:implementation} to show how the re-weighting determines the distribution over frequencies $\pi$ from which one needs to sample in the RFF procedure, and we discuss in detail for which feature maps/kernels one can and cannot \textit{efficiently} sample from this distribution. This immediately allows us to rule out efficient RFF dequantization for a large class of re-weighted PQC feature maps, and therefore allows us to focus our attention on only those feature maps (re-weightings) which admit efficient sampling procedures.

With this knowledge, we then proceed in Sections~\ref{ss:kern_int_op} and~\ref{ss:RKHS_norm} to discuss in detail the quantitative behaviour of the kernel integral operator and the RKHS norm of PQC functions, for different re-weighted PQC kernels. This allows us to place further restrictions on the circuit architectures and re-weightings which yield efficient PQC dequantization via Theorem~\ref{thm:efficiency_RFF}. Finally, in Section~\ref{ss:lower_bounds} we show that the properties we have identified in Sections~\ref{ss:kern_int_op} and~\ref{ss:RKHS_norm} as \textit{sufficient} for the application of Theorem~\ref{thm:efficiency_RFF} are in some sense \textit{necessary}. In particular, we prove \textit{lower bounds} on the number of frequency samples necessary to achieve a certain \textit{average} error via RFF, and use this to delineate rigorously when efficient dequantization via RFF is \textit{not} possible.

\subsection{Preliminary kernel theory}\label{ss:digression}

In order to present Theorem~\ref{thm:efficiency_RFF} we require a few definitions. To start, we need the notion of a \emph{reproducing kernel Hilbert space} (RKHS), and the associated RKHS norm.

\begin{definition}[RKHS and RKHS norm] Given a kernel $K\colon\mathcal{X}\times\mathcal{X}\rightarrow \mathbb{R}$ we define the associated reproducing kernel Hilbert space (RKHS) as the tuple $(\mathcal{H}_K,\langle\cdot,\cdot\rangle_K)$ where $\mathcal{H}_K$ is the set of functions defined as the completion (including limits of Cauchy series) of
\begin{equation}
\mathrm{span}\{K_x(\cdot) \coloneqq K(x,\cdot)\,|\, x\in \mathcal{X}\},
\end{equation}
and $\langle\cdot,\cdot\rangle_K$ is the inner product on $\mathcal{H}_K$ defined via
\begin{equation}\label{eq:RKHS_inner}
\langle g, h\rangle_{\mathcal{H}_K} \coloneqq \sum_{i,j}\alpha_i \beta_j K(x_i,x_j)
\end{equation}
for any two functions $g = \sum_{i}\alpha_iK_{x_i}$ and $h = \sum_{j}\beta_jK_{x_j}$ in $\mathcal{H}_{K}$. This inner product then induces the RKHS norm $\|\cdot\|_K$ defined via
\begin{equation}
\|g\|_{K} \coloneqq \sqrt{\langle g, g\rangle_{\mathcal{H}_K}}.
\end{equation}
\end{definition}

It is crucial to note that for two kernels $K_1$ and $K_2$ it may be the case that $\mathcal{H}_{K_1} = \mathcal{H}_{K_2}$ but $\|\cdot\|_{K_1}\neq \|\cdot\|_{K_2}$. We will make heavy use of this fact shortly. In particular, the re-weighted PQC kernels introduced in Section~\ref{ss:re-weight} have precisely this property. In addition to the definition above, we also need a definition of the kernel integral operator associated to a kernel, which we define below.

\begin{definition}[Kernel integral operator]\label{def:kern_int_op} Given a kernel $K\colon\mathcal{X}\times\mathcal{X}\rightarrow \mathbb{R}$ and a probability distribution $P_\mathcal{X}$ over $\mathcal{X}$, we start by defining the space of square-integrable functions with respect to $P_\mathcal{X}$ via
\begin{equation}
L^2(\mathcal{X},P_\mathcal{X}
) = \left\{f\in \mathbb{R}^{\mathcal{X}}\text{ such that } \int_\mathcal{X}|f(x)|^2\,\mathrm{d}P_\mathcal{X}(x)<\infty\right\}.
\end{equation}
The kernel integral operator $T_K\colon L^2(\mathcal{X},P_\mathcal{X}
)\rightarrow L^2(\mathcal{X},P_\mathcal{X}
)$ is then defined via
\begin{equation}
(T_Kg)(x) = \int_\mathcal{X}K(x,x')g(x')\,\mathrm{d}P_\mathcal{X}(x')
\end{equation}
for all $g\in L^2(\mathcal{X},P_\mathcal{X})$.
\end{definition}
In addition, we note that we will mostly be concerned with the \textit{operator norm} of the kernel integral operator, which we denote with $\|T_{K}\|$ -- i.e., when no subscript is used to specify the norm, we assume the operator norm.

\subsection{Efficiency of RFF for matching variational QML}\label{ss:RFF_efficiency}

Given the definitions of Section~\ref{ss:digression}, we proceed in this section to state Theorem~\ref{thm:efficiency_RFF}, which provides insight into the number of data samples $n$ and the number of frequency samples $M$ which are sufficient to ensure that, with probability at least $1-\delta$, the true risk of the output hypothesis of RFF-based linear regression is no more than $\epsilon$ worse than the true risk of the best possible function realizable by the PQC architecture. To this end, we require first a preliminary definition of the ``best possible PQC function'':

\begin{definition}[Optimal PQC function] Given a regression problem $P\sim\mathcal{X}\times \mathbb{R}$, and a PQC architecture $(\Theta,\mathcal{D},O)$, we define $f^*_{(\Theta,\mathcal{D},O)}$, the optimal PQC model for $P$ (with respect to \textit{true risk}), via
\begin{equation}
f^*_{(\Theta,\mathcal{D},O)} = \argmin_{f\in \mathcal{F}_{(\Theta,\mathcal{D},O)}}\left[R(f)\right].
\end{equation}
\end{definition}

With this in hand we can state Theorem~\ref{thm:efficiency_RFF}. We note however that this follows via a straightforward application of the RFF generalization bounds provided in Ref.~\cite{genRFF} to the PQC-kernel, combined with the insight that the set of PQC functions $\mathcal{F}_{(\Theta,\mathcal{D},O)}$ is contained within $\mathcal{H}_{K_\mathcal{D}}$, the RKHS associated with the PQC kernel. A detailed proof is provided in Appendix~\ref{app:proof_main}. Additionally we note that the generalization bound given in Theorem~\ref{thm:efficiency_RFF} below holds for 2-norm \textit{regularized} RFF-based linear regression -- as per Eq.~\eqref{eq:lrr_reg_risk} -- with a a very specific regularization $\lambda = 1/\sqrt{n}$. This regularization is inherited from the proof of the generalization bounds in Ref.~\cite{genRFF}, and we refer to there for details.

\begin{theorem}[RFF vs.~variational QML]\label{thm:efficiency_RFF} Let $R$ be the risk associated with a regression problem $P\sim\mathcal{X}\times\mathbb{R}$. Assume the following:
\begin{enumerate}
\item $\|f^*_{(\Theta,\mathcal{D},O)}\|_{K_\mathcal{D}} \leq C$
\item $|y|\leq b$ almost surely when $(x,y)\sim P$, for some $b>0$.
\end{enumerate}
Additionally, define
\begin{align}
n_0 &\coloneqq \max\left\{4\|T_{K_\mathcal{D}}\|^2, \left(528\log\frac{1112\sqrt{2}}{\delta}\right)^2\right\},\\
c_0&\coloneqq 36\left(3 + \frac{2}{\|T_{K_\mathcal{D}}\|}\right), \\
c_1&\coloneqq 8\sqrt{2}(4b + \frac{5}{\sqrt{2}}C + 2\sqrt{2C}).
\end{align}
Then, let $\delta\in (0,1]$, $\epsilon> 0$, $n\geq n_0$, set $\lambda_n = 1/\sqrt{n}$, and let $\hat{f}_{M_n,\lambda_n}$ be the output of $\lambda_n$-regularized linear regression with respect to the feature map
\begin{equation}\label{eq:new_approx_map}
\phi_{M_n}(x) = \frac{1}{\sqrt{M_n}}\big(\psi(x,\nu_1),\ldots,\psi(x,\nu_{M_n})\big)
\end{equation}
constructed from the integral representation of $K_\mathcal{D}$ by sampling $M_n$ elements from $\pi$. Then, with probability at least $1-\delta$, one achieves 
\begin{equation}
R(\hat{f}_{M_n,\lambda_n}) - R(f^*_{(\Theta,\mathcal{D},O)}) \leq \epsilon,
\end{equation}
by ensuring that 
\begin{equation}
n\geq \max \left\{\frac{c_1^2\log^4\frac{1}{\delta}}{\epsilon^2}, n_0 \right\}
\end{equation}
and 
\begin{equation}
M \geq c_0\sqrt{n}\log\frac{108\sqrt{n}}{\delta}. 
\end{equation}
\end{theorem}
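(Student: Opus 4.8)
The plan is to reduce Theorem~\ref{thm:efficiency_RFF} to the random-feature generalization bound of Ref.~\cite{genRFF}, using two bridging facts: that the optimal PQC function lies in the RKHS $\mathcal{H}_{K_\mathcal{D}}$ with norm controlled by $C$, and that the cited bound can be instantiated with this function as the comparison target. All the analytic work on random features is then imported wholesale, and the substance of the proof is to check that the hypotheses of the cited bound are met and to identify the comparison function correctly.

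First I would characterize the RKHS of $K_\mathcal{D}$. Since $K_\mathcal{D}(x,x') = \langle \phi_\mathcal{D}(x), \phi_\mathcal{D}(x')\rangle$ for the finite-dimensional feature map $\phi_\mathcal{D}$ of \eqref{eq:PQC_feature_map}, the standard correspondence between a finite feature map and its kernel gives $\mathcal{H}_{K_\mathcal{D}} = \mathcal{F}_\mathcal{D}$ as sets, with RKHS norm $\|f_v\|_{K_\mathcal{D}} = \|v\|_2$; the normalization $\langle \phi_\mathcal{D}(x),\phi_\mathcal{D}(x)\rangle = 1$ is exactly what makes this clean. Combined with the inclusion $\mathcal{F}_{(\Theta,\mathcal{D},O)} \subset \mathcal{F}_\mathcal{D}$ from \eqref{eq:first_inclusion}, this shows $f^*_{(\Theta,\mathcal{D},O)} \in \mathcal{H}_{K_\mathcal{D}}$, so that Assumption 1 reads as a genuine RKHS-norm bound $\|f^*_{(\Theta,\mathcal{D},O)}\|_{K_\mathcal{D}} \leq C$.

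Next I would verify the preconditions of the bound in Ref.~\cite{genRFF} for the kernel $K_\mathcal{D}$: boundedness of the kernel ($K_\mathcal{D}(x,x) = 1$ by normalization, so the relevant constant is $1$), the existence of the integral representation with bounded feature function $\psi(x,\nu) = \sqrt{2}\cos(\langle \omega,x\rangle + \gamma)$ established in \eqref{eq:K_FT}, and boundedness of the labels from Assumption 2. With these in place I would apply the bound with comparison function $f^*_{(\Theta,\mathcal{D},O)}$ and the prescribed regularization $\lambda_n = 1/\sqrt{n}$. This yields, with probability at least $1-\delta$, a high-probability upper bound on $R(\hat{f}_{M_n,\lambda_n}) - R(f^*_{(\Theta,\mathcal{D},O)})$ expressed through $n$, $M$, $C$, $b$, $\|T_{K_\mathcal{D}}\|$ and $\delta$. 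The final step is bookkeeping: unpack the abstract bound into the explicit constants $n_0$, $c_0$, $c_1$, and solve the resulting inequality for the stated sufficient conditions on $n$ and $M$ that force the right-hand side below $\epsilon$.

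I expect the main obstacle to be the bridging between ``best in the RKHS'' and ``best PQC function''. The random-feature bounds of Ref.~\cite{genRFF} are most naturally phrased as excess-risk guarantees relative to the RKHS minimizer (or to the regression function under a source condition), whereas the theorem compares against $f^*_{(\Theta,\mathcal{D},O)}$, which is only optimal within the strictly smaller set $\mathcal{F}_{(\Theta,\mathcal{D},O)}$. Care is needed to argue that this comparison is legitimate --- either by observing that the RKHS minimizer $f_{\mathcal{H}_{K_\mathcal{D}}}$ satisfies $R(f_{\mathcal{H}_{K_\mathcal{D}}}) \leq R(f^*_{(\Theta,\mathcal{D},O)})$ so that any bound relative to the former transfers, or by instantiating the cited bound directly with $f^*_{(\Theta,\mathcal{D},O)}$ as the comparison function whose norm $C$ governs the complexity. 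The remaining difficulty is purely quantitative: faithfully tracking the numerical constants and the specific coupling $\lambda = 1/\sqrt{n}$ through the cited analysis so that $n_0$, $c_0$, and $c_1$ come out exactly as stated.
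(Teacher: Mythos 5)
Your proposal follows essentially the same route as the paper's own proof: verify the hypotheses of the random-feature bound of Ref.~\cite{genRFF} for $K_\mathcal{D}$ (integral representation with $\psi(x,\nu)=\sqrt{2}\cos(\langle\omega,x\rangle+\gamma)$, hence $\kappa=\sqrt{2}$, plus bounded labels), apply it to the ball $\mathcal{H}^C_{K_\mathcal{D}}$, and bridge to the PQC optimum exactly via your option (a) --- since $\|f^*_{(\Theta,\mathcal{D},O)}\|_{K_\mathcal{D}}\leq C$ and $\mathcal{F}_{(\Theta,\mathcal{D},O)}\subseteq\mathcal{H}_{K_\mathcal{D}}$ (the paper's Lemma~\ref{lem:function_inclusions}), one has $R(f^*_{\mathcal{H}^C_{K_\mathcal{D}}})\leq R(f^*_{(\Theta,\mathcal{D},O)})$, so the excess-risk bound transfers, and the stated conditions on $n$ and $M$ follow by constant bookkeeping. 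The only caveat is your side claim that $\|f_v\|_{K_\mathcal{D}}=\|v\|_2$ always holds: in general the RKHS norm is the infimum over representing vectors (it equals $\|v\|_2$ only when the representation is unique, e.g.\ for integer frequencies), but this is inessential since Assumption~1 supplies the needed norm bound directly.
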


Let us now try to unpack what insights can be gained from Theorem~\ref{thm:efficiency_RFF}. Firstly, recall that here we consider $\mathcal{X} \subseteq \mathbb{R}^{d}$, in which case $d$ provides the relevant asymptotic scaling parameter. With this in mind, in order to gain intuition, assume for now that $n_0,c_0$ and $c_1$ are constants with respect to $d$. Assume additionally that one can sample efficiently from $\pi$. In this case we see via Theorem~\ref{thm:efficiency_RFF} that both the number of data points $n$, and the number of samples $M$, which are sufficient to ensure that with probability $1-\delta$ there is a gap of at most $\epsilon$ between the true risk of PQC optimization and the true risk of RFF, is independent of $d$, and polylogarithmic in both $1/\epsilon$ and $1/\delta$. Given that the time and space complexity of RFF-based linear regression is $\mathcal{O}(nM)$ and $\mathcal{O}(nM^2 + M^3)$, respectively, in this case Theorem~\ref{thm:efficiency_RFF} guarantees that Algorithm~\ref{alg:RFF} (RFF-based linear regression) provides an efficient dequantization of variational QML.

However, in general $n_0,c_0$ and $c_1$ will \textit{not} be constants, and one will \textit{not} be able to sample efficiently from $\pi$. In particular, $n_0,c_0$ and $c_1$ depend on both $\|T_{K_\mathcal{D}}\|$, the operator norm of the kernel integral operator, and $C$, the upper bound on the RKHS norm of the optimal PQC function. Given the form of $n_0,c_0$ and $c_1$, in order for Theorem~\ref{thm:efficiency_RFF} to yield a polynomial upper bound on $n$ and $M$, it is sufficient that $\|T_{K_\mathcal{D}}\| = \Omega(1/\mathrm{poly}(d))$ and that $C = \mathcal{O}(\mathrm{poly}(d))$.

\textbf{Summary:} In order to use Theorem~\ref{thm:efficiency_RFF} to make any concrete statement concerning the efficiency of RFF for dequantizing variational QML, we need to obtain the following.
\begin{enumerate}
\item \textit{Lower bounds} on $\|T_{K_\mathcal{D}}\|$, the operator norm of the kernel integral operator.
\item \textit{Upper bounds} on  $\|f^*_{(\Theta,\mathcal{D},O)}\|_{K_\mathcal{D}}$, the RKHS norm of the optimal PQC function.
\item An understanding of the complexity of sampling from $\pi$, the distribution appearing in the integral representation of $K_\mathcal{D}$.
\end{enumerate}

We address the above requirements in the following sections. However, before doing that, we note in Section~\ref{ss:re-weight} below that Theorem~\ref{thm:efficiency_RFF} applies not only to $K_\mathcal{D}$, but to an entire family of ``re-weighted'' kernels, of which $K_\mathcal{D}$ is a specific instance. We will see that this is important as different re-weightings will lead to substantially different sampling complexities, as well as lower and upper bounds on $\|T_{K_\mathcal{D}}\|$ and $C$, respectively.

\subsection{Re-weighted PQC kernels}\label{ss:re-weight}

As mentioned in Section~\ref{ss:RFF_efficiency}, the proof of Theorem~\ref{thm:efficiency_RFF} is essentially a straightforward application of the RFF generalization bound from Ref.~\cite{genRFF} to the classical PQC kernel $K_\mathcal{D}$. In particular, as shown in Appendix.~\ref{app:proof_main}, the key insight that allows one to leverage the generalization bound from Ref.~\cite{genRFF} into a bound on the difference in true risk between the output of variational QML and the output of RFF-based linear regression with the PQC kernel $K_\mathcal{D}$, is the fact that $\mathcal{F}_{(\Theta, \mathcal{D},O)}  \subseteq \mathcal{H}_{K_\mathcal{D}}$ -- i.e., the fact that the set of all PQC functions $\mathcal{F}_{(\Theta,\mathcal{D},O)}$ is contained within the RKHS associated with the PQC kernel $\mathcal{H}_{K_\mathcal{D}}$. With this in mind we can ask whether there exists any \textit{other} kernel $K$ for which $\mathcal{F}_{(\Theta, \mathcal{D},O)} \subseteq \mathcal{H}_{K}$, as Theorem~\ref{thm:efficiency_RFF} would then immediately also hold for RFF-based linear regression via $K$. The motivation for asking this question is the following.
\begin{enumerate}
\item As discussed in Section.~\ref{ss:RFF_efficiency} above, the number of sufficient data points $n$ and the number of sufficient samples $M$ specified by Theorem~\ref{thm:efficiency_RFF} depends on the RKHS norm and the kernel integral operator respectively, both of which depend on the kernel. As such, by using a different kernel, one may require less data-points $n$ and less samples $M$ to get the desired guarantee on the output of RFF-based linear regression.
\item The implementation of RFF based linear regression requires one to sample from the distribution $\pi$, which for shift invariant kernels is proportional to the Fourier transform of the kernel. In practice, sampling from this distribution for the kernel $K_\mathcal{D}$ may be hard, and using a different kernel might allow us to sample from a different distribution, for which efficient sampling is possible.
\end{enumerate}
In this section we show that there is indeed a whole family of kernels -- which we call \textit{re-weighted} PQC kernels -- for which the set of all PQC functions $\mathcal{F}_{(\Theta,\mathcal{D},O)}$ is contained within the RKHS of the kernel, and for which Theorem~\ref{thm:efficiency_RFF} is valid. This will turn out to be important and useful for a variety of reasons. Firstly, in Section~\ref{ss:implementation} we will show that indeed, sampling from the distribution $\pi$ associated with the original PQC kernel $K_{\mathcal{D}}$ can \textit{not} be done efficiently, whereas sampling from the distribution associated with certain re-weighted kernels can be done efficiently. Additionally, we then show in Sections~\ref{ss:kern_int_op} and \ref{ss:RKHS_norm} that for certain regression problems, using the original PQC kernel $K_\mathcal{D}$ for RFF-based linear will \textit{not} give an efficient dequantization method, whereas certain re-weighted kernels will.

We stress that in typical applications of RFF the kernel is fixed, however for the purposes of dequantization via RFF we do not necessarily need to restrict ourselves to any particular kernel and can therefore exploit the fact that Theorem~\ref{thm:efficiency_RFF} is valid for \textit{any} kernel $k$ for which  $\mathcal{F}_{(\Theta, \mathcal{D},O)}  \subseteq \mathcal{H}_{K}$ to \textit{choose} the optimal kernel for dequantization. Here we discuss the family of ``re-weighted" PQC kernels for which $\mathcal{F}_{(\Theta, \mathcal{D},O)}  \subseteq \mathcal{H}_{K}$, and in subsequent sections we provide insights into how to make a well-informed choice as to which kernel should be used for dequantization of a specific PQC architecture and regression problem.

Given this motivation, we can now make the notion of a re-weighted PQC kernel precise. For any ``re-weighting vector'' $\doubleyou\in\mathbb{R}^{|\Omega_\mathcal{D}|}$, we define the re-weighted PQC feature map via

\begin{align}
\phi_{(\mathcal{D},\doubleyou)}(x) &\coloneqq \frac{1}{\|\doubleyou\|_2}\big( \doubleyou_0,\doubleyou_1\cos(\langle \omega_1, x\rangle), \doubleyou_1\sin(\langle \omega_1, x\rangle),\ldots,\nonumber\\
&\qquad\qquad\qquad,\ldots,\doubleyou_{|\Omega^{+}_\mathcal{D}|}\cos(\langle \omega_{|\Omega^{+}_\mathcal{D}|}, x\rangle),\doubleyou_{|\Omega^{+}_\mathcal{D}|}\sin(\langle \omega_{|\Omega^{+}_\mathcal{D}|}, x\rangle)\big),
\end{align}
along with the associated set of linear functions with respect to $\phi_{(\mathcal{D},\doubleyou)}$, defined via
\begin{equation}
\mathcal{F}_{(\mathcal{D},\doubleyou)}  = \{f_{\theta}(\cdot) = \langle v , \phi_{(\mathcal{D},\doubleyou)}(\cdot)\rangle \,|\, v\in\mathbb{R}^{|\tilde{\Omega}_{\mathcal{D}}|}\},
\end{equation}
and the associated re-weighted PQC kernel
\begin{equation}\label{eq:re-weighted_kernel}
K_{(\mathcal{D},\doubleyou)}(x,x') \coloneqq \langle \phi_{(\mathcal{D},\doubleyou)}(x),\phi_{(\mathcal{D},\doubleyou)}(x')\rangle.
\end{equation}
Note that we recover the previous definitions when $\doubleyou$ is the vector of all $1$'s (in which case $\|\doubleyou\|_2 = \sqrt{|\Omega_\mathcal{D}|}$). With this in hand, as per the motivating discussion above, we would now like to show that there exists a set of re-weighting vectors $\doubleyou$ such that $\mathcal{F}_{(\Theta, \mathcal{D},O)}  \subseteq \mathcal{H}_{K_{(\mathcal{D},\doubleyou)}}$ for all $\doubleyou$ in the set. To this end, we start with the following observation: 

\begin{observation}[Invariance of $\mathcal{F}_{\mathcal{D}}$ under non-zero feature map re-weighting]\label{obs:reweighting} For a re-weighting vector $\doubleyou\in\mathbb{R}^{|\Omega_\mathcal{D}|}$ satisfying $\doubleyou_i\neq 0$ for all $i\in[|\Omega_\mathcal{D}|]$, we have that
\begin{equation}
\mathcal{F}_{(\mathcal{D},\doubleyou)} = \mathcal{F}_{\mathcal{D}}.
\end{equation}
\end{observation}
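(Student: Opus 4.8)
The plan is to exhibit an explicit invertible linear map between the two feature representations and then absorb it into the coefficient vector $v$. The crucial starting observation is that $\phi_\mathcal{D}$ and $\phi_{(\mathcal{D},\doubleyou)}$ take values in the \emph{same} space $\mathbb{R}^{|\tilde{\Omega}_\mathcal{D}|}$ and differ only by a coordinate-wise rescaling, so the two sets of linear functions $\mathcal{F}_\mathcal{D}$ and $\mathcal{F}_{(\mathcal{D},\doubleyou)}$ can be compared directly.

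First I would write $\phi_{(\mathcal{D},\doubleyou)}(x) = D\,\phi_\mathcal{D}(x)$ for a fixed diagonal matrix $D\in\mathbb{R}^{|\tilde{\Omega}_\mathcal{D}|\times|\tilde{\Omega}_\mathcal{D}|}$ whose entries are read off by comparing the definition of $\phi_{(\mathcal{D},\doubleyou)}$ with Eq.~\eqref{eq:PQC_feature_map}. Concretely, the constant coordinate is scaled by $\doubleyou_0\sqrt{|\Omega_\mathcal{D}|}/\|\doubleyou\|_2$, and the pair of coordinates carrying $\cos(\langle\omega_i,x\rangle)$ and $\sin(\langle\omega_i,x\rangle)$ is scaled by $\doubleyou_i\sqrt{|\Omega_\mathcal{D}|}/\|\doubleyou\|_2$. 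This is the point at which the hypothesis $\doubleyou_i\neq 0$ for every $i\in[|\Omega_\mathcal{D}|]$ is used: it is exactly what guarantees that every diagonal entry of $D$ is nonzero, and hence that $D$ is invertible (with $D^{-1}$ again diagonal).

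Given this, the two inclusions follow by a change of coefficient vector. For the forward direction, any $f_v\in\mathcal{F}_{(\mathcal{D},\doubleyou)}$ satisfies $f_v(x) = \langle v, D\phi_\mathcal{D}(x)\rangle = \langle Dv, \phi_\mathcal{D}(x)\rangle$, using that $D$ is symmetric, so $f_v = f_{Dv}\in\mathcal{F}_\mathcal{D}$. For the reverse direction, given $f_u\in\mathcal{F}_\mathcal{D}$ I would set $v = D^{-1}u$, which is well defined precisely because $D$ is invertible, and check that $f_u(x) = \langle u,\phi_\mathcal{D}(x)\rangle = \langle D^{-1}u, D\phi_\mathcal{D}(x)\rangle = \langle v,\phi_{(\mathcal{D},\doubleyou)}(x)\rangle$, so $f_u\in\mathcal{F}_{(\mathcal{D},\doubleyou)}$. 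Combining the two inclusions yields $\mathcal{F}_{(\mathcal{D},\doubleyou)} = \mathcal{F}_\mathcal{D}$.

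There is no genuine obstacle here beyond careful bookkeeping: the one subtlety is the index correspondence between the $|\Omega_\mathcal{D}|$ components of the re-weighting vector $\doubleyou$ and the $|\tilde{\Omega}_\mathcal{D}| = 2|\Omega^{+}_\mathcal{D}|+1$ coordinates of the feature map, since each weight $\doubleyou_i$ (for $i\geq 1$) governs both the cosine and the sine coordinate associated with $\omega_i$. The nonzero hypothesis on $\doubleyou$ is what makes the rescaling invertible, so that it merely reparameterizes the coefficient vector rather than restricting the set of realizable functions; if some $\doubleyou_i$ vanished, the corresponding cosine/sine directions would be annihilated and $\mathcal{F}_{(\mathcal{D},\doubleyou)}$ would become a proper subset of $\mathcal{F}_\mathcal{D}$.
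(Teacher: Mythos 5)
Your proof is correct and follows essentially the same route as the paper's: both write $\phi_{(\mathcal{D},\doubleyou)} = M\phi_\mathcal{D}$ for a diagonal matrix $M$ (invertible precisely because all $\doubleyou_i\neq 0$) and establish the two inclusions by absorbing $M$ or $M^{-1}$ into the coefficient vector. Your explicit handling of the index correspondence between the $|\Omega_\mathcal{D}|$ weights and the $2|\Omega^{+}_\mathcal{D}|+1$ feature coordinates is a welcome piece of bookkeeping that the paper glosses over, but it does not change the argument.
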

\begin{proof} Define the matrix $M_{\doubleyou}$ as the matrix with $\doubleyou$ as its diagonal, and the matrix
\begin{equation}
M\coloneqq \left(\sqrt{|\Omega|_\mathcal{D}}/\|\doubleyou\|_2\right)M_{\doubleyou}. 
\end{equation}
By the assumptions on $\doubleyou$, the matrix $M$ is invertible. Now, let $f \in \mathcal{F}_{(\mathcal{D},\doubleyou)}$. We have that 
\begin{equation}
f(\cdot) = \langle v,\phi_{(\mathcal{D},\doubleyou)}(\cdot)\rangle = \langle v,M\phi_{\mathcal{D}}(\cdot)\rangle = \langle vM,\phi_\mathcal{D}(\cdot)\rangle =  \langle \tilde{v},\phi_\mathcal{D}(\cdot)\rangle,
\end{equation}
i.e.,  $f\in \mathcal{F}_{\mathcal{D}}$, and, therefore,  $\mathcal{F}_{(\mathcal{D},\doubleyou)}\subset \mathcal{F}_{\mathcal{D}}$. Similarly, for all $g\in \mathcal{F}_{\mathcal{D}}$ we have that 
\begin{equation}
g(\cdot) = \langle v,\phi_\mathcal{D}(\cdot)\rangle = \langle v,M^{-1}M\phi_\mathcal{D}(\cdot)\rangle = \langle vM^{-1},M\phi_\mathcal{D}(\cdot)\rangle= \langle \tilde{v},\phi_{(\mathcal{D},\doubleyou)}(\cdot)\rangle,
\end{equation}
i.e.,  $g\in \mathcal{F}_{(\mathcal{D},\doubleyou)}$, and hence $\mathcal{F}_{\mathcal{D}} \subset\mathcal{F}_{(\mathcal{D},\doubleyou)}$.
\end{proof}
Now, the fact that for any feature map the set of all linear functions with respect to the feature map is a subset of the RKHS of the associated kernel~\cite{SVMbook}, tells us that $\mathcal{F}_{(\mathcal{D},\doubleyou)}  \subseteq \mathcal{H}_{K_{(\mathcal{D},\doubleyou)}}$ for all re-weighting vectors $\doubleyou$. Combing this with Observation~\ref{obs:reweighting} then gives
\begin{equation}
\mathcal{F}_{(\Theta, \mathcal{D},O)} \subset \mathcal{F}_\mathcal{D} = \mathcal{F}_{(\mathcal{D},\doubleyou)} \subseteq \mathcal{H}_{K_{(\mathcal{D},\doubleyou)}}.
\end{equation}
for all re-weighting vectors with no zero elements. As such we indeed have $\mathcal{F}_{(\Theta, \mathcal{D},O)}  \subseteq \mathcal{H}_{K_{(\mathcal{D},\doubleyou)}}$ for all $\doubleyou$ with no zero elements. As discussed above, we therefore also have that Theorem~\ref{thm:efficiency_RFF} actually holds for \textit{any} PQC kernel $K_{(\mathcal{D},\doubleyou)}$, re-weighted via a re-weighting vector $\doubleyou$ with no zero elements, and as such we can \textit{choose} which such kernel to use for our dequantization procedure. We note that allowing re-weighting vectors with zero elements has the effect of ``shrinking'' the set $F_{(\mathcal{D},\doubleyou)}$, which might result in the existence of functions $f$ satisfying both $f\in \mathcal{F}_{(\Theta, \mathcal{D},O)}$ and $f\notin F_{(\mathcal{D},\doubleyou)}$. Intuitively, this is problematic because for regression problems in which $f$ is the optimal solution, we know that the PQC architecture $(\Theta, \mathcal{D},O)$ can realize $f$, but we cannot hope for the RFF procedure via $K_{(\mathcal{D},\doubleyou)}$ to do the same, as it is limited to hypotheses within $F_{(\mathcal{D},\doubleyou)}$.

In light of the above observations, and as discussed above, from this point on we can broaden our discussion of the application of Theorem~\ref{thm:efficiency_RFF} to include all appropriately re-weighted PQC kernels. This insight is important because of the following.

\begin{enumerate}
\item We will see that while all appropriately re-weighted PQC kernels give rise to the same function set $\mathcal{F}_\mathcal{D}$, they give rise to \textit{different} RKHS norms. As a result, the RKHS norm of the optimal PQC function --  which as we have seen is critical to the complexity of the RFF method -- will depend on which re-weighting we choose.
\item Similarly, we will see that the operator norm of the kernel integral operator -- and therefore again the complexity of RFF linear regression -- depends heavily on the re-weighting chosen.
\item We will see that the re-weighting of the PQC kernel completely determines the probability distribution $\pi$, from which it is necessary to sample in order to implement RFF-based linear regression. Therefore, once again, the efficiency of RFF will depend on the re-weighting chosen. This perspective will also allow us to see why zero elements are not allowed in the re-weighting vector. Namely, because doing this will cause the probability of sampling the associated frequency to be zero, which is problematic if that frequency is required to represent the regression function.
\end{enumerate}

\subsection{RFF implementation}\label{ss:implementation}

Recall from Section~\ref{ss:RFF}, and from our presentation of Algorithm~\ref{alg:RFF} in Section~\ref{s:PQCvsRFF}, that when given a shift-invariant kernel $K$, in order to implement RFF-based linear regression, one has to sample from the probability measure ${\pi = p \,\times\, \mu}$, which one obtains from the Fourier transform of $K$. As such, given a re-weighted PQC kernel $K_{(\mathcal{D},\doubleyou)}$, we need to
\begin{enumerate}
\item understand the structure of the probability distribution $p$, which should depend on both $\mathcal{D}$ and the re-weighting vector $\doubleyou$, and
\item understand when and how -- i.e.,  for which data-encoding strategies and which re-weightings -- one can efficiently sample from $p$.
\end{enumerate}

Let us begin with point 1. To this end we start by noting that the re-weighted PQC kernels $K_{(\mathcal{D},\doubleyou)}$ have a particularly simple integral representation, from which one can read off the required distribution. In particular, note that
\begin{align}
K_{(\mathcal{D},\doubleyou)}(x,x') &= \langle \phi_{(\mathcal{D},\doubleyou)}(x),\phi_{(\mathcal{D},\doubleyou)}(x')\rangle 
\nonumber\\
&= \frac{1}{\|\doubleyou\|^2_2}\left(\doubleyou_0^2 + \sum_{i =1}^{|\Omega^{+}_\mathcal{D}|}\doubleyou_i^2\left[\cos(\langle\omega_i, x\rangle)\cos(\langle\omega_i, x'\rangle) + \sin(\langle\omega_i, x\rangle)\sin(\langle\omega_i, x'\rangle) \right]\right) \label{eq:almost_integral}\\
\nonumber
&= \frac{1}{\|\doubleyou\|^2_2}\sum_{i =0}^{|\Omega^{+}_\mathcal{D}|}\doubleyou_i^2\left[\cos(\langle\omega_i, x\rangle)\cos(\langle\omega_i, x'\rangle) + \sin(\langle\omega_i, x\rangle)\sin(\langle\omega_i, x'\rangle) \right]\\
\nonumber
&= \frac{1}{\|\doubleyou\|^2_2}\sum_{i =0}^{ |\Omega^{+}_\mathcal{D}|}\doubleyou_i^2\left[\cos(\langle\omega_i, (x-x')\rangle) \right],\\
& = \frac{1}{2\pi}\int_{\mathcal{X}}\int_0^{2\pi}\sqrt{2}\cos(\langle\omega, x\rangle + \gamma)\sqrt{2}\cos(\langle\omega, x'\rangle + \gamma)q_{(\mathcal{D},\doubleyou)}(\omega)\,\,\mathrm{d}\gamma \mathrm{d}\nu, \label{eq:final_integral}
\end{align}
where 
\begin{align}
q_{(\mathcal{D},\doubleyou)}(\omega) =\sum_{i = 0}^{|\Omega^+_\mathcal{D}|}\frac{\doubleyou_i^2}{\|\doubleyou\|^2_2}\delta(\omega - \omega_i),
\end{align}
and $\delta$ is the Dirac delta function. By comparison with Eq.~\eqref{eq:K_FT} we,  therefore,  see that $\pi=q_{(\mathcal{D},\doubleyou)}\times\mu$, where as before, $\mu$ is the uniform distribution over $[0,2\pi)$. For convenience, we refer to $q_{(\mathcal{D},\doubleyou)}$ as the probability distribution associated to $K_{(\mathcal{D},\doubleyou)}$.

Let us now move on to point 2 -- in particular, for which data encoding strategies and re-weighting vectors can we \textit{efficiently} sample from $q_{(\mathcal{D},\doubleyou)}$? Firstly, note that sampling from the \textit{continuous} distribution $q_{(\mathcal{D},\doubleyou)}$ can be done by sampling from the \textit{discrete} distribution $p_{(\mathcal{D},\doubleyou)}$ over $\Omega_{\mathcal{D}}$ defined via
\begin{align}
p_{(\mathcal{D},\doubleyou)}(\omega_i) =\frac{\doubleyou_i^2}{\|\doubleyou\|^2_2}\label{eq:freq_prob}
\end{align}
for all $\omega_i\in \Omega_{\mathcal{D}}$. As a result, from this point on we focus on the distribution $p_{(\mathcal{D},\doubleyou)}$, and when clear from the context  we drop the subscript and just use $p$ to refer to $p_{(\mathcal{D},\doubleyou)}$. Also, we note that we can in principle just work directly with the choice of probability distribution, as opposed to the underlying weight vector, as we know for any probability distribution over $\Omega_{\mathcal{D}}$ there exists an appropriate weight vector.

As $p$ is a distribution over $\Omega_{\mathcal{D}}$, in order to discuss the efficiency of sampling from $p$, it is necessary to briefly recall some facts about the sets $\Omega_\mathcal{D}$ and $\tilde{\Omega}_\mathcal{D}$. In particular, as discussed in Appendix~\ref{app:freq_construction}, given a data-encoding strategy $\mathcal{D}=\left(\mathcal{D}^{(1)},\ldots,\mathcal{D}^{(d)}\right)$, where $\mathcal{D}^{(j)}$ contains the Hamiltonians used to encode the $j$'th data component $x_j$, we know that 
\begin{equation}
\tilde{\Omega}_\mathcal{D} = \tilde{\Omega}^{(1)}_\mathcal{D}\times \ldots\times\tilde{\Omega}^{(d)}_\mathcal{D},
\end{equation}
where $\tilde{\Omega}^{(j)}_\mathcal{D}\subset \mathbb{R}$ depends only on $\mathcal{D}^{(j)}$ -- i.e.,  $\tilde{\Omega}_\mathcal{D}$ has a Cartesian product structure. Additionally, as discussed in Section~\ref{ss:PQC} we know that $\tilde{\Omega}_\mathcal{D} \coloneqq\Omega_{\mathcal{D}}\cup \left(-\Omega_{\mathcal{D}}\right)$, where $\Omega_{\mathcal{D}}\cap\left(-\Omega_{\mathcal{D}}\right) = \{\omega_0\}$. Taken together, we see that 
\begin{align}
|\tilde{\Omega}_\mathcal{D}| &= \prod_{j = 1}^d|\tilde{\Omega}_\mathcal{D}^{(j)}|\label{eq:order_product},\\
|\Omega_{\mathcal{D}}| &= \frac{|\tilde{\Omega}_\mathcal{D}|-1}{2} +1.
\end{align}
Now, let us define $N_j := |\tilde{\Omega}^{(j)}_\mathcal{D}|$, and make the assumption that $N_j$ is independent of $d$\footnote{One can see from the discussion in  Appendix~\ref{app:freq_construction} that $N_j$ depends directly only on $L_j$, the number of encoding gates in $\mathcal{D}^{(j)}$, and the spectra of the encoding Hamiltonians in $\mathcal{D}^{(j)}$. This assumption is therefore justified for all data-encoding strategies in which both $L_j$ and the Hamiltonian spectra are independent of $d$, which is standard practice. One can see Table 1 in Ref.~\cite{Caro_2021} for a detailed list of asymptotic upper bounds on $N_j$ for different encoding strategies.}. Furthermore, let us define $N_{\text{min}} = \min\{N_j\}$. We then have that
\begin{equation}
|\tilde{\Omega}_\mathcal{D}| \geq N_{\text{min}}^d
\end{equation}
i.e.,  that the number of frequencies in $\tilde{\Omega}_\mathcal{D}$, and therefore $\Omega_\mathcal{D}$, scales \textit{exponentially} with respect to $d$. From this we can immediately make our first observation:

\begin{observation}Given that the number of elements in $\Omega_{\mathcal{D}}$ scales exponentially with $d$, one \textit{cannot} efficiently store and sample from arbitrary distributions supported on $\Omega_{\mathcal{D}}$.
\end{observation}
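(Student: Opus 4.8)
The plan is to give a straightforward information-theoretic counting argument, since the obstruction here is purely one of description length rather than anything computational. First I would recall that the discussion immediately preceding the statement established $|\tilde{\Omega}_\mathcal{D}| \geq N_{\text{min}}^d$ and hence, via $|\Omega_\mathcal{D}| = (|\tilde{\Omega}_\mathcal{D}|-1)/2 + 1$, that $N \coloneqq |\Omega_\mathcal{D}|$ grows exponentially in $d$. An arbitrary probability distribution $p$ supported on $\Omega_\mathcal{D}$ is an element of the probability simplex on $N$ points, i.e.,\ it is specified by the $N$ nonnegative reals $\{p(\omega_i)\}_{i}$ subject only to the normalization constraint $\sum_i p(\omega_i) = 1$, giving $N-1$ free real parameters. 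The whole point is that this parameter count is exponential in $d$.

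To make the storage claim precise I would restrict attention to the subfamily of distributions that are uniform over some nonempty subset $T \subseteq \Omega_\mathcal{D}$. There are $2^N - 1$ such distributions and they are pairwise distinct, so any encoding scheme that allows one to recover the stored distribution must assign distinct bit-strings to distinct members of this subfamily. By a pigeonhole argument the longest such bit-string then has length at least $\log_2(2^N - 1) \geq N-1$ bits. Since $N$ is exponential in $d$, no scheme can store every distribution supported on $\Omega_\mathcal{D}$ using space $\poly(d)$.

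For the sampling claim I would model a sampler as an algorithm $A$ taking a description $\rho$ of the target distribution and producing samples; for $A$ to be correct on the entire subfamily above, the map from distributions to descriptions $\rho$ must be injective, so the descriptions are themselves $\Omega(N)$ bits long, and any run that is correct on all of them must depend on $\Omega(N)$ bits of its input. An algorithm running in time and space $\poly(d)$ can read only $\poly(d)$ bits of its input and hence cannot be correct across the full, exponentially large family. I do not expect any substantive obstacle: no estimates are involved. The only genuinely delicate point — and the one thing worth pinning down carefully — is fixing the meaning of ``efficiently store and sample'' so as to rule out a clever implicit representation that evades the bound; the reduction to the discrete family of $2^N$ uniform-over-subset distributions handles exactly this, since on that family the counting lower bound is unambiguous.
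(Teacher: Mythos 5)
Your proof is correct, and it actually supplies more than the paper does: the paper states this Observation with no proof at all, treating it as immediate from the preceding display $|\tilde{\Omega}_\mathcal{D}| \geq N_{\text{min}}^d$ — the implicit reasoning being that an arbitrary distribution over exponentially many points has exponentially many free parameters. Your contribution is to make that parameter-counting intuition rigorous, and the key device — passing from the real-valued simplex to the discrete family of $2^N-1$ uniform-over-subset distributions so that a pigeonhole bound on description length applies — is exactly the right way to do it. The injectivity observation (distinct distributions must receive distinct descriptions, or else any sampler fails on at least one of them) is what ties the storage bound to the sampling claim and yields the conjunction the Observation actually asserts: no scheme can simultaneously store in $\mathrm{poly}(d)$ space and sample correctly.

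One caveat: your final inference — that an algorithm running in time and space $\mathrm{poly}(d)$ ``can read only $\mathrm{poly}(d)$ bits of its input and hence cannot be correct across the full family'' — is not sound as a standalone claim about samplers whose input is allowed to be long. If one permits an exponentially long description with random access (e.g., the elements of the support $T$ stored as a sorted list together with $|T|$), then a sampler can draw a uniform index and read off the corresponding entry, producing exact samples from every uniform-over-subset distribution in $\mathrm{poly}(d)$ time per run while touching only $\mathrm{poly}(d)$ bits per run; reading few bits per run does not preclude correctness, because the positions read are randomized. This does not damage your proof of the Observation as stated, since in that scenario storage is already inefficient and the Observation only claims that storage and sampling cannot \emph{both} be efficient — which your storage bound alone establishes. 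But the sampling paragraph should either be dropped or rephrased as: if descriptions are restricted to $\mathrm{poly}(d)$ bits, then by pigeonhole two distinct members of the family share a description, and any sampler errs on at least one of them.
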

As such, we have to restrict ourselves to \textit{structured distributions}, whose structure facilitates efficient sampling. One such subset of distributions are those which are supported only on a polynomial (in $d$) size subset of $\Omega_{\mathcal{D}}$. Another suitable set of distributions is what we call \textit{product-induced} distributions. Specifically, let $\tilde{p}^{(j)}$ be an arbitrary distribution over $\tilde{\Omega}^{(j)}_\mathcal{D}$, and define the product distribution $\tilde{p}$ over $\tilde{\Omega}_\mathcal{D}$ via 
\begin{equation}\label{eq:product}
\tilde{p}\big(\omega = (\omega_1,\ldots,\omega_d)\big) = \tilde{p}^{(1)}(\omega_1)\times\ldots\times \tilde{p}^{(d)}(\omega_d)
.
\end{equation}
Note that, due to the $d$-independence of $|\tilde{\Omega}^{(j)}_\mathcal{D}|$ we can store and sample from $\tilde{p}^{(j)}$ efficiently, which then allows us to sample from $\tilde{p}$ by simply drawing $\omega_j\sim\tilde{p}^{(j)}$ for all $j\in[d]$ and then outputting $\omega = (\omega_1,\ldots,\omega_d)$. However, it may be the case that $\omega\notin\Omega_{\mathcal{D}}$. As such, the natural thing to do is simply output $\omega$ if $\omega\in\Omega_{\mathcal{D}}$, and if not, output $-\omega$. If one does this, then one samples from the distribution $p$ over $\Omega_{\mathcal{D}}$ defined via
\begin{equation}\label{eq:induced_distributions}
p(\omega) := \begin{cases} \tilde{p}(\omega)\hspace{5em}\text{ if } \omega = \omega_0,\\
\tilde{p}(\omega) + \tilde{p}(-\omega) \hspace{1.2em}\text{ else,}
\end{cases}
\end{equation}
which we refer to as a \textit{product-induced} distribution. We can in fact however go further, and use the Cartesian product structure of $\tilde{\Omega}_\mathcal{D}$ to generalize product-induced distributions to \textit{matrix-product-state-induced} (MPS-induced) distributions. To do this, let us label the elements of $\tilde{\Omega}^{(j)}_\mathcal{D}$ via $\tilde{\Omega}^{(j)}_\mathcal{D} = \{\tilde{\Omega}^{(j)}_{k_j}\}$ for $k_j\in [N_j]$. We can then write any $\omega\in \tilde{\Omega}_\mathcal{D}$ via
$\omega = (\omega^{(1)}_{k_1},\ldots,\omega^{(d)}_{k_d})$, for some indexing $(k_1,\ldots,k_d)$. From this, we see that \textit{any} distribution $\tilde{p}$ over $\tilde{\Omega}_\mathcal{D}$  can be naturally represented as a $d$-tensor - i.e.,  as as a tensor with $d$ legs, where the $j$'th leg is $N_j$ dimensional. Graphically, we have that 
\begin{equation}
\tilde{p}[(\omega^{(1)}_{k_1},\ldots,\omega^{(d)}_{k_d})] = \vcenter{\hbox{\includegraphics[height=0.12\linewidth]{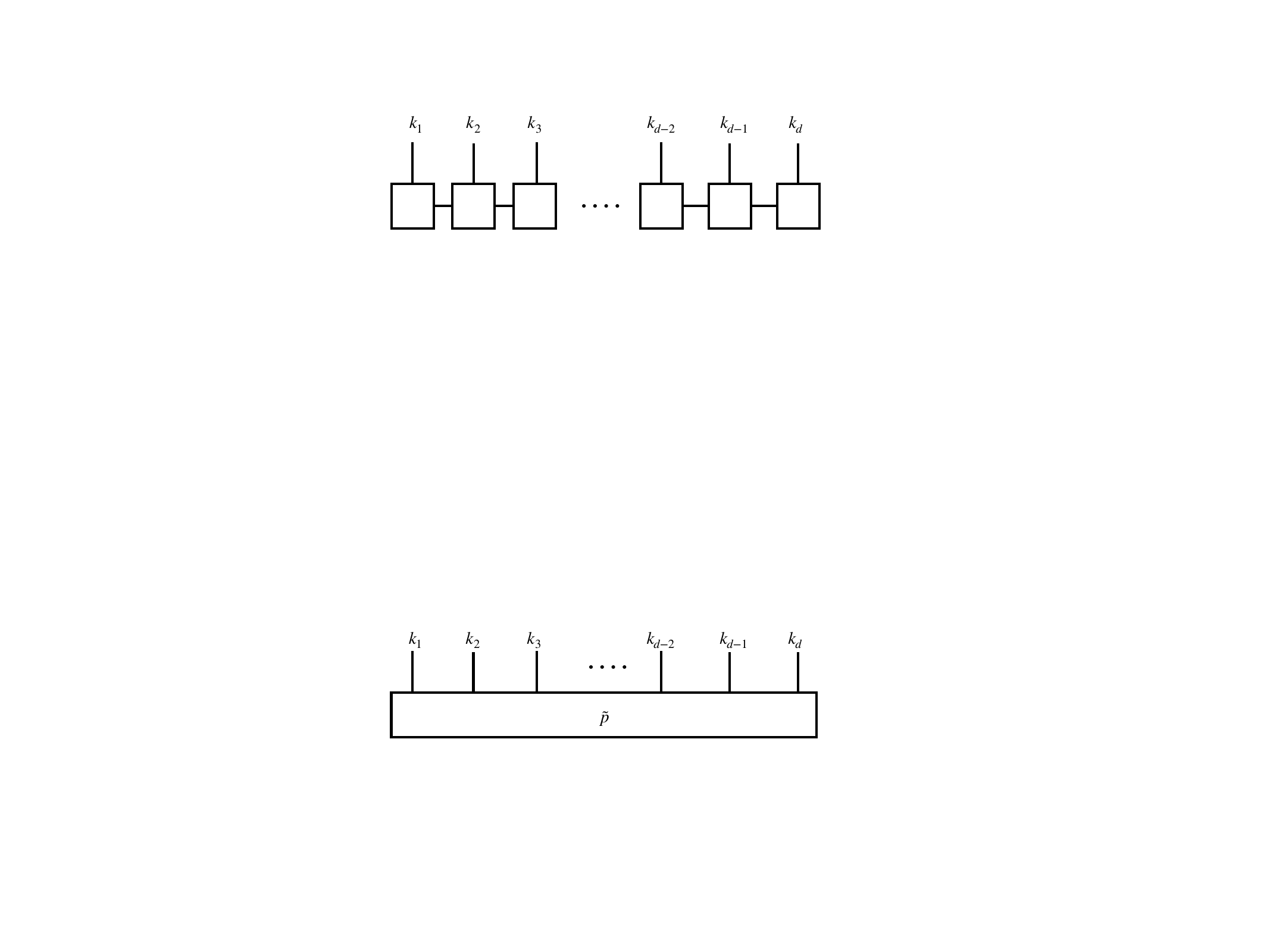}}}
\end{equation}
Now we can consider the subset of distributions which can be represented by a \textit{matrix product state}~\cite{Schollwock_2011}, i.e.,  those distributions for which
\begin{equation}
\tilde{p}[(\omega^{(1)}_{k_1},\ldots,\omega^{(d)}_{k_d})] = \vcenter{\hbox{\includegraphics[height=0.13\linewidth]{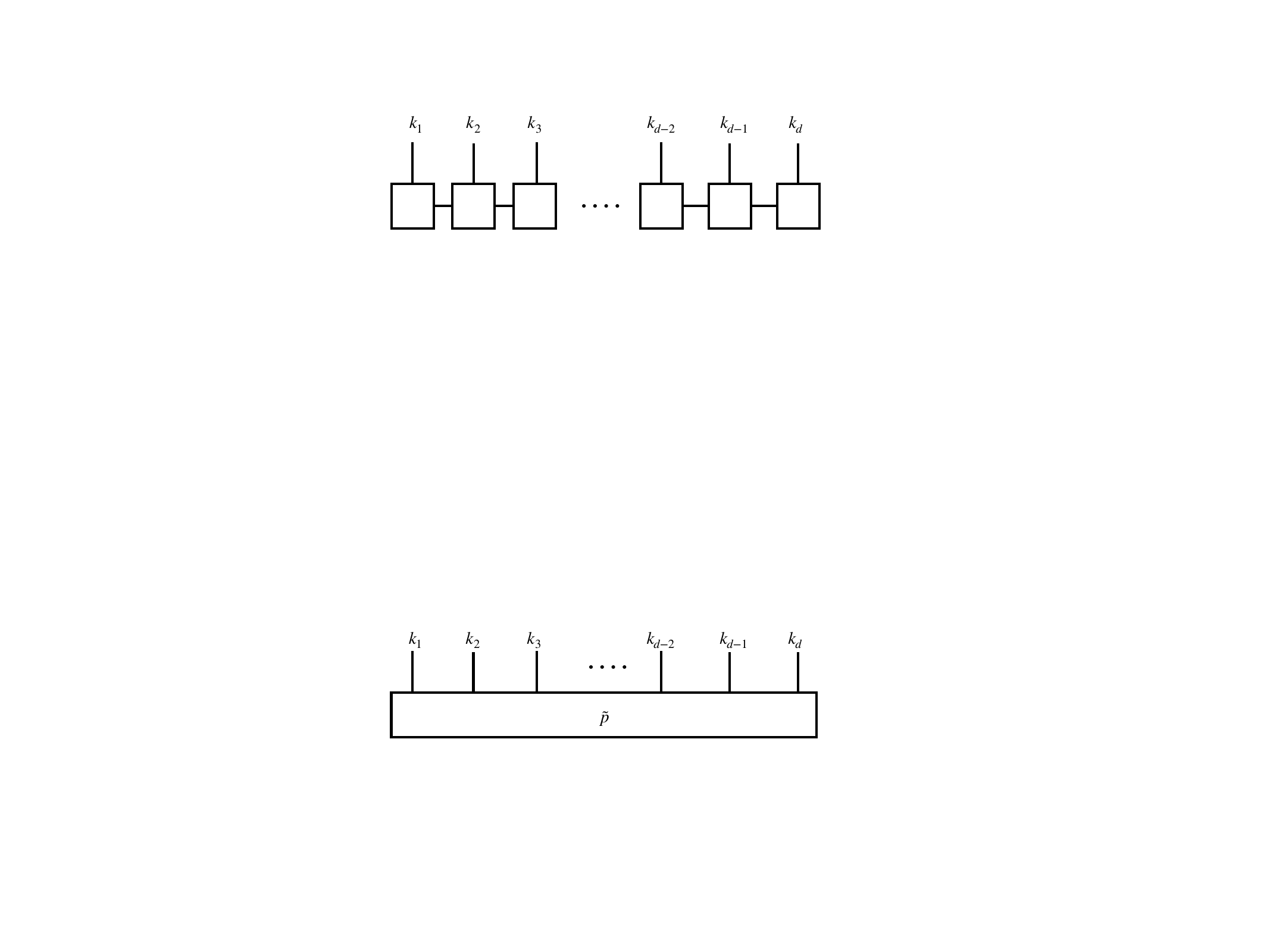}}}
\end{equation}
We refer to distributions which admit such a representation as MPS distributions \cite{PhysRevB.85.165146,Stoudenmire_2010,Expressive}. One can efficiently store such distributions whenever the bond dimension $\chi$ is polynomial in $d$, and as described in Refs.~\cite{PhysRevB.85.165146,Stoudenmire_2010}, one can sample from such distributions with complexity $dN_{\text{max}}\chi^3$. Note that the product distribution in Eq.~\eqref{eq:product} is a special case of an MPS distribution,with $\chi = 1$. Now, given an MPS distribution $\tilde{p}$ over $\tilde{\Omega}_\mathcal{D}$, we define the induced distribution over $\Omega_\mathcal{D}$ via Eq.~\eqref{eq:induced_distributions}.

\textbf{Summary:} In order to \textit{efficiently} implement the RFF procedure for a kernel $K_{(\mathcal{D},\doubleyou)}$, it is necessary that one can sample efficiently from $p_{(\mathcal{D},\doubleyou)}$, the discrete probability distribution associated to the kernel. However, the number of frequency vectors $|\Omega_\mathcal{D}|$ typically scales exponentially in $d$, and as such one \textit{cannot} efficiently store and sample from arbitrary probability distributions over $\Omega_{\mathcal{D}}$. As such, \textit{efficiently} implementing RFF is only possible for the subset of kernels whose associated distributions have a structure which facilitates efficient sampling. Due to the Cartesian product structure of $\tilde{\Omega}_\mathcal{D}$ one such set of distributions (amongst others) are those induced by MPS with polynomial bond dimension.

\subsection{Kernel integral operator for PQC kernels}\label{ss:kern_int_op}

Recall from Theorem~\ref{thm:efficiency_RFF} that 
\begin{equation}
M\geq c_o\sqrt{n}\log\frac{108\sqrt{n}}{\delta}
\end{equation}
frequency samples are sufficient to guarantee, with probability greater than $1-\delta$, an error of at most $\epsilon$ between the output of the RFF procedure and the optimal PQC model. As such, in order to fully understand the complexity of RFF-based regression, it is necessary for us to gain a better understanding of $c_0$, which is given by 
\begin{equation}
c_0 = 9\left(\frac{29}{4} + \frac{4}{\|T_{K_{(\mathcal{D},\doubleyou)}}\|}\right).
\end{equation}
In particular, in order to find the smallest number of sufficient frequency samples, it is necessary for us to obtain an upper bound on $c_0$, which in turn requires a \textit{lower bound} on $\|T_{K_{(\mathcal{D},\doubleyou)}}\|$, the operator norm of the kernel integral operator associated with $K_{(\mathcal{D},\doubleyou)}$. We achieve this with the following lemma, whose proof can be found in Appendix~\ref{app:kern_int_proof}.

\begin{lemma}[Operator norm of $T_{(K_\mathcal{D},\doubleyou)}$] \label{lem:operator_norm}
Let $K_{(\mathcal{D},\doubleyou)}$ be the re-weighted PQC kernel defined via Eq.~\eqref{eq:re-weighted_kernel}, and let $T_{K_{(\mathcal{D},\doubleyou)}}$ be the associated kernel integral operator (as per Definition~\ref{def:kern_int_op}). Assume that (a) the marginal distribution $P_\mathcal{X}$ appearing in the definition of the kernel integral operator is fixed to the uniform distribution, and (b) The frequency set $\Omega_\mathcal{D}$ consists only of integer vectors -- i.e.,  $\Omega_\mathcal{D}\subset \mathbb{Z}^d$. Then, we have that
\begin{align}
  \|T_{K_{(\mathcal{D},\doubleyou)}}\| &= \max_{i \in |\Omega_\mathcal{D}|} \left\{\frac{1}{2}\frac{\doubleyou_i^2}{\|\doubleyou\|^2_2} \right\}\\
  \nonumber
  &=\max_{\omega \in \Omega_\mathcal{D}}\left\{\frac{1}{2}p_{(\mathcal{D},\doubleyou)}(\omega)\right\}.
  \nonumber
\end{align}
\end{lemma}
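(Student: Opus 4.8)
The plan is to exploit the fact that, under the two stated assumptions, $T_{K_{(\mathcal{D},\doubleyou)}}$ is a \emph{convolution operator on the torus} and is therefore diagonalized by the Fourier basis, so that its entire spectrum can be read off from the Fourier coefficients of the kernel. Concretely, the derivation culminating in Eq.~\eqref{eq:final_integral} already shows that the kernel is shift-invariant, $K_{(\mathcal{D},\doubleyou)}(x,x') = \bar{K}(x-x')$ with
\begin{equation}
\bar{K}(t) = \sum_{i=0}^{|\Omega^{+}_\mathcal{D}|}\frac{\doubleyou_i^2}{\|\doubleyou\|_2^2}\cos(\langle\omega_i,t\rangle) = \sum_{i}p_{(\mathcal{D},\doubleyou)}(\omega_i)\cos(\langle\omega_i,t\rangle),
\end{equation}
and, because $P_\mathcal{X}$ is the uniform distribution on $\mathcal{X}=[0,2\pi)^d$, the integral operator acts as (normalized) convolution, $(T_{K_{(\mathcal{D},\doubleyou)}}g)(x) = \int_{\mathcal{X}}\bar{K}(x-x')g(x')\,\mathrm{d}P_\mathcal{X}(x')$.

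First I would fix the orthonormal basis. Since assumption (b) guarantees $\Omega_\mathcal{D}\subset\mathbb{Z}^d$, the family $\{e_\omega(x) = e^{i\langle\omega,x\rangle}\}_{\omega\in\mathbb{Z}^d}$ consists of functions that are genuinely $2\pi$-periodic in each coordinate and hence form an orthonormal basis of $L^2(\mathcal{X},P_\mathcal{X})$ with respect to the uniform measure; this is exactly the step where integrality is needed, because for non-integer frequencies the relevant exponentials are neither periodic on the box nor mutually orthogonal, and the clean diagonalization fails. Next I would compute $T_{K_{(\mathcal{D},\doubleyou)}} e_\omega$: since the operator is convolution by $\bar{K}$, a standard calculation gives $T_{K_{(\mathcal{D},\doubleyou)}} e_\omega = \hat{\bar{K}}(\omega)\,e_\omega$, where $\hat{\bar{K}}(\omega) = \int_{\mathcal{X}}\bar{K}(t)e^{-i\langle\omega,t\rangle}\,\mathrm{d}P_\mathcal{X}(t)$ is the Fourier coefficient of $\bar{K}$. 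Expanding each cosine as $\cos(\langle\omega_i,t\rangle) = \tfrac12(e^{i\langle\omega_i,t\rangle}+e^{-i\langle\omega_i,t\rangle})$ and using orthonormality then shows that $\hat{\bar{K}}$ is supported on $\tilde{\Omega}_\mathcal{D}$, taking the value $\tfrac12 p_{(\mathcal{D},\doubleyou)}(\omega_i)$ on each nonzero $\pm\omega_i$ and vanishing off $\tilde{\Omega}_\mathcal{D}$. Because every Fourier coefficient is non-negative, the completeness of the Fourier basis certifies that these are \emph{all} of the eigenvalues, $T_{K_{(\mathcal{D},\doubleyou)}}$ is positive semi-definite and self-adjoint, and its operator norm equals its largest eigenvalue $\max_{\omega}\hat{\bar{K}}(\omega) = \max_i \tfrac12\, p_{(\mathcal{D},\doubleyou)}(\omega_i)$, which is the claimed expression.

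The step that requires the most care — and the one I expect to be the main obstacle — is the bookkeeping for the zero frequency $\omega_0 = (0,\ldots,0)$. Unlike the nonzero frequencies, the constant mode is not paired with a sine partner: the feature map $\phi_{(\mathcal{D},\doubleyou)}$ contributes a single constant coordinate $\doubleyou_0/\|\doubleyou\|_2$ rather than a $(\cos,\sin)$ pair, and correspondingly $\cos(\langle\omega_0,t\rangle)=1$ does not split into two distinct exponentials. One must therefore treat the constant eigenfunction separately and verify that it is reconciled with the factor of $\tfrac12$ appearing uniformly for all other modes, rather than naively applying the generic $\cos=\tfrac12(e+e)$ decomposition to it. Once this edge case is settled, the remaining ingredients — orthonormality of the Fourier basis, the convolution–eigenvalue identity, and positive semi-definiteness — are routine, and the maximum is ultimately taken over the finitely many $\omega_i$ carrying nonzero weight in $p_{(\mathcal{D},\doubleyou)}$.
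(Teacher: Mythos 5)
Your proof takes essentially the same route as the paper's: under assumptions (a) and (b) the operator is a convolution on the torus, it is diagonalized by the Fourier basis (the paper works with the real basis $\{\cos(\langle\omega,\cdot\rangle),\sin(\langle\omega,\cdot\rangle)\,|\,\omega\in\mathbb{Z}^d\}$ and invokes self-adjointness to equate norm with spectral radius; you work with complex exponentials and positivity of the Fourier coefficients -- the difference is immaterial), and the operator norm is then read off as the largest eigenvalue.

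However, the zero-frequency bookkeeping that you flag as the main obstacle and then defer (``once this edge case is settled'') is not an edge case that can be settled in favour of the stated formula: it is a genuine discrepancy, and it is precisely the point at which the paper's own proof goes wrong. Direct computation gives, for $g\equiv 1$,
\begin{equation}
\left(T_{K_{(\mathcal{D},\doubleyou)}}g\right)(x) = \frac{1}{(2\pi)^d}\int_\mathcal{X}K_{(\mathcal{D},\doubleyou)}(x,x')\,\mathrm{d}x' = \frac{\doubleyou_0^2}{\|\doubleyou\|_2^2} = p_{(\mathcal{D},\doubleyou)}(\omega_0),
\end{equation}
since the cosines at nonzero integer frequencies integrate to zero. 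So the constant mode carries eigenvalue $p_{(\mathcal{D},\doubleyou)}(\omega_0)$ with \emph{no} factor $\tfrac12$, exactly because $\cos(\langle\omega_0,t\rangle)=1$ is a single exponential rather than a pair, as you anticipated. (The paper's proof instead discards the $\doubleyou_0^2$ term ``via assumption (b)'' -- a step valid only for $\omega\neq 0$ -- and its final case analysis assigns the constant function eigenvalue $0$; both are incorrect.) Completing your argument honestly therefore yields
\begin{equation}
\|T_{K_{(\mathcal{D},\doubleyou)}}\| = \max\left\{p_{(\mathcal{D},\doubleyou)}(\omega_0),\ \max_{\omega\in\Omega^{+}_\mathcal{D}}\tfrac12\,p_{(\mathcal{D},\doubleyou)}(\omega)\right\},
\end{equation}
which disagrees with the lemma whenever $p_{(\mathcal{D},\doubleyou)}(\omega_0)>\tfrac12\max_{\omega\in\Omega^{+}_\mathcal{D}}p_{(\mathcal{D},\doubleyou)}(\omega)$; for the uniform distribution, for instance, it is off by exactly a factor of $2$. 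So your proposal, as written, has a gap -- the one step you postpone is the one that does not go through -- but closing it produces a (slightly) corrected statement rather than the claimed one. The discrepancy is harmless downstream: the true norm is always at least the claimed value, and the lemma is only used to \emph{lower} bound $\|T_{K_{(\mathcal{D},\doubleyou)}}\|$ (hence to upper bound $c_0$ in Theorem~\ref{thm:efficiency_RFF}), so the paper's subsequent conclusions survive up to a constant factor.
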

In light of this, let us again drop the subscript for convenience, and define
\begin{equation}
p_\text{max} \coloneqq \max_{\omega\in\Omega_\mathcal{D}}\left\{p_{(\mathcal{D},\doubleyou)}(\omega)\right\}.
\end{equation} 
With this in hand we can immediately make the following observation:
\begin{observation}\label{obs:prob_decay}
In order to achieve $c_0 = \mathcal{O}\left(\mathrm{poly}(d)\right)$, which is necessary for Theorem~\ref{thm:efficiency_RFF} to imply that ${M=\mathcal{O}(\mathrm{poly}(d))}$ frequency samples are sufficient, we require that
\begin{equation}\label{eq:inv_poly_lower}
p_\text{max} = \Omega\left(\frac{1}{\mathrm{poly}(d)}\right),
\end{equation}
i.e.,  the maximum probability of the probability distribution associated to $K_{(\mathcal{D},\doubleyou)}$ must decay at most inversely polynomially in $d$. This is illustrated in Figure~\ref{fig:concentration}.
\end{observation}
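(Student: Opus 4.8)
The plan is to treat this as a direct corollary of Lemma~\ref{lem:operator_norm} together with the explicit expression for $c_0$ stated just above the observation. First I would recall from Lemma~\ref{lem:operator_norm} that, under the stated assumptions (uniform marginal $P_\mathcal{X}$ and integer frequency vectors), the operator norm of the kernel integral operator satisfies
\begin{equation}
\|T_{K_{(\mathcal{D},\doubleyou)}}\| = \max_{\omega\in\Omega_\mathcal{D}}\left\{\frac{1}{2}p_{(\mathcal{D},\doubleyou)}(\omega)\right\} = \frac{1}{2}p_\text{max}.
\end{equation}

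Next I would substitute this identity into
\begin{equation}
c_0 = 9\left(\frac{29}{4} + \frac{4}{\|T_{K_{(\mathcal{D},\doubleyou)}}\|}\right),
\end{equation}
which immediately produces the closed form
\begin{equation}
c_0 = 9\left(\frac{29}{4} + \frac{8}{p_\text{max}}\right).
\end{equation}
From here the claim is read off: the additive term $29/4$ is a constant independent of $d$, so $c_0$ is polynomially bounded in $d$ if and only if $1/p_\text{max}$ is polynomially bounded, i.e.\ if and only if $p_\text{max} = \Omega(1/\poly(d))$. Because Theorem~\ref{thm:efficiency_RFF} guarantees that $M \geq c_0\sqrt{n}\log(108\sqrt{n}/\delta)$ frequency samples suffice, this inverse-polynomial decay of $p_\text{max}$ is exactly the condition required for $M = \Ord{\poly(d)}$ to hold.

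I do not expect any genuine obstacle here, since the entire content of the observation is the single substitution above; the hard spectral work has already been absorbed into Lemma~\ref{lem:operator_norm}. The only subtlety worth flagging is the direction of the implication: the observation is phrased as a \emph{necessary} requirement, and the displayed closed form makes clear that it is necessary as well as sufficient, because the constant offset $29/4$ cannot compensate for a super-polynomially small $p_\text{max}$. Thus inverse-polynomial decay of the maximum probability is precisely the dividing line between polynomially many and super-polynomially many frequency samples.
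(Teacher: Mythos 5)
Your proposal is correct and follows exactly the paper's own reasoning: the observation is stated as an immediate consequence of substituting the identity $\|T_{K_{(\mathcal{D},\doubleyou)}}\| = \tfrac{1}{2}p_\text{max}$ from Lemma~\ref{lem:operator_norm} into the expression $c_0 = A + B/\|T_{K_{(\mathcal{D},\doubleyou)}}\|$ (constants $A,B$ independent of $d$), so that $c_0 = \mathcal{O}(\mathrm{poly}(d))$ holds precisely when $p_\text{max} = \Omega(1/\mathrm{poly}(d))$. Your additional remark that the constant offset cannot compensate for super-polynomially small $p_\text{max}$, making the condition both necessary and sufficient for a polynomial bound on $c_0$, is accurate and consistent with the paper's framing.
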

It is important to stress that we have \textit{not} yet established the \textit{necessity} that $p_\text{max}$ decays inversely polynomially for the RFF procedure to be efficient. In particular, we have only established that this is required for the guarantee of Theorem~\ref{thm:efficiency_RFF} to be meaningful. However, we will show shortly, in Section~\ref{ss:lower_bounds}, that Eq.~\eqref{eq:inv_poly_lower} is indeed also necessary, at least in order to obtain a small \textit{average} error.

\begin{figure}[t]
\begin{center}
\includegraphics[width=\linewidth]{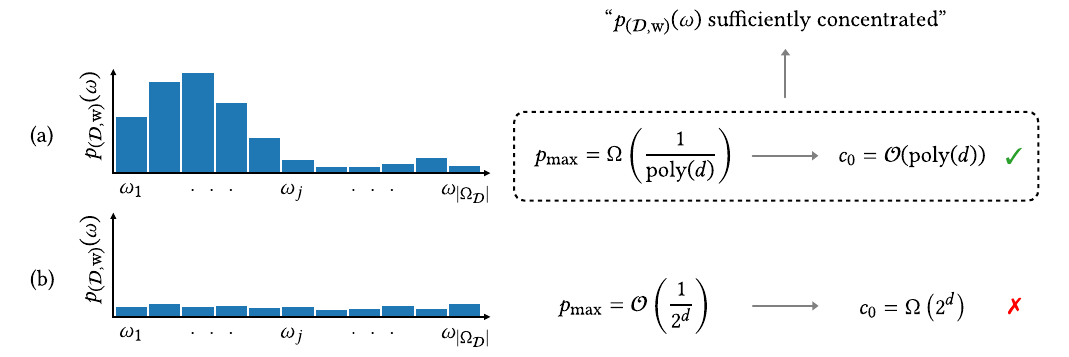} 
\caption{A graphical illustration of Observation~\ref{obs:prob_decay}. In particular, we see that if the re-weighting probability distribution $p_{(\mathcal{D},\doubleyou)}$ is sufficiently concentrated, then $c_0$ will scale polynomially in $d$, which implies via Theorem~\ref{thm:efficiency_RFF} that polynomially many frequency samples $M$ are sufficient to achieve the desired guarantee from RFF-based linear regression.}
\label{fig:concentration}
\end{center}
\end{figure}

In light of Observation~\ref{obs:prob_decay}, we can immediately rule out the meaningful applicability of Theorem~\ref{thm:efficiency_RFF} for kernels with the following associated distributions:

\textbf{The uniform distribution:} As discussed in Section~\ref{ss:implementation}, we have that $|\tilde{\Omega}_\mathcal{D}| \geq N_\text{min}^d$, and therefore, for the uniform distribution over $|\Omega_\mathcal{D}|$ one has that
\begin{equation}
p_\text{max} \leq  \frac{2}{N_\text{min}^d},
\end{equation}
i.e.,  $p_\mathrm{max}$ scales inverse exponentially with $d$.

\textbf{Product-induced distributions:} Consider a probability distribution $p$  over $\Omega_\mathcal{D}$ induced by the product distribution $\tilde{p}$ over $\tilde{\Omega}_\mathcal{D}$, defined as per Eq.~\eqref{eq:product}. We have that
\begin{align}
p_{\text{max}} &\leq 2\tilde{p}_{\text{max}} \hspace{5em} \text{[via Eq.~\eqref{eq:induced_distributions}]}\\
\nonumber
&= 2\prod_{j\in[d]} \tilde{p}^{j}_\text{max} \hspace{3.1em} \text{[via Eq.~\eqref{eq:product}]}\\
\nonumber
&\leq 2\left(\max_{j\in[d]}\left\{\tilde{p}^{j}_\text{max}\right\}\right)^d.
\nonumber
\end{align}
Therefore, whenever $\max_{j\in[d]}\left\{\tilde{p}^{j}_\text{max}\right\} < 1$, there exists some constant $c>1$ such that
\begin{equation}
p_{\text{max}} \leq \frac{2}{c^d}.
\end{equation}

\textbf{Summary:} In order for Theorem~\ref{thm:efficiency_RFF} to be meaningfully applicable -- i.e., to guarantee the efficiency of RFF for approximating variational QML -- one requires that the operator norm of the kernel integral operator decays at most inverse polynomially with respect to $d$, which via Lemma~\ref{lem:operator_norm} requires that the maximum probability of the distribution associated with the kernel decays at most inversely polynomially with $d$. Unfortunately, this rules out any efficiency guarantee, via Theorem~\ref{thm:efficiency_RFF}, for kernels $K_{(\mathcal{D},\doubleyou)}$ whose associated distribution $p_{(\mathcal{D},\doubleyou)}$ is either the uniform distribution or a product-induced distribution (with all component probability distributions non-trivial).

\subsection{RKHS norm for PQC kernels}\label{ss:RKHS_norm}

Recall from Theorem~\ref{thm:efficiency_RFF} that one requires
\begin{equation}
n\geq \max\left\{\frac{c_1^2\log^4\frac{1}{\delta}}{\epsilon^2},n_0\right\}
\end{equation}
data samples, in order for Theorem~\ref{thm:efficiency_RFF} to guarantee, with probability greater than $1-\delta$, an error of at most $\epsilon$ between the output of the RFF procedure and the optimal PQC model. As such, in order to fully understand the complexity of RFF-based regression, it is necessary for us to gain a better understanding of $c_1$, which is given by 
\begin{equation}
c_1 \leq 8(4b + 3C + 2\sqrt{C}),
\end{equation}
where $b$ is set by the regression problem (and we can assume to be constant), and $C$ is an upper bound on the RKHS norm of the optimal PQC model, with respect to the kernel used for RFF -- i.e.,  
\begin{equation}
\|f^*_{(\Theta,\mathcal{D},O)}\|_{K_{(\mathcal{D},\doubleyou)}}\leq C.
\end{equation}
Therefore, we see that in order to determine the smallest number of sufficient data samples, it is necessary for us to obtain a concrete upper bound on the RKHS norm of the optimal PQC function with respect to the kernel $K_{(\mathcal{D},\doubleyou)}$. More specifically, we would like to understand, for which PQC architectures, and for which kernels, one obtains 
\begin{equation}
C = \mathcal{O}\left(\mathrm{poly}(d)\right)
\end{equation}
as for any such kernel and architecture, we can guarantee, via Theorem~\ref{thm:efficiency_RFF}, the sample efficiency of the RFF procedure for approximating the optimal PQC model.

Ideally we would like to obtain results and insights which are \textit{problem-independent} and therefore we focus here not on the optimal PQC function (which requires knowing the solution to the problem) but on the maximum RKHS norm over the entire PQC architecture. More specifically, given an architecture  $(\Theta,\mathcal{D},O)$ we would like to place upper bounds on  
\begin{equation}
C_{(\Theta,\mathcal{D},O)}: = \max\left\{\|f\|_{K_{(\mathcal{D},\doubleyou)}}\,|\, f\in \mathcal{F}_{(\Theta,\mathcal{D},O)}\right\},
\end{equation}
as this clearly provides an upper bound on $\|f^*_{(\Theta,\mathcal{D},O)}\|_{K_{(\mathcal{D},\doubleyou)}}$ for \textit{any} regression problem. 

We start off with an alternative definition of the RKHS norm, which turns out to be much more convenient to work with than the one we have previously encountered.

\begin{lemma}[Alternative definition of RKHS norm -- Adapted from Theorem 4.21 in Ref.~\cite{SVMbook}]\label{lem:altRKHS_norm} Given some kernel $K\colon\mathcal{X}\times\mathcal{X}\rightarrow\mathbb{R}$ defined via 
\begin{equation}
K(x,x') = \langle \phi(x),\phi(x')\rangle,
\end{equation}
for some feature map $\phi\colon\mathcal{X}\rightarrow \mathcal{X}'$, one has that
\begin{equation}\label{eq:altRKHSnorm}
\|f\|_{K} = \inf\, \{\|v\|_2\,|\, v\in \mathcal{X}' \text{ such that } f(\cdot) = \langle v,\phi(\cdot)\rangle \}
\end{equation}
for all $f\in \mathcal{H}_K$.
\end{lemma}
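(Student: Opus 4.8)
The plan is to identify the RKHS norm of $f$ with the \emph{quotient norm} induced by $\phi$, by exhibiting an explicit minimal-norm preimage of $f$ under the map that sends a feature-space vector to the function it represents. First I would introduce the linear map $V\colon \mathcal{X}'\to\mathbb{R}^{\mathcal{X}}$ defined by $(Vv)(\cdot) = \langle v,\phi(\cdot)\rangle$, so that the set appearing in the infimum of Eq.~\eqref{eq:altRKHSnorm} is exactly the affine preimage $\{v : Vv = f\}$. The crux of the argument lives on the dense subset $\mathcal{H}_0 := \mathrm{span}\{K_x : x\in\mathcal{X}\}$.

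For $f = \sum_i \alpha_i K_{x_i}\in\mathcal{H}_0$, the identity $K(x_i,\cdot) = \langle\phi(x_i),\phi(\cdot)\rangle$ gives $f = Vv_f$ with $v_f := \sum_i\alpha_i\phi(x_i)$, and the definition of the RKHS inner product in Eq.~\eqref{eq:RKHS_inner} yields
\begin{equation}
\|f\|_K^2 = \sum_{i,j}\alpha_i\alpha_j K(x_i,x_j) = \Big\langle \sum_i\alpha_i\phi(x_i),\sum_j\alpha_j\phi(x_j)\Big\rangle = \|v_f\|_2^2,
\end{equation}
so $\|f\|_K = \|v_f\|_2$ for this particular preimage. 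Next I would argue that $v_f$ is the \emph{minimal-norm} preimage: since $v_f\in\overline{\mathrm{span}}\{\phi(x):x\in\mathcal{X}\}$ while $\ker V = \{v:\langle v,\phi(x)\rangle = 0\ \forall x\} = (\overline{\mathrm{span}}\{\phi(x)\})^{\perp}$, we have $v_f\perp\ker V$. Any other preimage $v'$ satisfies $v'-v_f\in\ker V$, so by the Pythagorean theorem $\|v'\|_2^2 = \|v_f\|_2^2 + \|v'-v_f\|_2^2 \geq \|v_f\|_2^2$. This establishes $\inf\{\|v\|_2 : Vv = f\} = \|v_f\|_2 = \|f\|_K$ on $\mathcal{H}_0$, with the infimum attained.

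Finally, to extend to all $f\in\mathcal{H}_K$, I would pass to the limit. Taking a Cauchy sequence $(f_n)\subset\mathcal{H}_0$ with $f_n\to f$ in $\|\cdot\|_K$, the minimal-norm preimages $v_{f_n}$ all lie in $\overline{\mathrm{span}}\{\phi(x)\}$, so their differences are themselves minimal-norm preimages and $\|v_{f_n}-v_{f_m}\|_2 = \|f_n - f_m\|_K$; hence $(v_{f_n})$ is Cauchy and converges to some $v_f$ with $Vv_f = f$ and $\|v_f\|_2 = \|f\|_K$, and the orthogonality argument again gives minimality. The main obstacle is precisely this limiting step: one must verify that the limit $v_f$ is genuinely a preimage (i.e.\ that $V$ is continuous, so pointwise values are preserved under the limit) and invoke completeness of $\mathcal{X}'$ to secure convergence of $(v_{f_n})$. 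In the setting of this paper, $\mathcal{X}' = \mathbb{R}^{|\tilde{\Omega}_\mathcal{D}|}$ is finite-dimensional, so completeness is automatic and the infimum is always attained as a minimum, which removes this difficulty entirely.
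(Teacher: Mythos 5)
Your proof is correct, but note that the paper itself does not prove Lemma~\ref{lem:altRKHS_norm} at all: it imports the statement wholesale from Theorem~4.21 of Ref.~\cite{SVMbook}, so there is no in-paper argument to compare against. Measured against that cited textbook proof, your route is genuinely different in structure. Steinwart--Christmann define the space $H \coloneqq \mathrm{im}(V)$ (with $V v = \langle v,\phi(\cdot)\rangle$) equipped with the quotient norm $\inf\{\|v\|_2 : Vv = f\}$, verify that $H$ is a Hilbert space whose reproducing kernel is $K$, and then invoke uniqueness of the RKHS associated to a kernel. You instead start from the paper's definition of $\mathcal{H}_K$ as the completion of $\mathrm{span}\{K_x\}$ and show directly that its norm agrees with the quotient norm: the canonical preimage $v_f = \sum_i \alpha_i \phi(x_i)$ is isometric to $f$ on the dense subspace, it lies in $\overline{\mathrm{span}}\{\phi(x)\} = (\ker V)^{\perp}$, so Pythagoras makes it the minimal-norm preimage, and a Cauchy-sequence argument (using that RKHS convergence implies pointwise convergence, and Cauchy--Schwarz for continuity of $v \mapsto \langle v,\phi(x)\rangle$) extends the isometry to all of $\mathcal{H}_K$. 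Both arguments hinge on the same underlying isometry $\mathcal{H}_K \cong (\ker V)^{\perp}$. Your version is more elementary and fits the paper's own definition of the RKHS without needing the uniqueness theorem; the textbook version buys something slightly stronger that your argument does not give, namely that $\mathrm{im}(V)$ is \emph{exactly} $\mathcal{H}_K$ (you only establish $\mathcal{H}_K \subseteq \mathrm{im}(V)$, which is all the lemma as stated requires, since it quantifies over $f \in \mathcal{H}_K$). Your closing observation is also apt: in the paper's application $\mathcal{X}' = \mathbb{R}^{|\tilde{\Omega}_\mathcal{D}|}$ is finite-dimensional, so the completeness and attainment issues you flag as the main obstacle disappear, and the infimum in Eq.~\eqref{eq:altRKHSnorm} is always a minimum (consistent with Observation~\ref{obs:uniqueness}, where the preimage is even unique).
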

In words, Lemma~\ref{lem:altRKHS_norm} says that the RKHS norm is defined as the \textit{infimum} over the 2-norms of all hyperplanes in feature space which realize $f$ - i.e.,  the infimum over $\|v\|_2$ for all $v$ such that $f(x) = \langle v, \phi(x)\rangle$. We stress that in general functions in the reproducing kernel Hilbert space \textit{do not} have a unique hyper-plane representation with respect to the feature map. However, as detailed in Observation~\ref{obs:uniqueness} below, for PQC feature maps and data-encoding strategies giving rise to integer frequency vectors (such as encoding strategies using only Pauli Hamiltonians~\cite{data_encoding}), the hyperplane representation is indeed \textit{unique}.

\begin{observation}[Hyperplane uniqueness for integer frequencies]\label{obs:uniqueness} Let $\mathcal{D}$ be an encoding strategy for which ${\Omega^{+}_\mathcal{D}\subset\mathbb{Z}^{d}}$. In this case, one has that
\begin{equation}
\left\{1,\cos(\langle\omega_1, x\rangle),\sin(\langle\omega_1 x\rangle),\ldots,  \cos(\langle\omega_{|\Omega^{+}_\mathcal{D}|} ,x\rangle),\sin(\langle\omega_{|\Omega^{+}_\mathcal{D}|}, x\rangle)\right\}
\end{equation}
is a mutually orthogonal set of functions. Therefore for any strictly positive re-weighting $\doubleyou$, and any $u,v \in \mathbb{R}^{|\tilde{\Omega}_{\mathcal{D}}|}$, if ${f(\cdot) = \langle v,\phi_{(\mathcal{D},\doubleyou)}(\cdot)\rangle}$ and $f(\cdot) = \langle u,\phi_{(\mathcal{D},\doubleyou)}(\cdot)\rangle$ then $u=v$. Specifically, there exists only one hyperplane in feature space which realizes $f$. As a consequence one has, via Lemma~\ref{lem:altRKHS_norm}, that if $f(\cdot) = \langle v,\phi_{(\mathcal{D},\doubleyou)}(\cdot)\rangle$ then
\begin{equation}
\|f\|_{K_{(\mathcal{D},\doubleyou)}} = \|v\|_2.
\end{equation}
\end{observation}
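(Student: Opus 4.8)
The plan is to prove the orthogonality claim first, since both the uniqueness of the hyperplane representation and the identification of the RKHS norm follow from it almost immediately. I would equip the space of functions on $\mathcal{X} = [0,2\pi)^d$ with the $L^2$ inner product $\langle g,h\rangle = \int_{\mathcal{X}} g(x)h(x)\,\diff x$ taken with respect to the uniform (Lebesgue) measure, and verify that the $2|\Omega^{+}_\mathcal{D}|+1$ listed functions are pairwise orthogonal. The workhorse is the elementary identity that for any integer vector $\eta\in\mathbb{Z}^d$ one has $\int_{\mathcal{X}}\cos(\langle\eta,x\rangle)\,\diff x = (2\pi)^d$ when $\eta=0$ and $0$ otherwise, while $\int_{\mathcal{X}}\sin(\langle\eta,x\rangle)\,\diff x = 0$ always. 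Expanding each product $\cos\cos$, $\sin\sin$, and $\cos\sin$ via the product-to-sum formulas converts every term into an integral of $\cos(\langle\omega_i\pm\omega_j,x\rangle)$ or $\sin(\langle\omega_i\pm\omega_j,x\rangle)$, which I then evaluate termwise using this identity.

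The essential structural input — which I would state explicitly — is that the frequencies are well separated. For $i\geq 1$ each $\omega_i$ is nonzero and the $\omega_i$ are distinct, and because the mirror pairs were split so that $\Omega_\mathcal{D}\cap(-\Omega_\mathcal{D}) = \{\omega_0\}$, no two distinct frequencies satisfy $\omega_i = -\omega_j$. Consequently $\omega_i\pm\omega_j$ is a \emph{nonzero} integer vector whenever $i\neq j$, and likewise $2\omega_i\neq 0$. Therefore every cross term, as well as every diagonal $\cos\sin$ term, integrates to zero, and only the diagonal $\cos^2$, $\sin^2$, and constant contributions survive (with squared norms $(2\pi)^d/2$ and $(2\pi)^d$). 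This is exactly where integrality is used: without $\Omega^{+}_\mathcal{D}\subset\mathbb{Z}^d$ these functions need not be orthogonal on $[0,2\pi)^d$, since the off-diagonal integrals would not vanish.

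Given orthogonality, the functions are in particular linearly independent, and since $\doubleyou$ has no zero entries the components of $\phi_{(\mathcal{D},\doubleyou)}$ are nonzero rescalings of these same functions and hence remain linearly independent. Thus if $\langle u,\phi_{(\mathcal{D},\doubleyou)}(\cdot)\rangle = \langle v,\phi_{(\mathcal{D},\doubleyou)}(\cdot)\rangle$ as functions, then $\langle u-v,\phi_{(\mathcal{D},\doubleyou)}(\cdot)\rangle$ is the zero function; expanding it and matching coefficients against the linearly independent functions forces each component of $u-v$, multiplied by its strictly positive weight $\doubleyou_i$, to vanish, whence $u=v$. This gives uniqueness of the hyperplane representation. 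Finally, since there is then exactly one $v$ with $f(\cdot)=\langle v,\phi_{(\mathcal{D},\doubleyou)}(\cdot)\rangle$, the infimum in Lemma~\ref{lem:altRKHS_norm} is taken over a single point, so $\|f\|_{K_{(\mathcal{D},\doubleyou)}} = \|v\|_2$.

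The computations here are all routine, so I do not expect a serious obstacle; the one step I would be most careful about is confirming that the mirror-pair splitting genuinely excludes $\omega_i=-\omega_j$ for $i\neq j$, since this fact — together with integrality — is precisely what guarantees the off-diagonal integrals vanish and hence underlies the entire argument. Everything else is bookkeeping with the product-to-sum identities.
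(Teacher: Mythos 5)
Your proposal is correct and takes essentially the same route as the paper: the paper presents this as an observation with no separate proof, relying on exactly the chain you spell out — integrality of the frequencies gives $L^2$-orthogonality of the listed trigonometric functions over $[0,2\pi)^d$, orthogonality together with the strict positivity of $\doubleyou$ gives uniqueness of the hyperplane representation, and uniqueness collapses the infimum in Lemma~\ref{lem:altRKHS_norm} to $\|v\|_2$. The one detail you rightly flag as load-bearing — that the mirror-pair splitting $\Omega_\mathcal{D}\cap(-\Omega_\mathcal{D})=\{\omega_0\}$ excludes $\omega_i=-\omega_j$ for $i\neq j$, so all off-diagonal integrals vanish — is exactly the structural fact the paper's statement tacitly uses, and your product-to-sum computation supplies the routine verification the paper omits.
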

Additionally, we note that for any such encoding strategy, the Fourier transform of any PQC function $f\in\mathcal{F}_{(\Theta,\mathcal{D},O)}$ will immediately yield the hyper-plane $v$ such that $f(\cdot) = \langle v, \phi_{(\mathcal{D},\doubleyou)}(\cdot)\rangle$ for the \textit{uniform} weight vector. The hyper-plane representation with respect to any other weight vector (from which the RKHS norm can be calculated) can then be extracted by re-scaling the hyper-plane components appropriately.
With the above insights in hand, we can do some examples to gain intuition into the behaviour of the RKHS norm.

\begin{example}\label{ex:peaked_with_uniform} Given a data-encoding strategy $\mathcal{D}$, and the uniform weight vector 
\begin{equation}
\doubleyou = \frac{1}{\sqrt{|\Omega_{\mathcal{D}}|}}(1,\ldots,1), 
\end{equation}
consider the function $f(x) = \cos(\langle\omega_1, x\rangle)$. In this case one has that $f(\cdot) = \langle v,\phi_{(\mathcal{D},\doubleyou)}(\cdot)\rangle$ with
\begin{equation}
v = \left(0,\sqrt{|\Omega_{\mathcal{D}}|},0,\ldots,0\right),
\end{equation}
and therefore $\|f\|_{K_{(\mathcal{D},\doubleyou)}} \leq \sqrt{|\Omega_{\mathcal{D}}|}$, with an equality in the case of encoding strategies with integer frequency vectors. Note that we obtain the same result for $f(x) = \cos(\langle\omega ,x\rangle)$ and $f(x) = \sin(\langle\omega, x\rangle)$ for any $\omega\in\Omega_\mathcal{D}$. 
\end{example}

\begin{example}\label{ex:peaked_with_peaked} Let us consider the same function as in Example~\ref{ex:peaked_with_uniform} -- i.e.,  $f(x) = \cos(\langle\omega_1, x\rangle)$ -- but this time let us consider the weight vector $\doubleyou = (0,1,0,\ldots,0)$. In this case one has $f(\cdot) = \langle v,\phi_{(\mathcal{D},\doubleyou)}(\cdot)\rangle$ with
\begin{equation}
v = \left(0,1,0\ldots,0\right),
\end{equation}
and therefore $\|f\|_{K_{(\mathcal{D},\doubleyou)}} \leq 1$, with an equality in the case of encoding strategies with integer frequency vectors. We again get the same result for  $f(x) = \cos(\langle\omega, x\rangle)$ and $f(x) = \sin(\langle\omega, x\rangle)$ for any $\omega\in\Omega_\mathcal{D}$, if one uses the weight vector with $\doubleyou_\omega = 1$.
\end{example}

\begin{example}\label{ex:uniform_with_uniform} Given a data-encoding strategy $\mathcal{D}$, and the uniform weight vector 
\begin{equation}
\doubleyou = \frac{1}{\sqrt{|\Omega_{\mathcal{D}}|}}(1,\ldots,1), 
\end{equation}
consider the function
\begin{equation}
f(x) = \frac{1}{|\Omega_\mathcal{D}|}\sum_{\omega\in\Omega_\mathcal{D}}\cos(\langle\omega, x\rangle).
\end{equation}
In this case one has 
$f(\cdot) = \langle v,\phi_{(\mathcal{D},\doubleyou)}(\cdot)\rangle$ with
\begin{equation}
v= \frac{1}{\sqrt{|\Omega_\mathcal{D}| }}\left(1,1,0,1,0\ldots,1,0\right),
\end{equation}
and, therefore, $\|f\|_{K_{(\mathcal{D},\doubleyou)}} \leq 1$, with an equality in the case of encoding strategies with integer frequency vectors.
\end{example}
Given these examples, we can extract the following important observations:

\begin{figure}[t]
\begin{center}
\includegraphics[width=\textwidth]{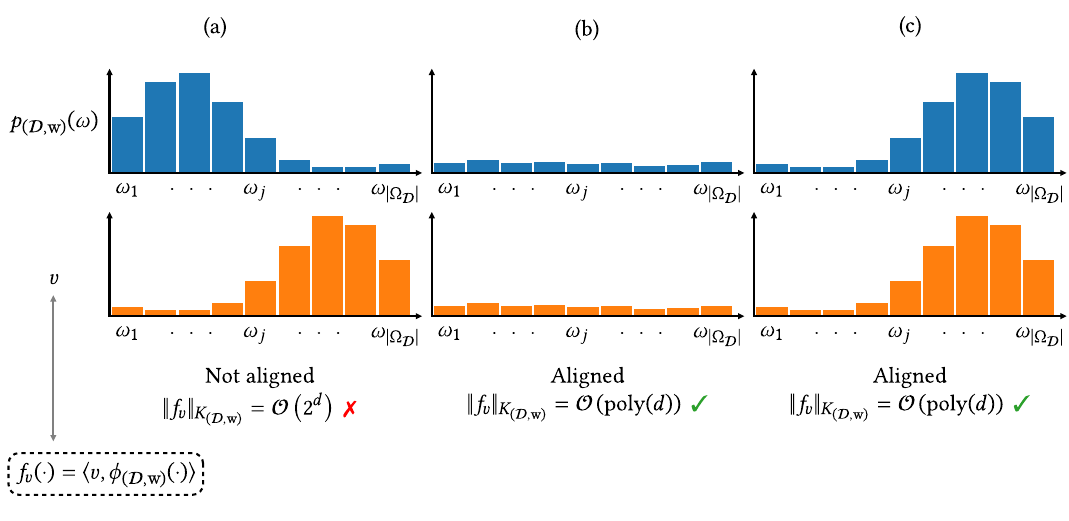} 
\caption{A graphical illustration of the conditions necessary for the RKHS norm of a function to scale polynomially in $d$. At a high level, we see that for a function $f_v(\cdot) = \langle v,\phi_{(\mathcal{D},\doubleyou)}(\cdot)\rangle$, the hyperplane vector $v$ (equivalently frequency spectrum of $f$), needs to be sufficiently well-aligned with the re-weighting vector $\doubleyou$, which determines the kernel $K_{(\mathcal{D},\doubleyou)}$ with respect to which the RKHS norm is taken.}
\label{fig:RKHSnorm}
\end{center}
\end{figure}

\begin{enumerate}
\item As per Example~\ref{ex:peaked_with_uniform}, there exist functions, and re-weighted PQC kernels, for which the RKHS norm of the function scales  with the number of frequencies in $\Omega_\mathcal{D}$, and therefore exponentially in $d$. As such, we \textit{cannot} hope to place a universally applicable (architecture independent) polynomial (in $d$) upper bound on $C_{(\Theta,\mathcal{D},O)}$. On the contrary, as per Examples~\ref{ex:peaked_with_peaked} and~\ref{ex:uniform_with_uniform} there do exist functions and reweightings for which the RKHS norm is \textit{constant}. Therefore, while we cannot hope to place an architecture-independent upper bound on the RKHS norm, it may be the case that there exist specific circuit architectures and kernel re-weightings for which $C_{(\Theta,\mathcal{D},O)}$ is upper bounded by a polynomial in $d$. Note that this can be interpreted as an \textit{expressivity constraint} on $(\Theta,\mathcal{D},O)$, as the more expressive an architecture is, the more likely it contains a function with large RKHS norm (with this likelihood becoming a certainty for the case of universal architectures).
\item By comparing Examples~\ref{ex:peaked_with_uniform} and~\ref{ex:peaked_with_peaked} we see that, as expected, the RKHS norm of  a function depends strongly on the reweighting which defines the kernel. In particular, the same function can have a very different RKHS norm with respect to different feature map re-weightings.
\item By looking at all examples together, we see that informally, what seems to determine the RKHS norm of a given function $f_v$ is the ``alignment'' between (a) the frequency distribution of the function $f_v$, i.e.,  the components of the vector $v$, and (b) the re-weighting vector $\doubleyou$ (or alternatively, the probability distribution $p_{(\mathcal{D},\doubleyou)}$). In particular, in Example~\ref{ex:peaked_with_uniform}, the frequency representation of the function $f$ is peaked on a single frequency, whereas the probability distribution $p_{(\mathcal{D},\doubleyou)}$ is uniform over all frequencies. In this example, $v$ and $p_{(\mathcal{D},\doubleyou)}$ are \textit{non-aligned}, and we find that the RKHS norm of $f$ with respect to $K_{(\mathcal{D},\doubleyou)}$ scales exponentially in $D$. On the contrary, in both Examples~\ref{ex:peaked_with_peaked} and~\ref{ex:uniform_with_uniform} we have that the frequency representation of $f$ is well aligned with the probability distribution $p$, and we find that we can place a \textit{constant} upper bound on the RKHS norm of the function. This is illustrated in Figure~\ref{fig:RKHSnorm}.
\item At an intuitive level, one should expect the ``alignment'' between the frequency representation of a target function and the probability distribution associated with the kernel to play a role in the complexity of RFF. Informally, in order to learn an approximation of the function $f_v$ via RFF, when constructing the approximate kernel via frequency sampling we need to sample frequencies present in $v$. Therefore, if the distribution is supported mainly on frequencies \textit{not} present in $v$, we cannot hope to achieve a good approximation via RFF. On the contrary, if the distribution is supported on frequencies present in $v$, with the correct weighting, then we can hope to approximate $f_v$ using our approximate kernel. In this sense, our informal observation that the RKHS norm depends on the alignment of target function with kernel probability distribution squares well with our intuitive understanding of RFF-based linear regression. We will make this intuition much more precise in Section~\ref{ss:lower_bounds}.
\end{enumerate}

\textbf{Summary:} In order for the statement of Theorem~\ref{thm:efficiency_RFF} to imply that a polynomial number of data samples is sufficient, we require that $\|f^*_{(\Theta,\mathcal{D},O)}\|_{K_{(\mathcal{D},\doubleyou)}}$, the RKHS norm of the optimal PQC function, scales polynomially with respect to $d$. Unfortunately, in the worst case, $\|f^*_{(\Theta,\mathcal{D},O)}\|_{K_{(\mathcal{D},\doubleyou)}}$ can scale exponentially with respect to $d$, and therefore we cannot hope for efficient dequantization of variational QML via RFF for \textit{all} possible circuit architectures. However, given a specific circuit architecture, and a re-weighting which leads to a distribution $p_{(\mathcal{D},\doubleyou)}$ with an efficient sampling algorithm, it may be the case that $C_{(\Theta,\mathcal{D},O)}$ scales polynomially in $d$, in which case Theorem~\ref{thm:efficiency_RFF} yields a meaningful sample complexity for the RFF dequantization of optimization of $(\Theta,\mathcal{D},O)$ for \textit{any} regression problem $P$. Unfortunately however, it seems unlikely that an expressive circuit architecture will not contain \textit{any} functions with large RKHS norm. Ultimately though, all that is required by Theorem~\ref{thm:efficiency_RFF} is that the RKHS norm of the \textit{optimal} PQC function scales polynomially in $d$, and this may be the case even when $C_{(\Theta,\mathcal{D},O)}$ scales superpolynomially. Unfortunately, given a regression problem $P$ it is not clear how to assess the RKHS norm of the optimal PQC function without knowing this function in advance, which seems to require running the PQC optimization.

\subsection{Lower bounds for RFF efficiency}\label{ss:lower_bounds}

By this point we have seen that the following is \textit{sufficient} for Theorem~\ref{thm:efficiency_RFF} to imply the efficient dequantization of variational QML via RFF-based linear regression:

\newpage

\begin{enumerate}
\item We need to be able to efficiently sample from $p_{(\mathcal{D},\doubleyou)}$.
\item The distribution $p_{(\mathcal{D},\doubleyou)}$ needs to be sufficiently concentrated. In particular, we need $p_{\text{max}} = \Omega(1/\mathrm{poly}(d))$ in order to place a sufficiently strong lower bound on the operator norm of the kernel integral operator. 
\item The frequency representation of the optimal PQC function needs to be ``well aligned'' with the probability distribution $p_{(\mathcal{D},\doubleyou)}$. This is required to ensure a sufficiently strong upper bound on the RKHS norm of the optimal PQC function.
\end{enumerate}

It is clear that point 1 above is a \textit{necessary} criterion for efficient dequantization via RFF. However, it is less clear to which extent points 2 and 3 are necessary. In particular, it could be the case that the bounds provided by Theorem~\ref{thm:efficiency_RFF} are not tight, and that efficient PQC dequantization via RFF is possible even when not guaranteed by Theorem~\ref{thm:efficiency_RFF}. In this section we address this issue to some extent, by proving \textit{lower bounds} on the complexity of RFF which show that both points 2 and 3 are also necessary conditions in some sense, when using certain encoding strategies.
To be more specific, Theorem~\ref{thm:efficiency_RFF} provides sufficient conditions for the output of RFF-based linear regression to be (a) with high probability, (b) no more than $\epsilon$ worse than the optimal PQC function with respect to true risk. In this section we show necessary conditions -- when using encoding strategies with integer frequencies -- to ensure that (a) the expected, (b) $L^2$-norm difference, is small between the output of RFF-based linear regression and the optimal PQC function. As such, the necessary conditions we provide here are different from the sufficient conditions of Theorem~\ref{thm:efficiency_RFF} in the following ways:
\begin{enumerate}
    \item They ensure that the $L^2$-norm difference is small between the output of RFF-regression and the optimal PQC function, as opposed to the true risk difference.
    \item They ensure the above difference is small \textit{in expectation}, as opposed to with high probability.
    \item They apply only to circuit architectures using encoding strategies with integer-valued frequency vectors (such as those using Pauli word Hamiltonians).
\end{enumerate}

To make this more precise, we need to introduce some additional notation. We consider a data encoding strategy, giving rise to an integer-valued frequency set $\tilde{\Omega}_\mathcal{D}$, as well as a weight vector $\doubleyou$, giving rise to the sampling distribution $p_{(\mathcal{D},\doubleyou)}$, which we abbreviate as $p$. Now, given a regression problem, we define the following notions:
\begin{enumerate}
\item Let $f^*_{(\Theta,\mathcal{D},O)}$ represent the optimal PQC model. In this section, for convenience we abbreviate this as $f^*$.
\item As per Section~\ref{ss:RFF} and the description of Algorithm~\ref{alg:RFF} in Section~\ref{s:PQCvsRFF}, we consider running RFF regression by sampling $M$ frequencies from the distribution $\pi = p\times \mu$. Let $\vec{\omega}=(\omega_1,\ldots,\omega_M)\in \Omega^M_\mathcal{D}$ be the random variable of $M$ frequencies sampled from $p$, and let $g_{\vec{\omega}}$ be the output of linear regression using these frequencies to approximate the kernel $K_{(\mathcal{D},\doubleyou)}$.
\end{enumerate}

As discussed above, in this section we are concerned with lower bounding the \textit{expected} $L^2$-norm of the difference between the optimal PQC function and the output of the RFF procedure, with respect to multiple runs of RFF-based linear regression. In particular, we want to place lower bounds on the quantity
\begin{align}
\hat{\epsilon} \coloneqq & \underset{\vec{\omega}\sim p^M}{\mathbb{E}} \|f^* - g_{\vec{\omega}}\|^2_2 \\
=&\sum_{\vec{\omega}\in \Omega^M_\mathcal{D}}\|f^* - g_{\vec{\omega}}\|_2^2\xi(\vec{\omega}),
\nonumber
\end{align}
where $\xi = p^M$, i.e.,  $\xi(\vec{\omega}) = p(\omega_1)\times\ldots\times p(\omega_M)$. In order to lower bound $\hat{\epsilon}$, recall from Section~\ref{ss:PQC} that $f^{*}$ can be written as
\begin{equation}
f^*(x) = \sum_{\omega\in\tilde{\Omega}_\mathcal{D}}\hat{f}^*(\omega)e^{i\langle\omega, x\rangle}.
\end{equation}
We abuse notation slightly and use the notation $\hat{f}^*$ to denote the vector with entries $\hat{f}^*(\omega)$. Finally, we denote by $p_\text{max}$ the maximum probability in $p$. With this in hand, we have the following lemma (whose proof can be found in Appendix~\ref{app:lower}):

\begin{lemma}[Lower bound on average relative error]\label{lem:lower_bound} Given any encoding strategy $\mathcal{D}$ for which all $\omega\in\Omega_\mathcal{D}$ have only integer-valued components, the expected $L^2$-norm of the difference between the optimal PQC function and the output of RFF-based linear regression can be lower bounded as
\begin{align}
    \hat{\epsilon} 
    &\geq (2\pi)^d\|\hat{f}^*\|_2^2 -  (2\pi)^d2M \sum_{\omega \in \Omega_\mathcal{D}} |\hat{f}^*(\omega)|^2 p(\omega)\label{eq:lower_overlap}
    \\
    \nonumber
    &\geq (2\pi)^d\|\hat{f}^*\|_2^2 -  (2\pi)^d2M \sum_{\omega \in \Omega_\mathcal{D}} |\hat{f}^*(\omega)|^2 p_{\max}\\
    &= \|f^*\|_2^2\left(1-2Mp_{\max}\right).\label{eq:lower_pmax}
\end{align}
\end{lemma}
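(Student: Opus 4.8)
The plan is to discard all details of the regression procedure and instead lower bound the error by the \emph{approximation-theoretic} obstruction: whatever $g_{\vec{\omega}}$ is, its Fourier support is confined to the sampled frequencies, so it cannot reconstruct the part of $f^*$ supported elsewhere. Concretely, since the RFF feature map of Algorithm~\ref{alg:RFF} consists of the functions $\psi(x,\nu_i)=\sqrt{2}\cos(\langle\omega_i,x\rangle+\gamma_i)$, and each such function expands via angle addition as a linear combination of $\cos(\langle\omega_i,x\rangle)$ and $\sin(\langle\omega_i,x\rangle)$, the output $g_{\vec{\omega}}=\langle v,\tilde{\phi}_M(\cdot)\rangle$ of (regularized) linear regression lies in the subspace $V_{\vec{\omega}}=\mathrm{span}\{\cos(\langle\omega_i,\cdot\rangle),\sin(\langle\omega_i,\cdot\rangle)\mid i\in[M]\}$, whose Fourier support is contained in $S_{\vec{\omega}}:=\{\pm\omega_1,\ldots,\pm\omega_M\}$. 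First I would therefore note that for \emph{any} element of $V_{\vec{\omega}}$, and in particular for $g_{\vec{\omega}}$, the $L^2$ error is at least the error of the orthogonal projection $\Pi_{\vec{\omega}}f^*$ of $f^*$ onto $V_{\vec{\omega}}$, i.e. $\|f^*-g_{\vec{\omega}}\|_2^2\geq\|(I-\Pi_{\vec{\omega}})f^*\|_2^2$.

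The second step uses the integer-frequency hypothesis. By the orthogonality of the trigonometric functions recorded in Observation~\ref{obs:uniqueness}, the family $\{e^{i\langle\omega,\cdot\rangle}\}_{\omega\in\tilde{\Omega}_\mathcal{D}}$ is orthogonal in $L^2([0,2\pi)^d)$, so Parseval gives $\|f^*\|_2^2=(2\pi)^d\|\hat{f}^*\|_2^2$ and, more importantly, the projection error is simply the energy of $f^*$ on the unsampled frequencies,
\[
\|(I-\Pi_{\vec{\omega}})f^*\|_2^2=(2\pi)^d\sum_{\omega\in\tilde{\Omega}_\mathcal{D}}|\hat{f}^*(\omega)|^2\,\mathds{1}[\omega\notin S_{\vec{\omega}}].
\]
This is the crucial point at which the details of linear regression drop out entirely: no choice of coefficients can recover spectral weight on frequencies that were never sampled, so the displayed quantity is a valid (pointwise in $\vec{\omega}$) lower bound on $\|f^*-g_{\vec{\omega}}\|_2^2$.

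Finally I would take the expectation over $\vec{\omega}\sim p^M$ and push it through the sum by linearity, replacing each indicator by the probability that the frequency is missed. A frequency $\omega\in\tilde{\Omega}_\mathcal{D}$ lies in $S_{\vec{\omega}}$ exactly when its positive representative in $\Omega_\mathcal{D}$ is drawn in at least one of the $M$ independent rounds, which occurs with probability $1-(1-p(\omega))^M\leq Mp(\omega)$; hence $\mathbb{E}\,\mathds{1}[\omega\notin S_{\vec{\omega}}]=(1-p(\omega))^M\geq 1-Mp(\omega)$ by Bernoulli's inequality. Substituting, using $|\hat{f}^*(\omega)|^2=|\hat{f}^*(-\omega)|^2$ (since $f^*$ is real) to fold the mirror pairs onto $\Omega_\mathcal{D}$ — which produces the factor $2$ on the paired terms — and bounding the single $\omega_0$ contribution crudely by the same $2M$ factor, yields the first inequality~\eqref{eq:lower_overlap}. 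The second inequality follows by replacing $p(\omega)$ with $p_{\max}$, and the final line follows from $\sum_{\omega\in\Omega_\mathcal{D}}|\hat{f}^*(\omega)|^2\leq\sum_{\omega\in\tilde{\Omega}_\mathcal{D}}|\hat{f}^*(\omega)|^2=\|\hat{f}^*\|_2^2$ together with Parseval.

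I expect the main obstacle to be the clean decoupling in the first two steps — namely, arguing rigorously that the regression output, for any regularization and any draw, can be replaced by the orthogonal projection onto $V_{\vec{\omega}}$, and that the resulting projection error is \emph{exactly} the unsampled spectral energy. This is where the integer-frequency assumption is essential: it is what makes the trigonometric basis orthogonal and Parseval applicable, so that the energy of $f^*$ splits cleanly into a ``sampled'' and an ``unsampled'' part; without it the cross terms between distinct frequencies need not vanish and no such clean split exists. The remaining probabilistic bookkeeping (the Bernoulli/union bound and the factor-of-two accounting for the $\pm\omega$ pairing and the $\omega_0$ term) is routine.
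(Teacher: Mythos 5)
Your proposal is correct and follows essentially the same route as the paper's proof: both arguments rest on the observation that $g_{\vec{\omega}}$ has Fourier support confined to the sampled frequencies and their negatives, lower bound the error by the unsampled spectral energy via orthogonality (valid by the integer-frequency assumption) and Parseval, and then bound the expected sampled energy by a union bound over the $M$ draws. The only difference is presentational — you phrase the Pythagorean step as orthogonal projection onto $V_{\vec{\omega}}$ and the probabilistic step as $\mathbb{E}\,\mathds{1}[\omega\notin S_{\vec{\omega}}]=(1-p(\omega))^M\geq 1-Mp(\omega)$, whereas the paper splits $\hat{f}^*=\hat{f}^*_{\vec{\omega}}+\hat{f}^*_{/\vec{\omega}}$ explicitly and manipulates the sum over $\vec{\omega}\in\Omega_{\mathcal{D}}^M$ directly — but the underlying bound is identical.
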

Using this Lemma, we can now see that, at least for some class of regression problems, both concentration of the probability distribution $p$, and ``alignment'' of the frequency representation of the optimal function with $p$, are \textit{necessary} conditions to achieve a small \textit{expected} relative error $\hat{\epsilon}$, when using encoding strategies with integer-valued frequencies.

\textbf{Concentration of $p$:} To this end, note that we can rewrite Eq.~\eqref{eq:lower_pmax} as
\begin{equation}\label{eq:sample_lower}
M \geq \frac{1}{2p_\text{max}}\left(1-\frac{\hat{\epsilon}}{\|f^*\|_2^2}\right).
\end{equation}
Given this, we see that when $(1-\hat{\epsilon}/\|f^*\|_2^2) = \Omega(1)$ then one requires ${M= \Omega(1/p_\text{max})}$ frequency samples to achieve expected relative error $\hat{\epsilon}$. We stress however that the condition $(1-\hat{\epsilon}/\|f^*\|_2^2) = \Omega(1)$ will \textit{not} always be satisfied. More specifically, one requires that $\|f^*\|_2^2 \geq c + \hat{\epsilon}$ for all $d$, for some \textit{constant} $c$. In particular, we note that the bound of Eq.~\eqref{eq:sample_lower} is vacuous whenever $\|f^*\|^2_2 \leq \hat{\epsilon}$.
However when $(1-\hat{\epsilon}/\|f^*\|_2^2) = \Omega(1)$ is satisfied, the RFF procedure \textit{cannot} be efficient whenever $p_\text{max}$ is a negligible function -- i.e.,  decays faster than any inverse polynomial. More specifically, when $p$ is not sufficiently concentrated, one will require super-polynomially many samples $M$.  Recall from Section~\ref{ss:kern_int_op} that for all product-induced distributions, including the uniform distribution, $p_\text{max}$ is a negligible function of $d$ - and therefore we \textit{cannot} achieve efficient RFF dequantization of variational QML via any re-weighting giving rise to such a distribution, when $(1-\hat{\epsilon}/\|f^*\|_2^2) = \Omega(1)$ is satisfied. Complimenting the lower bound, we recall that as discussed in Section~\ref{ss:kern_int_op}, whenever $p_{\text{max}} = \Omega(1/\mathrm{poly}(d))$ -- i.e., when $p$ is sufficiently concentrated -- then one \textit{can} apply Theorem~\ref{thm:efficiency_RFF} to place a polynomial upper bound on $M$.

\textbf{Alignment of $\hat{f}^*$ and $p$:} Note that we can rewrite Eq.~\eqref{eq:lower_overlap} as
\begin{equation}
M \geq \frac{1}{2\sum_{\omega \in \Omega_\mathcal{D}} |\hat{f}^*(\omega)|^2 p(\omega)} \left(\|\hat{f}^*\|^2_2 - \frac{\hat{\epsilon}}{(2\pi)^d}\right).
\end{equation}
Therefore, whenever $\|\hat{f}^*\|^2_2 - \hat{\epsilon}/(2\pi)^d = \Omega(1)$ one has that 
\begin{equation}
M = \Omega\left(\frac{1}{\sum_{\omega \in \Omega_\mathcal{D}} |\hat{f}^*(\omega)|^2 p(\omega)}\right)
\end{equation}
where 
\begin{equation}
\sum_{\omega \in \Omega_\mathcal{D}} |\hat{f}^*(\omega)|^2 p(\omega)
\end{equation}
is the inner product between the frequency vector $\hat{f}^*$ and the probability distribution $p$, which we interpret as the ``alignment'' between the frequency representation and the sampling distribution. We therefore see that the smaller this overlap, the larger number of frequencies are required to achieve a given relative error. Again, we therefore see that ``large alignment'' between $\hat{f}^*$ and the probability distribution $p$ is a necessary condition to achieve a smaller expected relative error.

\section{Discussion and conclusions}\label{s:conclusions}

In this work, we have provided a detailed analysis of classical linear regression with random Fourier features, using re-weighted PQC kernels, as a method for the dequantization of PQC based regression. Intuitively, as discussed in Section~\ref{s:PQCvsRFF}, this method is motivated by the fact that it optimizes over a natural extension of the same function space used by PQC models -- i.e.,  the method has to some extent an  \textit{inductive bias} which is comparable to that of PQC regression. At a very high level, given a PQC architecture $(\Theta,\mathcal{D},O)$ and a regression problem $P\sim \mathcal{X}\times \mathbb{R}$, the method consists of:

\begin{enumerate}
\item Choosing a re-weighting $\doubleyou$ of the PQC feature map, or equivalently, choosing a distribution $p$ over frequencies appearing in $\Omega_\mathcal{D}$.
\item Sampling $M$ frequencies from $p$, and using them to construct an approximation of the PQC feature map.
\item Being given, or sampling, $n$ training data points from $P$ and running regularized linear regression with the approximate feature map.
\end{enumerate}
The main intuitive take-aways from this work are that in order for this dequantization procedure to output a good approximation to the optimal PQC function $f^*$ one needs to sample frequencies present in the multivariate Fourier decomposition of $f^*$, and in order to be able to do this \textit{efficiently} only a polynomial number of these frequencies should be sufficient for a good approximation.

To be more precise, we know that Step 3 above has time and space complexity $\mathcal{O}(nM)$ and $\mathcal{O}(nM^2 + M^3)$, respectively. Given this, the question we have addressed is whether it is possible to use some number of frequencies $M=\mathcal{O}(\mathrm{poly(d))}$ and some number of data samples $n = \mathcal{O}(\mathrm{poly}(d))$, such that, with high probability, the output of classical RFF-based linear regression is guaranteed to achieve a true risk which is no more than $\epsilon$ worse than the output of PQC based optimization. In other words, whether we can \textit{efficiently} dequantize PQC regression via architecture $(\Theta,\mathcal{D},O)$ for regression problem $P$. In order to answer this question we have been interested in obtaining necessary and sufficient conditions on $n$ and $M$ -- in terms of properties of the circuit architecture $(\Theta,\mathcal{D},O)$, re-weighting $\doubleyou$ and regression problem $P$ -- in order to achieve the desired guarantee. To this end we have seen, via Theorem 1 and the subsequent analysis, that RFF-based linear regression with re-weighting $\doubleyou$ provides an efficient dequantization of PQC regression over the circuit architecture $(\Theta,\mathcal{D},O)$, for regression problem $P$, if: 
\newpage
\begin{enumerate}
\item The re-weighting distribution $p_{(\mathcal{D},\doubleyou)}$ is sufficiently concentrated. In particular,  $p_\text{max}$ should decay at most inversely polynomially in $d$.
\item The re-weighting distribution $p_{(\mathcal{D},\doubleyou)}$ is sufficiently ``well-aligned" with the optimal PQC function $f^*$ for the regression problem $P$. Technically, the RKHS norm of the optimal PQC function $\|f^*_{(\Theta,\mathcal{D},O)}\|_{K_{(\mathcal{D},\doubleyou)}}$ should scale polynomially in~{$d$}.
\item There exists an efficient algorithm to sample from $p_{(\mathcal{D},\doubleyou)}$, given as input only the data encoding strategy $\mathcal{D}$.
\end{enumerate}
We stress again the intuitive perspective on the above conditions: In order for the dequantization procedure to output a good approximation to the optimal PQC function $f^*$ we need to sample frequencies present in $f^*$ (condition 2), and in order to be able to do this efficiently only a polynomial number of these frequencies should be sufficient for a good approximation (condition 1), and the sampling procedure itself should be efficient (condition 3). Additionally, we note that in order to satisfy conditions 1 and 2 above, it is necessary that the frequency representation of the optimal PQC function $f^*$ is sufficiently concentrated. If this is not the case, then it is impossible for the re-weighting distribution $p_{(\mathcal{D},\doubleyou)}$ to be both sufficiently concentrated \textit{and} well-aligned with $f^*$. Given this, we have summarized the sufficient conditions given above graphically in Figure.~\ref{fig:RKHS_conditions}. On the other hand, we have seen, via Lemma~\ref{lem:lower_bound}, that these conditions are also \textit{necessary} in a certain sense -- i.e.,  that if they are not satisfied then -- at least for integer-valued encoding strategies and certain regression problems --  RFF-based linear regression will \textit{not} provide an efficient dequantization of PQC regression (on average). 

\begin{figure}
\begin{center}
\includegraphics[width=\textwidth]{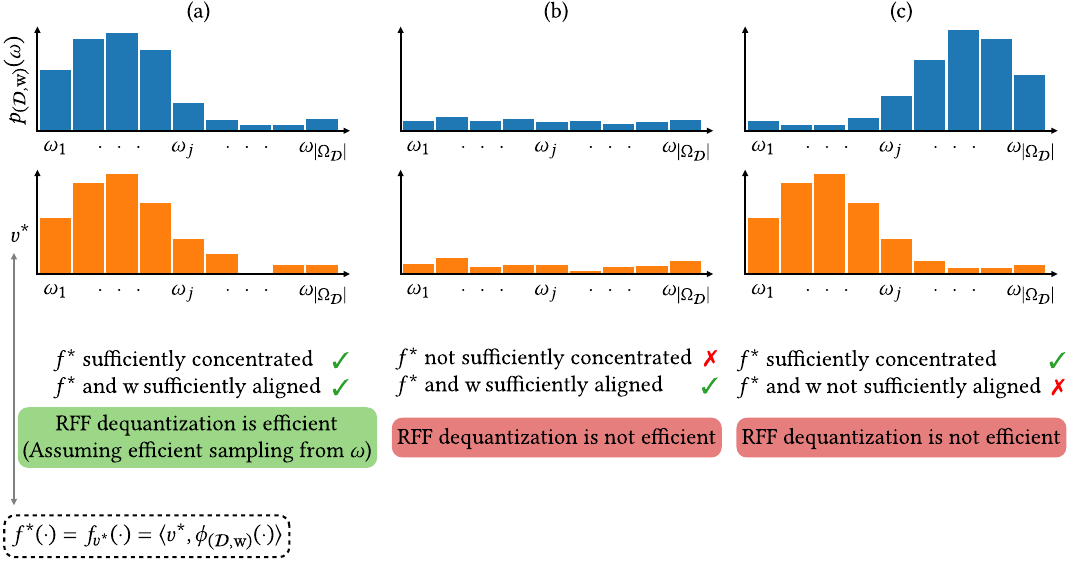}
\caption{A graphical illustration of the conditions sufficient for RFF-based linear regression with re-weighting $\doubleyou$ to provide an efficient dequantization of PQC regression via circuit architecture $(\Theta,\mathcal{D},O)$, for regression problem $P$. One requires that the optimal PQC function $f^*$ for $P$ is both sufficiently concentrated and sufficiently well-aligned with the re-weighting distribution $p_{(\mathcal{D},\doubleyou)}$.}\label{fig:RKHS_conditions}
\end{center}
\end{figure}

Given these insights, perhaps the most interesting questions are the following:
\newpage
 \begin{enumerate}
 \item Given a PQC architecture $(\Theta,\mathcal{D},O)$, as well as (samples from) a regression problem $P$, how do we evaluate whether PQC regression over $(\Theta,\mathcal{D},O)$ can be efficiently dequantized via RFF-based linear regression?
 \item If PQC regression can be efficiently dequantized, how do we identify an appropriate re-weighting $\doubleyou$?
 \end{enumerate}
 
Figure~\ref{fig:flowchat} provides a methodology for answering this question, using the results of this work as a guide. We elaborate on this below.

The first step is to ask whether variational QML with the PQC architecture $(\Theta,\mathcal{D},O)$ can be dequantized via RFF-based linear regression, \textit{irrespective} of the regression problem of interest. More specifically, for any regression problem $P$, there will exist some optimal PQC function $f^*$. Previously, we have focused our analysis on this specific function $f^*$, and noted that it needs to be both sufficiently concentrated and well aligned with a re-weighting whose associated distribution is efficient to sample from. However, if there exists a sufficiently concentrated re-weighting~$\doubleyou$, which can be efficiently sampled from, and which is well aligned with \textit{all functions} $f\in\mathcal{F}_{(\Theta,\mathcal{D},O)}$, then using $\doubleyou$ for the re-weighting will ensure that the sufficient conditions for dequantization will be satisfied \textit{for any} regression problem~$P$. More specifically, no matter which $f\in\mathcal{F}_{(\Theta,\mathcal{D},O)}$ is the optimal PQC function $f^*$, it will be well aligned with~$\doubleyou$, which is sufficiently concentrated. From a more technical perspective, this step of the methodology is equivalent to asking whether there exists a sufficiently concentrated re-weighting $\doubleyou$, which can be efficiently sampled from, such that for all $f\in\mathcal{F}_{(\Theta,\mathcal{D},O)}$ the RKHS norm $\|f_{(\Theta,\mathcal{D},O)}\|_{K_{(\mathcal{D},\doubleyou)}}$ scales polynomially in~$d$. As illustrated in Figure.~\ref{fig:flowchat}, if the answer to this question is ``Yes", then we know that dequantization of $(\Theta,\mathcal{D},O)$ is possible via re-weighting $\doubleyou$ for \textit{any} regression problem $P$. If the answer is "No" (or the question is too hard to answer in practice) then we proceed to examine the specific regression problem of interest.

Before continuing it is worth commenting on a few aspects of the above discussion. Firstly, in order for all ${f\in\mathcal{F}_{(\Theta,\mathcal{D},O)}}$ to be well-aligned with a single sufficiently concentrated re-weighting $\doubleyou$, they should all be sufficiently concentrated and mutually well-aligned. In principle, we can use this insight to guide PQC architecture design: in order to avoid being dequantizable via RFF for any regression problem, a PQC architecture should contain functions whose frequency spectra are either not sufficiently concentrated, or not mutually well-aligned. While useful in principle, we stress that in practice this may be difficult to assess. In particular, determining this property of $\mathcal{F}_{(\Theta,\mathcal{D},O)}$ may in the worst-case require analyzing the frequency spectrum of \textit{all} functions $f\in\mathcal{F}_{(\Theta,\mathcal{D},O)}$. We elaborate more on the challenge and potential of using these insights for PQC architecture design in Section~\ref{s:future}.

While it may be the case that for certain restricted circuit architectures problem-independent dequantization is possible, we expect that for any sufficiently expressive circuit architecture this will not be the case, and therefore it will be necessary to perform an evaluation which depends on the regression problem of interest. To this end, as discussed before, we need to assess whether the \textit{optimal} PQC function $f^*$ for the regression problem is well-aligned with a sufficiently concentrated re-weighting distribution, from which one can efficiently sample. Unfortunately, it is not a-priori clear how to do this without first solving the regression problem and identifying~$f^*$!

\begin{figure}[p]
\begin{center}
\includegraphics[width=\textwidth]{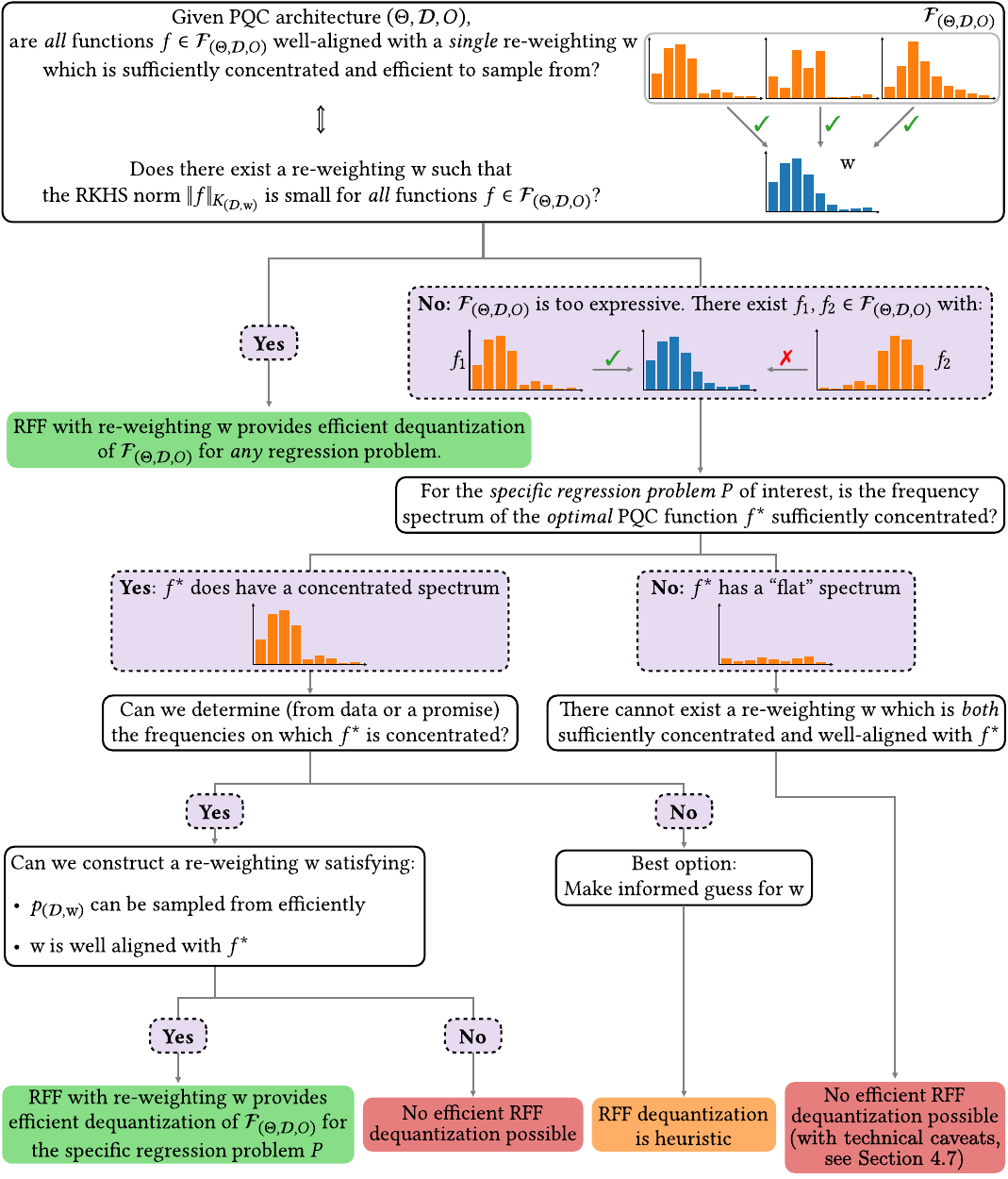}
\caption{A methodology for determining whether, and via which re-weighting, linear regression via RFF can provide an efficient means for dequantizing PQC regression over circuit architecture $(\Theta,\mathcal{D},O)$, for regression problem $P$?}\label{fig:flowchat}
\end{center}
\end{figure}

However, as illustrated in Figure~\ref{fig:flowchat}, we note that a natural way to approach this problem is via the following sequence of questions: Firstly, given samples (i.e., data) from the regression problem, is the frequency spectrum of the optimal PQC function is sufficiently concentrated? If yes, can we identify the frequencies on which the support is concentrated? In principle, it is possible that answering these two questions is easier than learning the function itself, which would also require learning the actual frequency co-efficients. This is in line with the observation underlying the field of property testing, that often \textit{testing} properties of a function can be done more efficiently than \textit{learning} the function~\cite{goldreich2017introduction}. As discussed in Section~\ref{s:future} below, the development of techniques for answering these two property-testing questions is therefore a natural and well-defined direction of future research. Moreover, it may be the case that for certain regression problems one has a \textit{structural promise} which allows one to answer these questions. Indeed, even a partial answer may allow one to make a well-informed \textit{guess} for the re-weighting distribution, which leads to a well-informed \textit{heuristic} method for dequantization. Additionally, as also illustrated in Figure.~\ref{fig:flowchat}, if we are able to identify that the optimal PQC function $f^*$ is \textit{not} sufficiently concentrated, then we can immediately conclude that efficient dequantization via RFF is \textit{not} possible (provided that the technical caveats discussed in detail in Section~\ref{ss:lower_bounds}, on the encoding strategy and 2-norm of $f^*$,  are satisfied). As discussed in Section~\ref{s:future} below, this therefore provides a potential tool for identifying candidate problems for ``quantum advantage" -- i.e., problems for which efficient dequantization via RFF-based linear regression is \textit{not} possible.

Finally, as illustrated in Figure.~\ref{fig:flowchat}, we note that even if we are able to identify the frequencies on which the optimal PQC function $f^*$ is concentrated, efficient RFF dequantization still requires the construction of a re-weighting distribution supported on these frequencies, from which one can efficiently sample. Once again, it is not immediately a-priori clear how one should construct such distributions. As such, the development of methods for the identification and design of such distributions is once again a natural avenue for future research.

\section{Future directions}\label{s:future}
 
Given the discussion above, the following are natural avenues for future research:
 
\textbf{Identification of problems admitting potential quantum advantage:} Can we identify a class of scientifically, industrially or socially relevant regression problems, whose regression functions are well aligned with anti-concentrated distributions over exponentially large frequency sets, and are therefore good candidates for quantum advantage via variational QML?

\textbf{Testing frequency concentration and support:} A key step of the methodology outlined in Figure~\ref{fig:flowchat} involves determining whether the optimal PQC function, for a specific regression problem, has a sufficiently concentrated frequency spectrum. As such, the development of efficient algorithms for \textit{testing} this property of the optimal PQC function from a specific architecture, from samples of the regression problem at hand, would facilitate the practical assessment of whether PQC regression via a specific architecture can be dequantized, for a specific regression problem.

\textbf{Design of suitable sampling distributions:} We have seen that a \textit{necessary} condition for efficient dequantization via RFF is that the distribution $p$ is both efficiently sampleable and sufficiently concentrated. This immediately \textit{rules out} a large class of natural distributions - namely the uniform distribution and product-induced distributions with non-trivial components. Given this, in order for RFF-based dequantization to be useful, it is important to identify and motivate suitable sampling distributions. We note that ideally one would choose the distribution based on knowledge of the regression function of the problem $P$, however in practice it is more likely that one would first choose a distribution $p$, which will then determine the class of problems for which RFF will be an efficient dequantization method -- namely those problems whose regression function is well aligned with $p$.

\textbf{PQC architecture design guided by RKHS norm:} As we have discussed, for any circuit architecture for which \textit{all} expressible functions are well aligned with a suitable distribution $p$, one \textit{cannot} obtain a quantum advantage, as RFF-based linear regression will provide an efficient dequantization method. Given this, it is of interest to investigate circuit architectures from this perspective, to understand which architectures might facilitate a quantum advantage, and which architectures are prone to dequantization. Unfortunately, we note that gaining \textit{analytic} insight into which hyperplanes (i.e., frequency representations) are expressible by a given PQC architecture is a hard problem, which has so far resisted progress. However, as observed in Section~\ref{ss:RKHS_norm}, for data-encoding strategies with integer frequencies -- i.e., all encoding strategies using Pauli word Hamiltonians -- one can in principle numerically evaluate the RKHS norm of a given PQC function via its Fourier transform. Of course, in practice this will be limited by the fact that the frequency set typically scales exponentially with the dimensionality of the data. However, the hope is that one may be able to extract meaningful and useful insights by studying both smaller PQC model classes, for which the RKHS norm calculations are tractable numerically, and more structured PQC model classes for which analytic calculations may be possible.

\textbf{Effect of noise on RKHS norm of PQC architectures:} As we have pointed out, for any sufficiently expressive circuit architecture we expect the worst case RKHS norm to scale superpolynomially -- i.e.,  that there exist PQC functions whose frequency representation is aligned with an anti-concentrated distribution over frequencies. It would, however, be of interest to understand the effect of noise on architectures realizing such functions. In particular, it could be the case that realistic circuit noise causes a concentration of the frequencies which are expressible by a PQC architecture, and therefore facilitates dequantization via RFF-based linear regression.

\textbf{Extension to classification problems:} The analysis we have performed here has focused on \textit{regression} problems. Extending this analysis to classification problems would be both natural and interesting.

\textbf{Improved RFF methods for sparse data:} In this work, we have provided an analysis of ``standard'' RFF-based linear regression, for regression problems with no promised structure. However, when one is promised that the distribution $P$ has some particular structure, then one can devise variants of RFF with improved efficiency guarantees. One such example we have already seen -- namely, if one can guarantee that the regression function has a frequency representation supported on a subset of possible frequencies, then on can design the sampling distribution appropriately, which leads to improved RFF efficiencies~\cite{genRFF}. In a similar vein, it is known that when one has a promise on the sparsity of the vectors in the support of $P$, then one can devise variants of RFF with improved efficiency~\cite{rick2016random}. As this is a natural promise for application-relevant distributions, understanding the potential and limitations of ``Sparse-RFF" as a dequantization method is an interesting research direction.

\textbf{PQC dequantization without RFF:}  Here we have discussed only \textit{one} potential method for the dequantization of variational QML. As we have noted in Section~\ref{s:PQCvsRFF}, recently Ref.~\cite{shin2023analyzing} has noted that to each PQC one can associate a feature map for which the associated kernel can be evaluated efficiently classically \textit{without} requiring any approximations. As such, understanding the extent to which one can place relative error guarantees on linear regression using such a kernel is a natural avenue for investigation. Additionally, as mentioned in the introduction, a variety of recent works have shown that PQCs can be efficiently simulated, \textit{in an average case sense}, in the presence of certain types of circuit noise~\cite{fontana2023classical,shao2023simulating}. Again, understanding the extent to which this allows one to classically emulate noisy variational QML is another natural approach to dequantization of \textit{realistic} variational QML. Finally, quite recently Ref.~\cite{jerbi2023shadows} has shown that one can sometimes efficiently extract from a trained PQC a ``shadow model'' which can be used for efficient classical inference. Given this, it would be interesting to understand the extent to which one can train classical shadow models directly from data.

\section*{Acknowledgments}
RS is grateful for helpful conversations with Alex Nietner, Christa Zoufal and Yanting Teng.  This work is supported by the Government of Spain (Severo Ochoa CEX2019-000910-S, FUNQIP and European Union NextGenerationEU PRTR-C17.I1), Fundació Cellex, Fundació Mir-Puig, Generalitat de Catalunya (CERCA program) and European Union (PASQuanS2.1, 101113690). ERA is a fellow of Eurecat's "Vicente López" PhD grant program. ERA would also like to give special thanks to Jens Eisert for inviting him to participate in his group and also thank Adan Garriga for helping him all the way with this stay. SJ thanks the BMBK (EniQma) and the Einstein Foundation (Einstein Research Unit on Quantum Devices) for their support.
EGF is funded by the Einstein Foundation (Einstein Research Unit on Quantum Devices). JJM is funded by QuantERA (HQCC). 
JE is funded by the QuantERA (HQCC), the Einstein Foundation (Einstein Research Unit on Quantum Devices), the Munich Quantum Valley (K-8), Berlin Quantum, the QuantERA (HQCC), the MATH+ cluster of excellence, the BMWK (EniQma), and the BMBF (Hybrid++). 
We acknowledge support by the Open Access Publication Fund of Freie Universität Berlin.

\appendix

\section{Construction of the frequency set $\tilde{\Omega}_{\mathcal{D}}$}\label{app:freq_construction}

We describe here the way in which the frequency set $\tilde{\Omega}_\mathcal{D}$ of a PQC model is constructed from the data-encoding strategy $\mathcal{D}$. We follow closely the presentation in Ref.~\cite{RFF}, and start by noting that
\begin{equation}\label{eq:cartesian_freqs}
\tilde{\Omega}_\mathcal{D} = \tilde{\Omega}^{(1)}_\mathcal{D}\times \ldots \times \tilde{\Omega}^{(d)}_\mathcal{D}
\end{equation}
where $\tilde{\Omega}^{(j)}_\mathcal{D}\subseteq\mathbb{R}$ depends only on $\mathcal{D}^{(j)}$. We can therefore focus on the construction of $\tilde{\Omega}^{(j)}_\mathcal{D}$ for a single co-ordinate. In light of this, let us drop some coordinate-indicating superscripts for ease of presentation. In particular, let us write $\mathcal{D}^{(j)} = \{H_k\,|\, k \in [L_j] \}$, where we have dropped the coordinate-indicating superscripts from the Hamiltonians. We then use $\lambda^{i}_{k}$ to denote the $i$'th eigenvalue of $H_k$, and $N_{k}$ to denote the number of eigenvalues of $H_k$. We also introduce the multi-index $\vec{i} = (i_1,\ldots,i_{L_j})$, with $i_k\in [N_k]$, which allows us to define the sum of the eigenvalues indexed by $\vec{i}$, one from each Hamiltonian, as
\begin{equation}
\Lambda_{\vec{i}} = \lambda^{i_1}_1 +\ldots + \lambda^{i_{L_j}}_{L_j}.
\end{equation}
With this setup, we then have that the frequency set $\tilde{\Omega}^{(j)}_\mathcal{D}$ is given by the set of all differences of all possible sums of eigenvalues, i.e., 
\begin{equation}
\tilde{\Omega}^{(j)}_\mathcal{D} = \left\{\Lambda_{\vec{i}} - \Lambda_{\vec{j}}\,|\, \vec{i},\vec{j}\right\},
\end{equation}
and as mentioned before, the total frequency set is given by Eq.~\eqref{eq:cartesian_freqs}. There is a convenient graphical way to understand this construction, which is illustrated in Figure.~\ref{fig:tree_construction}. Essentially, one notes that, in order to construct $\tilde{\Omega}^{(j)}_\mathcal{D}$ one can consider a tree, with depth equal to the number of data-encoding gates, whose leaves contain the eigenvalue sums $\Lambda_{\vec{i}}$. The frequency set is then given by all possible pairwise differences between leaves.

\begin{figure}[h!]
\begin{center}
\includegraphics[scale=.6]{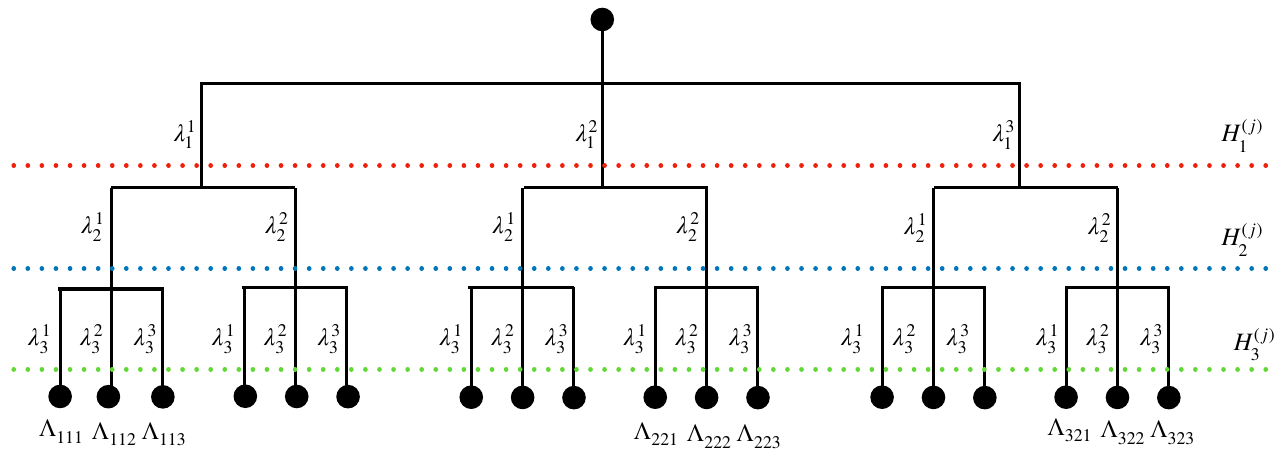}
\caption{Construction of the frequency set $\tilde{\Omega}^{(j)}_\mathcal{D}$ from the data-encoding strategy $\mathcal{D}^{(j)}$.}
\label{fig:tree_construction}
\end{center}
\end{figure}
\newpage
\section{Proof of Theorem~\ref{thm:efficiency_RFF}}\label{app:proof_main}

We start by noting that Theorem~\ref{thm:efficiency_RFF} in the main text follows as an immediate corollary of the following Theorem:

\begin{theorem}[RFF vs.~variational QML -- alternative form]\label{thm:efficiency_RFF_appendix} Let $R$ be the risk associated with a regression problem $P\sim\mathcal{X}\times\mathbb{R}$. Assume the following:
\begin{enumerate}
\item $\|f^*_{(\Theta,\mathcal{D},O)}\|_{K_\mathcal{D}} \leq C$,
\item $|y|\leq b$ almost surely when $(x,y)\sim P$, for some $b>0$.
\end{enumerate}
Additionally, define
\begin{align}
n_0 &\coloneqq \max\left\{4\|T_{K_\mathcal{D}}\|^2, \left(528\log\frac{1112\sqrt{2}}{\delta}\right)^2\right\},\\
c_0&\coloneqq 36\left(3 + \frac{2}{\|T_{K_\mathcal{D}}\|}\right), \\
c_1&\coloneqq 8\sqrt{2}(4b + \frac{5}{\sqrt{2}}C + 2\sqrt{2C}).
\end{align}
Then, let $\delta\in (0,1]$, let $n\geq n_0$, set $\lambda_n = 1/\sqrt{n}$, and let $\hat{f}_{M_n,\lambda_n}$ be the output of $\lambda_n$-regularized linear regression with respect to the feature map
\begin{equation}
\phi_{M_n}(x) = \frac{1}{\sqrt{M_n}}\big(\psi(x,\nu_1),\ldots,\psi(x,\nu_{M_n})\big)
\end{equation}
constructed from the integral representation of $K_\mathcal{D}$ by sampling $M_n$ elements from $\pi$. Then, 
\begin{equation}
M_n \geq c_0\sqrt{n}\log\frac{108\sqrt{n}}{\delta}
\end{equation}
is enough to guarantee, with probability at least $1-\delta$, that if $R(\hat{f}_{M_n,\lambda_n}) \geq R(f^*_{(\Theta,\mathcal{D},O)})$, then
\begin{equation}
R(\hat{f}_{M_n,\lambda_n}) - R(f^*_{(\Theta,\mathcal{D},O)}) \leq \frac{c_1\log^2\frac{1}{\delta}}{\sqrt{n}}.
\end{equation}
\end{theorem}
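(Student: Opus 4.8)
The plan is to obtain the statement as a direct specialization of the random Fourier feature generalization bound of Ref.~\cite{genRFF} to the PQC kernel $K_\mathcal{D}$, using as comparator the optimal PQC function $f^*_{(\Theta,\mathcal{D},O)}$. The bridge between the abstract RFF bound and the PQC setting is the containment $\mathcal{F}_{(\Theta,\mathcal{D},O)} \subseteq \mathcal{H}_{K_\mathcal{D}}$ established in Section~\ref{ss:re-weight} (the uniform-weighting instance of $\mathcal{F}_{(\Theta,\mathcal{D},O)}\subseteq\mathcal{H}_{K_{(\mathcal{D},\doubleyou)}}$): since $f^*_{(\Theta,\mathcal{D},O)}$ lies in the RKHS of $K_\mathcal{D}$ with $\|f^*_{(\Theta,\mathcal{D},O)}\|_{K_\mathcal{D}} \leq C$, it is an admissible target for the genRFF analysis, and the excess risk of the RFF estimator over $R(f^*_{(\Theta,\mathcal{D},O)})$ is then controlled by that analysis.

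First I would record the abstract bound from Ref.~\cite{genRFF} in the form I intend to use: for a kernel $K$ admitting an integral representation $K(x,x') = \int_\Phi \psi(x,\nu)\psi(x',\nu)\,\mathrm{d}\pi(\nu)$ with $\sup_{x,\nu}|\psi(x,\nu)|$ bounded, for bounded labels $|y|\leq b$, and for any target $g \in \mathcal{H}_K$, the $\lambda$-regularized RFF estimator $\hat{f}_{M,\lambda}$ built from $M$ features drawn from $\pi$ satisfies, with probability at least $1-\delta$, an excess-risk bound on $R(\hat{f}_{M,\lambda}) - R(g)$ whose leading term scales with $\|g\|_K$, $b$, and $\|T_K\|$, provided $M$ is at least of order $\sqrt{n}\log(\sqrt{n}/\delta)$ when $\lambda = 1/\sqrt{n}$. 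The particular regularization $\lambda_n = 1/\sqrt{n}$ and the $1/\sqrt{n}$ rate are exactly the worst-case regime of that analysis, in which no capacity or source assumption beyond RKHS membership is imposed.

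Second I would verify that the hypotheses of the abstract bound hold for $(K_\mathcal{D},\pi,\psi)$. The integral representation is precisely Eq.~\eqref{eq:K_FT}, with $\psi(x,\nu) = \sqrt{2}\cos(\langle\omega,x\rangle + \gamma)$, so that $\sup_{x,\nu}|\psi(x,\nu)| \leq \sqrt{2}$; the normalization built into $\phi_\mathcal{D}$ in Eq.~\eqref{eq:PQC_feature_map} gives $K_\mathcal{D}(x,x) = 1$, supplying the bounded-kernel condition; the labels are bounded by assumption~2; and assumption~1 together with the containment above certifies that the comparator $g = f^*_{(\Theta,\mathcal{D},O)}$ lies in $\mathcal{H}_{K_\mathcal{D}}$ with norm at most $C$. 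Substituting these quantities into the abstract bound, with $g = f^*_{(\Theta,\mathcal{D},O)}$ and $\lambda_n = 1/\sqrt{n}$, yields directly the sufficient number of features $M_n \geq c_0\sqrt{n}\log(108\sqrt{n}/\delta)$ and the excess-risk bound $R(\hat{f}_{M_n,\lambda_n}) - R(f^*_{(\Theta,\mathcal{D},O)}) \leq c_1 \log^2(1/\delta)/\sqrt{n}$, valid on the event $R(\hat{f}_{M_n,\lambda_n}) \geq R(f^*_{(\Theta,\mathcal{D},O)})$; on the complementary event the RFF estimator is already at least as good and there is nothing to prove, which is why the statement is phrased conditionally. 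The explicit $n_0$, $c_0$, $c_1$ are obtained by collecting the numerical constants of the genRFF bound.

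The genuinely nontrivial content of the argument is imported wholesale from Ref.~\cite{genRFF}, so the main obstacle is not a new inequality but careful bookkeeping: matching the normalization conventions of the abstract theorem (the value of $\sup|\psi|$, the scaling $K_\mathcal{D}(x,x)=1$, and the precise dependence of the bound on $b$, $C$, and $\|T_{K_\mathcal{D}}\|$) and propagating the numerical constants so that they assemble into the stated $n_0$, $c_0$, and $c_1$. A secondary subtlety, which the conditional phrasing resolves cleanly, is that $f^*_{(\Theta,\mathcal{D},O)}$ minimizes risk over $\mathcal{F}_{(\Theta,\mathcal{D},O)}$ rather than over all of $\mathcal{H}_{K_\mathcal{D}}$, so $R(\hat{f}_{M_n,\lambda_n}) - R(f^*_{(\Theta,\mathcal{D},O)})$ need not be nonnegative; the abstract bound is applied to the comparator $f^*_{(\Theta,\mathcal{D},O)}$ directly, and only the regime where this difference is positive requires a bound.
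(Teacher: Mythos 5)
Your proposal is correct and follows essentially the same route as the paper: apply the generalization bound of Ref.~\cite{genRFF} to $K_\mathcal{D}$ via its integral representation in Eq.~\eqref{eq:K_FT} with $\psi(x,\nu)=\sqrt{2}\cos(\langle\omega,x\rangle+\gamma)$ (so $\kappa=\sqrt{2}$), use the containment $\mathcal{F}_{(\Theta,\mathcal{D},O)}\subseteq\mathcal{H}_{K_\mathcal{D}}$ to certify $f^*_{(\Theta,\mathcal{D},O)}$ as an admissible norm-bounded comparator, and collect the constants. The only (cosmetic) difference is that you fold the comparator step into your statement of the abstract bound, whereas the paper applies the bound against the ball-optimal function $f^*_{\mathcal{H}^C_{K_\mathcal{D}}}$ and then invokes $R(f^*_{(\Theta,\mathcal{D},O)})\geq R(f^*_{\mathcal{H}^C_{K_\mathcal{D}}})$ explicitly.
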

Next we note that Theorem~\ref{thm:efficiency_RFF_appendix} above -- from which we derive Theorem~\ref{thm:efficiency_RFF} in the main text as an immediate corollary -- is essentially a straightforward application of the generalization bound given as Theorem 1 in Ref.~\cite{genRFF}. As such, we start our proof of Theorem~\ref{thm:efficiency_RFF_appendix} with a presentation of this result. To this end, we require first a few definitions. Firstly, given a kernel $K$, with associated RKHS $(\mathcal{H}_K,\langle\cdot,\cdot\rangle_K)$, we define $\mathcal{H}^C_K = \{f\in\mathcal{H}_K\,|\, \|f\|_K\leq C\}$ as the subset of functions in $\mathcal{H}_K$ with RKHS norm bounded by $C$. We define $\mathcal{F}^C_\mathcal{D}$ and $\mathcal{F}^C_{(\Theta,\mathcal{D},O)}$ analogously. Additionally, given a regression problem $P$ with associated risk $R$, we then define 
\begin{equation}
f^*_{\mathcal{H}^C_K} = \argmin_{f\in \mathcal{H}^C_K}\left[R(f)\right],
\end{equation}
as the optimal function for $P$ in $\mathcal{H}^C_K$. Finally, recall that we denote by $T_K$ the kernel integral operator associated with the kernel $K$ (see Definition~\ref{def:kern_int_op}).  With this in hand, we can state a slightly reformulated version of the RFF generalization bound proven in Ref.~\cite{genRFF} (which in turn build on the earlier results of Ref.~\cite{RFFOld}).

\begin{theorem}[Theorem 1 from~\cite{genRFF}]\label{thm:gen_bound} Assume a regression problem $P\sim\mathcal{X}\times\mathbb{R}$. Let $K:\mathcal{X}\times\mathcal{X}\rightarrow \mathbb{R}$ be a kernel, and let $\mathcal{H}^C_K$ be the subset of the RKHS $\mathcal{H}_K$ consisting of functions with RKHS-norm upper bounded by some constant $C$. Assume the following:
\begin{enumerate}
\item K has an integral representation 
\begin{equation}
K(x,x') = \int_{\Phi}\psi(x,\nu)\psi(x',\nu)\,\mathrm{d}\pi(\nu).
\end{equation}
\item The function $\psi$ is continuous in both variables
 and satisfies $|\psi(x,\nu)|\leq \kappa$ almost surely, for some $\kappa \in [1,\infty)$.
\item $|y|\leq b$ almost surely when $(x,y)\sim P$, for some $b > 0$.
\end{enumerate}
Additionally, define
\begin{align}
\overline{B} &\coloneqq 2b + 2\kappa\max\{1,\|f^*_{\mathcal{H}^C_K}\|_K\},\\
\overline{\sigma}&\coloneqq2b + 2\kappa\sqrt{\max\{1,\|f^*_{\mathcal{H}^C_K}\|_{K}\}},
\end{align}
and
\begin{align}
n_0 &\coloneqq \max\left\{4\|T_K\|^2 ,\left(264\kappa^2\log\frac{556\kappa^3}{\delta}\right)^2\right\},\\
c_0&\coloneqq9\left(3 + 4\kappa^2 + \frac{4\kappa^2}{\|T_K\|} + \frac{\kappa^4}{4}\right), \\
c_1&\coloneqq8\big(\overline{B}\kappa + \overline{\sigma}\kappa + \max\{1,\|f^*_{\mathcal{H}^C_K}\|_K\}\big).
\end{align}
Then, let $\delta\in (0,1]$,  $n\geq n_0$, assume $\lambda_n = 1/\sqrt{n}$, and let $\hat{f}_{M_n,\lambda_n}$ be the output of $\lambda_n$-regularized linear regression with respect to the feature map
\begin{equation}
\phi_{M_n}(x) = \frac{1}{\sqrt{M_n}}\big(\psi(x,\nu_1),\ldots\psi(x,\nu_M)\big]
\end{equation}
constructed from the integral representation of $K$ by sampling $M_n$ elements from $\pi$. Then, 
\begin{equation}\label{Me}
M_n \geq c_0\sqrt{n}\log\frac{108\kappa^2\sqrt{n}}{\delta}
\end{equation}
is enough to guarantee, with probability at least $1-\delta$, that
\begin{equation}
R(\hat{f}_{M_n,\lambda_n}) - R(f^*_{\mathcal{H}^C_K}) \leq \frac{c_1\log^2\frac{1}{\delta}}{\sqrt{n}}.
\end{equation}
\end{theorem}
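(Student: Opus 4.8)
The plan is to decompose the excess risk into a \emph{kernel ridge regression} contribution and a \emph{random-feature approximation} contribution, and to control each via concentration of the relevant covariance operators. I would work throughout in the integral-operator formalism, identifying $\mathcal{H}_K$ with the range of $T_K^{1/2}$ acting on $L^2(\mathcal{X},P_\mathcal{X})$ and introducing three objects: the benchmark, abbreviated $f^*\coloneqq f^*_{\mathcal{H}^C_K}$; the exact $\lambda_n$-regularized kernel ridge estimator $\hat g_{\lambda_n}\in\mathcal{H}_K$; and the RFF estimator $\hat f_{M_n,\lambda_n}$, which lives in the random subspace spanned by $\{\psi(\cdot,\nu_i)\}_{i=1}^{M_n}$. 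The master split is
\begin{equation}
R(\hat f_{M_n,\lambda_n}) - R(f^*) = \big[R(\hat f_{M_n,\lambda_n}) - R(\hat g_{\lambda_n})\big] + \big[R(\hat g_{\lambda_n}) - R(f^*)\big],
\end{equation}
and I would bound the two brackets separately, allocating the target value $c_1\log^2(1/\delta)/\sqrt n$ across them.

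For the kernel-ridge bracket I would run the standard bias--variance analysis. The bias, comparing the regularized population solution $(T_K+\lambda_n)^{-1}T_K f^*$ with $f^*$, is governed by $\|f^*\|_K\le C$ and contributes at the target order $1/\sqrt n$ (times a polynomial in $C$); this is where $\max\{1,\|f^*_{\mathcal{H}^C_K}\|_K\}$ enters the proxies $\overline B,\overline\sigma$ and hence $c_1$. The variance is handled by concentrating the empirical covariance operator $\hat T_K=\tfrac1n\sum_i\phi(x_i)\otimes\phi(x_i)$ around $T_K$ in the $(T_K+\lambda_n)^{-1/2}$-weighted operator norm, via a noncommutative (operator) Bernstein inequality; the almost-sure bound $|\psi|\le\kappa$ supplies the operator-norm and variance proxies, while $|y|\le b$ controls the label noise. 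The controlling quantity is the effective dimension $\mathcal{N}(\lambda_n)=\mathrm{Tr}[(T_K+\lambda_n)^{-1}T_K]$, which in the absence of any spectral-decay assumption I would bound crudely by $\kappa^2/\lambda_n=\kappa^2\sqrt n$. Demanding that the concentration event hold with probability $1-\delta$ then yields the threshold $n\ge n_0$, with the $4\|T_K\|^2$ term ensuring $\lambda_n$ sits below the top of the spectrum and the logarithmic term absorbing the Bernstein factors.

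For the random-feature bracket, the crux is to compare the random-feature covariance operator $T_{M_n}=\tfrac1{M_n}\sum_i\tilde\psi_{\nu_i}\otimes\tilde\psi_{\nu_i}$ with $T_K$ in the \emph{same} weighted norm, now concentrating over the i.i.d.\ draws $\nu_i\sim\pi$. A second operator Bernstein inequality shows that keeping $\|(T_K+\lambda_n)^{-1/2}(T_{M_n}-T_K)(T_K+\lambda_n)^{-1/2}\|$ below a fixed constant requires a number of features scaling as $\mathcal{N}(\lambda_n)\log(\mathcal{N}(\lambda_n)/\delta)$; with the crude bound $\mathcal{N}(\lambda_n)\le\kappa^2\sqrt n$ this becomes $M_n\gtrsim\sqrt n\log(\sqrt n/\delta)$, reproducing the stated threshold $M_n\ge c_0\sqrt n\log(108\kappa^2\sqrt n/\delta)$, with the $1/\|T_K\|$ and $\kappa^2,\kappa^4$ terms in $c_0$ coming from the operator-norm and variance proxies. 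Once this operator approximation holds, a perturbation argument transfers the kernel-ridge guarantee to $\hat f_{M_n,\lambda_n}$ with only a constant-factor loss.

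Adding the two brackets, each of order $1/\sqrt n$ up to the polynomial-in-$C$ and logarithmic factors, gives the claimed bound after collecting constants into $\overline B,\overline\sigma,c_1$. The main obstacle is the coupled operator-concentration argument in the weighted geometry: a naive bound in the plain operator norm loses the effective-dimension dependence and would force $M_n$ to scale like $n$ rather than $\sqrt n$, so both Bernstein inequalities must be run in the $(T_K+\lambda_n)^{-1/2}$-weighted norm, carefully handling the noncommutativity of $T_K$, $\hat T_K$ and $T_{M_n}$ with the regularized inverse. Tracking the explicit numerical constants through these steps to land exactly on the stated $n_0,c_0,c_1$ is the remaining laborious part; since the statement is quoted as Theorem~1 of \cite{genRFF}, I would ultimately defer to that reference for the precise constant bookkeeping.
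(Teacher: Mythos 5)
Your proposal cannot be checked against an in-paper argument, because the paper does not prove this statement at all: it is imported, essentially verbatim (modulo the restriction from the full RKHS $\mathcal{H}_K$ to the norm ball $\mathcal{H}^C_K$, which guarantees the minimizer $f^*_{\mathcal{H}^C_K}$ exists), as Theorem~1 of Ref.~\cite{genRFF}, and the paper's own proof work begins only downstream, in Appendix~B, where this theorem is instantiated with $K = K_\mathcal{D}$ and $\kappa = \sqrt{2}$. Judged against the actual proof in the cited reference, your outline is faithful in all the load-bearing respects: the excess risk is indeed decomposed into a statistical (kernel-ridge) contribution and a random-feature approximation contribution, both are controlled by operator Bernstein inequalities in the $(T_K+\lambda_n)^{-1/2}$-weighted norm, and in the assumption-free worst case the effective dimension is bounded crudely by $\mathcal{N}(\lambda_n)\leq \kappa^2/\lambda_n = \kappa^2\sqrt{n}$, which is precisely what forces $\lambda_n = 1/\sqrt{n}$, the feature count $M_n = \mathcal{O}(\sqrt{n}\log(\sqrt{n}/\delta))$, and the $\mathcal{O}(1/\sqrt{n})$ rate; your remark that an unweighted operator-norm analysis would lose the effective-dimension dependence and inflate $M_n$ to order $n$ is exactly the right diagnosis of where the difficulty sits. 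Two minor discrepancies are worth flagging. First, the reference does not literally route through the exact kernel ridge estimator $\hat{g}_{\lambda_n}$ as a midpoint; it uses an analytic decomposition through population-regularized solutions (several finer terms rather than your two brackets), though your split is an equally valid organization of the same estimates. Second, your proposal is a plan rather than a proof --- the constants $n_0, c_0, c_1, \overline{B}, \overline{\sigma}$ are deferred to the reference --- but since the paper itself quotes the theorem without proof and performs no independent constant bookkeeping, this deferral puts you on exactly the same footing as the paper.
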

This statement demonstrates, that under
reasonable assumptions, the estimator that is obtained with a number of random features proportional to 
$\mathcal{O}(\sqrt{n} \log n)$ achieves a 
$\mathcal{O}(1/\sqrt{n})$  learning error. We would now like to prove Theorem~\ref{thm:efficiency_RFF_appendix} by applying Theorem~\ref{thm:gen_bound} to the classical PQC-kernel $K_\mathcal{D}$. To do this, we require the following Lemma:

\begin{lemma}[Function set inclusions]\label{lem:function_inclusions} For any constant $C$ one has that
\begin{align}
\mathcal{F}^C_{(\Theta,\mathcal{D},O)} \subseteq &\,\mathcal{F}^{C}_{\mathcal{D}} \subseteq \mathcal{F}_{\mathcal{D}} \subseteq \mathcal{H}_{K_\mathcal{D}}\nonumber\\
&\,\,\rsubseteq\\ 
&\mathcal{H}^{C}_{K_\mathcal{D}}.\nonumber
\end{align}
\end{lemma}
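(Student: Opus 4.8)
The plan is to verify each link of the inclusion diagram separately, since every one of them reduces to a fact already recorded earlier in the paper. The only ingredient carrying genuine Hilbert-space content is the horizontal inclusion $\mathcal{F}_{\mathcal{D}} \subseteq \mathcal{H}_{K_\mathcal{D}}$, and this is precisely the statement (cited from Ref.~\cite{SVMbook} in Section~\ref{ss:re-weight}) that the set of all linear functions with respect to a feature map lies inside the RKHS of the kernel induced by that map; applied to the PQC feature map $\phi_\mathcal{D}$ and the kernel $K_\mathcal{D} = \langle \phi_\mathcal{D}(\cdot), \phi_\mathcal{D}(\cdot)\rangle$ this gives the third horizontal link at once. (In fact, because $\phi_\mathcal{D}$ is finite-dimensional this inclusion is an equality, but $\subseteq$ is all we need.) A useful by-product is that every $f \in \mathcal{F}_\mathcal{D}$ genuinely lies in $\mathcal{H}_{K_\mathcal{D}}$ and hence has a well-defined, finite RKHS norm $\|f\|_{K_\mathcal{D}}$ — by Lemma~\ref{lem:altRKHS_norm} it equals the infimum of $\|v\|_2$ over the hyperplanes realizing $f$, which are all of finite norm — so the constrained sets $\mathcal{F}^C_\mathcal{D}$ and $\mathcal{F}^C_{(\Theta,\mathcal{D},O)}$ are meaningful.

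Next I would dispatch the two remaining horizontal links, both of which follow directly from the definition of the superscript-$C$ sets as norm-bounded subsets. The link $\mathcal{F}^C_\mathcal{D} \subseteq \mathcal{F}_\mathcal{D}$ is immediate, since $\mathcal{F}^C_\mathcal{D}$ is by definition $\{f \in \mathcal{F}_\mathcal{D} : \|f\|_{K_\mathcal{D}} \leq C\}$. For $\mathcal{F}^C_{(\Theta,\mathcal{D},O)} \subseteq \mathcal{F}^C_\mathcal{D}$, I would start from the basic PQC inclusion $\mathcal{F}_{(\Theta,\mathcal{D},O)} \subseteq \mathcal{F}_\mathcal{D}$ of Eq.~\eqref{eq:first_inclusion} and note that the superscript $C$ imposes the \emph{identical} constraint $\|f\|_{K_\mathcal{D}} \leq C$ on both sides; intersecting both sets with $\{f : \|f\|_{K_\mathcal{D}} \leq C\}$ therefore preserves the inclusion.

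Finally, for the vertical link $\mathcal{F}^C_\mathcal{D} \subseteq \mathcal{H}^C_{K_\mathcal{D}}$ I would simply chase definitions: take any $f \in \mathcal{F}^C_\mathcal{D}$; then $f \in \mathcal{F}_\mathcal{D} \subseteq \mathcal{H}_{K_\mathcal{D}}$ by the RKHS inclusion established in the first step, so $f$ is a bona fide element of the RKHS, while $\|f\|_{K_\mathcal{D}} \leq C$ by membership in $\mathcal{F}^C_\mathcal{D}$; hence $f \in \mathcal{H}^C_{K_\mathcal{D}} = \{g \in \mathcal{H}_{K_\mathcal{D}} : \|g\|_{K_\mathcal{D}} \leq C\}$.

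The argument is essentially bookkeeping, so there is no serious obstacle; the single point demanding care is norm consistency. I must ensure that the constraint defining $\mathcal{F}^C_\mathcal{D}$ and $\mathcal{F}^C_{(\Theta,\mathcal{D},O)}$ is the \emph{same} RKHS norm $\|\cdot\|_{K_\mathcal{D}}$ appearing in $\mathcal{H}^C_{K_\mathcal{D}}$, and not a norm attached to some other re-weighted kernel, since Section~\ref{ss:digression} emphasizes that different kernels sharing a common RKHS can carry genuinely different norms. Once all superscript-$C$ sets are referenced to the single norm $\|\cdot\|_{K_\mathcal{D}}$, the vertical inclusion becomes an identity of constraints and the diagram closes.
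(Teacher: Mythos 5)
Your proof is correct, and its skeleton matches the paper's: the two horizontal inclusions $\mathcal{F}^C_{(\Theta,\mathcal{D},O)} \subseteq \mathcal{F}^C_{\mathcal{D}} \subseteq \mathcal{F}_{\mathcal{D}}$ and the vertical one $\mathcal{F}^C_{\mathcal{D}} \subseteq \mathcal{H}^C_{K_\mathcal{D}}$ are dispatched by definition-chasing exactly as in the paper, and all substantive content is concentrated in $\mathcal{F}_{\mathcal{D}} \subseteq \mathcal{H}_{K_\mathcal{D}}$. The one place you diverge is how that inclusion is discharged: you cite the general feature-map/RKHS fact from Ref.~\cite{SVMbook} (the same fact the paper itself invokes in Section~\ref{ss:re-weight} for re-weighted kernels), whereas the paper's appendix proof makes it self-contained. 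There, given $f_v \in \mathcal{F}_{\mathcal{D}}$, one decomposes $v = v_1 + v_2$ with $v_1$ in the span of the feature vectors $\{\phi_\mathcal{D}(x)\,|\,x\in\mathcal{X}\}$ and $v_2$ orthogonal to that span, notes that $v_2$ contributes nothing to the function so $f_v(\cdot) = \langle v_1, \phi_\mathcal{D}(\cdot)\rangle$, and then expands $v_1 = \sum_j \gamma_j \phi_\mathcal{D}(x_j)$ to exhibit $f_v = \sum_j \gamma_j K_\mathcal{D}(x_j,\cdot)$ as a \emph{finite} linear combination of kernel sections, so membership in $\mathcal{H}_{K_\mathcal{D}}$ holds by the very definition of the RKHS as the completion of the span of such sections, with no limit argument needed. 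Your route is shorter and leans on a result the paper already trusts elsewhere; the paper's route buys a transparent, elementary verification tailored to the finite-dimensional trigonometric feature map, which is arguably preferable in an appendix whose purpose is to make the main theorem's chain of reasoning fully explicit. Your two side remarks --- that the inclusion is in fact an equality in this finite-dimensional setting, and that every superscript-$C$ set must be referenced to the single norm $\|\cdot\|_{K_\mathcal{D}}$ rather than to a norm of some re-weighted kernel --- are both correct, and the latter is a genuine pitfall worth flagging given that the paper stresses that kernels sharing an RKHS can induce different norms.
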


\begin{proof}[Proof of Lemma~\ref{lem:function_inclusions}]
The inclusions $\mathcal{F}^C_{(\Theta,\mathcal{D},O)} \subseteq \mathcal{F}^{C}_{\mathcal{D}} \subseteq \mathcal{F}_{\mathcal{D}}$ are immediate by definition. To prove $\mathcal{F}_{\mathcal{D}} \subseteq \mathcal{H}_{K_\mathcal{D}}$ we consider any $f_v\in\mathcal{F}_\mathcal{D}$ and let $\tilde{\mathcal{F}}$ be the image of $\mathcal{X}$ under $\phi_\mathcal{D}$. We can then write $v = v_1 + v_2$ where $v_1\in\tilde{\mathcal{F}}$ and $v_2$ lies in the remainder. We then have that
\begin{align}
f_v(x) &=  \langle v,\phi_\mathcal{D}(x) \rangle \\
\nonumber
&= \langle v_1,\phi_\mathcal{D}(x) \rangle + \langle v_2,\phi_\mathcal{D}(x) \rangle \\
\nonumber
&= \langle v_1,\phi_\mathcal{D}(x) \rangle .
\nonumber
\end{align}
By definition -- i.e., the fact that $v_1\in\tilde{\mathcal{F}}$ -- we know that we can write
\begin{equation}
v_1 = \sum_{j}\gamma_j\phi_\mathcal{D}(x_j)
\end{equation}
and, therefore, 
we see that 
\begin{align}
f_v(x) &= \sum_j\gamma_j\langle \phi_\mathcal{D}(x_j),\phi_\mathcal{D}(x)\rangle_\mathcal{F} \\
&= \sum_j\gamma_j K_\mathcal{D}(x_j,x),
\nonumber
\end{align}
i.e., $f_v$ is indeed in the RKHS $\mathcal{H}_{K_\mathcal{D}}$. Finally, the inclusion $\mathcal{F}^C_\mathcal{D}\subseteq \mathcal{H}^{C}_{K_\mathcal{D}}$ now follows easily, as $\mathcal{F}^C_{\mathcal{D}} \subseteq \mathcal{F}_{\mathcal{D}}\subseteq\mathcal{H}_{K_\mathcal{D}}$ yields
\begin{equation}
f\in \mathcal{F}^C_{\mathcal{D}} \implies f\in\mathcal{H}_{K_\mathcal{D}},
\end{equation}
and by definition
\begin{equation}
f\in \mathcal{F}^C_{\mathcal{D}} \implies \|f\|_{K_\mathcal{D}} \leq C
\end{equation}
which together means that
\begin{equation}
f\in \mathcal{F}^C_{\mathcal{D}} \implies f\in\mathcal{H}^{C}_{K_\mathcal{D}}.
\end{equation}
\end{proof}

With this in hand, we can now prove Theorem~\ref{thm:efficiency_RFF_appendix}. 

\begin{proof}[Proof of Theorem~\ref{thm:efficiency_RFF_appendix}] We start by recalling that, as shown in Section~\ref{ss:implementation}, for any reweighting vector $\doubleyou$ the reweighted PQC-kernel $K_{(\mathcal{D},\doubleyou)}$ has the integral representation
\begin{align}
K_{(\mathcal{D},\doubleyou)}(x,x') = \frac{1}{2\pi}\int_{\mathcal{X}}\int_0^{2\pi}\sqrt{2}\cos(\langle\omega, x\rangle + \gamma)\sqrt{2}\cos(\langle\omega, x'\rangle + \gamma)q_{(\mathcal{D},\doubleyou)}(\omega)\,\,\mathrm{d}\gamma \mathrm{d}\nu, 
\end{align}
where 
\begin{align}
q_{(\mathcal{D},\doubleyou)}(\omega) =\sum_{i = 0}^{|\Omega^+_\mathcal{D}|}\frac{\doubleyou_i^2}{\|\doubleyou\|^2_2}\delta(\omega - \omega_i).
\end{align}
As a result, for any reweighting, including $\doubleyou= (1,\ldots,1)$, the kernel $K_{(\mathcal{D},\doubleyou)}$ satisfies assumption (1) of Theorem~\ref{thm:gen_bound} with 
\begin{equation}
\psi(x,\nu) = \sqrt{2}\cos(\langle \omega, x\rangle + \gamma).
\end{equation}
Given this, we note that $\psi$ is continuous in both variables and that $|\psi(x,\nu)|\leq \sqrt{2}$ for all $x,\nu$ -- i.e.,  for any kernel $K_{(\mathcal{D},\doubleyou)}$, assumption~(2) of Theorem~\ref{thm:gen_bound} is satisified with $\kappa = \sqrt{2}$.

Next, set the $C$ appearing in Theorem~\ref{thm:gen_bound} to the constant $C$ appearing in assumption (1) of Theorem~\ref{thm:efficiency_RFF_appendix}. More specifically, we apply Theorem~\ref{thm:gen_bound} to the subset $\mathcal{H}_{K_\mathcal{D}}^C$, where $C$ is an upper bound on the RKHS norm of the optimal function for $P$ in $\mathcal{F}_{(\Theta,\mathcal{D},O)}$ -- i.e., $\|f^*_{(\Theta,\mathcal{D},O)}\|_{K_\mathcal{D}}\leq C$. Doing this we obtain, via Theorem~\ref{thm:gen_bound} and the fact that $\kappa = \sqrt{2}$, that provided all the conditions of Theorem~\ref{thm:efficiency_RFF_appendix} are satisfied, then
\begin{equation}\label{eq:intermediate_step}
R(\hat{f}_{M_n,\lambda_n}) - R(f^*_{\mathcal{H}^C_{K_\mathcal{D}}}) \leq \frac{c_1\log^2\frac{1}{\delta}}{\sqrt{n}}.
\end{equation}
To achieve the statement of Theorem~\ref{thm:efficiency_RFF} we then use the assumption that $\|f^*_{(\Theta,\mathcal{D},O)}\|_{K_\mathcal{D}}\leq C$. More specifically, via Lemma~\ref{lem:function_inclusions} this assumption implies that  $f^*_{(\Theta,\mathcal{D},O)}\in \mathcal{H}^C_{K_\mathcal{D}}$, which together with the definition of $f^*_{\mathcal{H}^C_{K_\mathcal{D}}}$ as the \textit{optimal} function in $\mathcal{H}^C_{K_\mathcal{D}}$, allows us to conclude that 
\begin{equation}\label{eq:pqc_vs_rff_loss}
R(f^*_{(\Theta,\mathcal{D},O)}) \geq R(f^*_{\mathcal{H}^C_{K_\mathcal{D}}}).
\end{equation}
This then implies
\begin{align}
R(\hat{f}_{M_n,\lambda_n}) - R(f^*_{(\Theta,\mathcal{D},O)}) &\leq R(\hat{f}_{M_n,\lambda_n}) - R(f^*_{\mathcal{H}^C_{K_\mathcal{D}}}) \\
&\leq \frac{c_1\log^2\frac{1}{\delta}}{\sqrt{n}}\hspace{8em}\text{[via Eq.~\eqref{eq:intermediate_step}]} 
\nonumber
\end{align}
as per the statement of Theorem~\ref{thm:efficiency_RFF_appendix}.
\end{proof}
As already mentioned, Theorem~\ref{thm:efficiency_RFF} in the main text then follows as an immediate corollary of Theorem~\ref{thm:efficiency_RFF_appendix}.

\section{Proof of Lemma~\ref{lem:operator_norm}}\label{app:kern_int_proof}

\begin{proof}[Proof of Lemma~\ref{lem:operator_norm}]
As discussed in Ref.~\cite{rosasco2010learning}, the kernel integral operator is self-adjoint. In light of this, we know that $ \|T_{K_{(\mathcal{D},\doubleyou)}}\| =\rho(T_{K_{(\mathcal{D},\doubleyou)}})$, where $\rho(T_{K_{(\mathcal{D},\doubleyou)}})$ denotes the \textit{spectral radius} of $\rho(T_{K_{(\mathcal{D},\doubleyou)}})$. As such, we focus on determining the spectrum of $\rho(T_{K_{(\mathcal{D},\doubleyou)}})$. To this end, note that under assumption (a) of the lemma statement we have that 
\begin{align}
(T_{K_{(\mathcal{D},\doubleyou)}}g)(x) &= \int_\mathcal{X}K_{(\mathcal{D},\doubleyou)}(x,x')g(x')dP_\mathcal{X}(x') \\
\nonumber
&= \frac{1}{(2\pi)^d}\int_\mathcal{X}K_{(\mathcal{D},\doubleyou)}(x,x')g(x')\,\mathrm{d}x' \qquad\qquad\text{[via assumption (a)]}
\end{align}
with 
\begin{align}
    K_{(\mathcal{D},\doubleyou)}(x,x') &= \frac{1}{\|\doubleyou\|^2_2}\left(\doubleyou^2_0 + \sum_{i=1}^{ |\Omega^{+}_\mathcal{D}|}\doubleyou_i^2\cos(\langle\omega_i, (x-x')\rangle)\right),\\
    \nonumber
\end{align}
where $\omega_0 = (0,\ldots,0)$. We now use Assumption (b) -- i.e., that $\Omega_{\mathcal{D}^+}\subset\mathbb{Z}^d$ -- to show that for any $\omega\in\mathbb{Z}^d$, the function $g(x') = \cos(\langle\omega, x'\rangle)$ is an eigenfunction of $T_{K_{(\mathcal{D},\doubleyou)}}$. Specifically,  using the following notation
\begin{equation}
\delta(\omega\pm\nu) \coloneqq \begin{cases} 1 \hspace{1em}\text{if } (\omega=\nu) \lor (\omega =-\nu),\\
0\hspace{1em}\text{else},
\end{cases}
\end{equation}
and defining $\doubleyou_\omega$ to be the weight associated with $\omega\in\Omega_\mathcal{D}$, we 
have that
\begin{align}
    (T_{K_{(\mathcal{D},\doubleyou)}}g)(x) &= \frac{1}{(2\pi)^d}\int_\mathcal{X} \left[\frac{1}{\|\doubleyou\|^2_2}\left(\doubleyou^2_0 + \sum_{i=1}^{ |\Omega^{+}_\mathcal{D}|}\doubleyou_i^2\cos(\langle\omega_i, (x-x')\rangle)\right)\right] \cos(\langle\omega, x'\rangle)\,\mathrm{d}x' \\
    \nonumber
    &=\frac{1}{(2\pi)^d\|\doubleyou\|_2^2}\left[\int_\mathcal{X} w^2_0\cos(\langle\omega, x'\rangle)\,\mathrm{d}x' + \int_\mathcal{X}\sum_{i=1}^{ |\Omega^{+}_\mathcal{D}|}\doubleyou_i^2\cos(\langle\omega_i, (x-x')\rangle)\cos(\langle\omega, x'\rangle)\,\mathrm{d}x'\right]
    \\
    \nonumber
    &=\frac{1}{(2\pi)^d\|\doubleyou\|_2^2}\int_\mathcal{X}\sum_{i=1}^{ |\Omega^{+}_\mathcal{D}|}\doubleyou_i^2\cos(\langle\omega_i,(x-x')\rangle) \cos(\langle\omega, x'\rangle)\,\mathrm{d}x'\hspace{3em}\text{[via assumption (b)]} \\
    \nonumber
    &=\frac{1}{(2\pi)^d\|\doubleyou\|_2^2}\sum_{i=1}^{ |\Omega^{+}_\mathcal{D}|}\doubleyou_i^2\int_\mathcal{X}\cos(\langle\omega_i,(x-x')\rangle) \cos(\langle\omega, x'\rangle)\,\mathrm{d}x' \\
    \nonumber
    &= \frac{1}{(2\pi)^d\|\doubleyou\|^2_2}\sum_{i =1}^{|\Omega^{+}_\mathcal{D}|}\doubleyou_i^2 \frac{(2\pi)^d}{2}\cos(\langle\omega_i,x\rangle)\delta(\omega_i\pm\omega) \hspace{7.5em}\text{[via assumption (b)]}\\
    \nonumber
    &= \begin{cases} 
    \frac{\doubleyou_\omega^2}{2\|\doubleyou\|^2_2}\cos(\langle\omega,x\rangle) \hspace{1em}\text{if }\omega\in\Omega^+_\mathcal{D},\\
    \nonumber
    \frac{\doubleyou_{-\omega}^2}{2\|\doubleyou\|^2_2}\cos(\langle\omega,x\rangle) \hspace{1em}\text{if }\omega\in-\Omega^+_\mathcal{D},\\
    0 \hspace{6.8em}\text{else}.
    \nonumber
    \end{cases} 
\end{align}
A similar calculation shows that, for all $\omega\in\mathbb{Z}^d$,
\begin{equation}
\left(T_{K_{(\mathcal{D},\doubleyou)}}\sin(\langle\omega,x'\rangle)\right)(x)    = \begin{cases} 
    \frac{\doubleyou_\omega^2}{2\|\doubleyou\|^2_2}\sin(\langle\omega,x\rangle) \hspace{2em}\text{if }\omega\in\Omega^+_\mathcal{D},\\
    \frac{\doubleyou_{-\omega}^2}{2\|\doubleyou\|^2_2}\sin(\langle\omega,x\rangle) \hspace{1em}\text{if }\omega\in-\Omega^+_\mathcal{D},\\
    0 \hspace{7.2em}\text{else}.
    \end{cases} 
\end{equation}
As such, we have that all functions in the set $\{\sin\langle\omega,x\rangle)\,|\,\omega\in\mathbb{Z}^d\}\cup \{\cos\langle\omega,x\rangle)\,|\,\omega\in\mathbb{Z}^d\}$ are eigenfunctions of $T_{K_\mathcal{D}}$. However, as this set is a basis for $L^2(\mathcal{X},P_\mathcal{X})$ -- in the relevant case where $P_\mathcal{X}$ is the uniform distribution -- we can conclude that
\begin{align}
  \|T_{K_{(\mathcal{D},\doubleyou)}}\| &=\rho\left(T_{K_{(\mathcal{D},\doubleyou)}}\right)\\
  \nonumber
&= \max_{\omega \in \Omega_\mathcal{D}} \left\{\frac{1}{2}\frac{\doubleyou_\omega^2}{\|\doubleyou\|^2_2} \right\}\\
\nonumber
  &=\max_{\omega \in \Omega_\mathcal{D}}\left\{\frac{1}{2}p_{(\mathcal{D},\doubleyou)}(\omega)\right\}.
\end{align}
\end{proof}
As an aside, it is interesting to note that the minimization of the norm $\|T_{K_{(\mathcal{D},\doubleyou)}}\|$ subject to the constraint $\|\doubleyou\|^2_2\leq c_0$ 
for some $c_0>0$ can be captured in terms of a 
convex semi-definite problem. 
This problem can be written as
\begin{eqnarray}
\text{minimize } &c&\\
\text{subject to }
\frac{1}{2}
\frac{\doubleyou_\omega^2}{\|\doubleyou\|^2_2}&\leq& c,\\
\|\doubleyou\|^2_2&\leq &c_0,
\nonumber
\end{eqnarray}
which is easily seen to be equivalent with
%
%
\begin{eqnarray}
\text{minimize } &c&,\\
\text{subject to }
\left[
\begin{array}{cc}
2c & |\doubleyou_\omega|\\
|\doubleyou_\omega| & d
\end{array}
\right]&\geq& 0\,
\text{for all }\omega
,\\
\nonumber
d&\leq& c_0,\\
\nonumber
\left[
\begin{array}{cc}
d & \doubleyou\\
\doubleyou^T & I
\end{array}
\right]&\geq &0,
\nonumber
\end{eqnarray}
by making use of Schur complements.

\section{Proof of Lemma~\ref{lem:lower_bound}}\label{app:lower}

\begin{proof}[Proof of Lemma~\ref{lem:lower_bound}] As $g_{\vec{\omega}}(x)$ is the output of the RFF procedure in which frequencies $\vec{\omega} = (\omega_1,\ldots,\omega_M)$ were drawn, we know that $g_{\vec{\omega}}$ can be written as
\begin{equation}
g_{\vec{\omega}}(x) = \sum_{\omega\in\tilde{\Omega}_\mathcal{D}}\hat{g}_{\vec{\omega}}(\omega)e^{i\langle\omega, x\rangle},
\end{equation}
where $\hat{g}_{\vec{\omega}}(\omega) = 0$ for all $\omega \notin \{\omega_i\}|_{i=1}^M$. Again we abuse notation and use $\hat{g}_{\vec{\omega}}$ to denote the vector with entries $\hat{g}_{\vec{\omega}}(\omega)$. Now, given some vector $\vec{\omega} = (\omega_1,\ldots,\omega_M) \in \Omega^M_\mathcal{D}$, we define the sets $\Omega_{\vec{\omega}}\coloneqq\{\omega_1,\ldots,\omega_M\}\subseteq \Omega_\mathcal{D}$ and $\tilde{\Omega}_{\vec{\omega}}\coloneqq \Omega_{\vec{\omega}}\cup(-\Omega_{\vec{\omega}})$. Given some $\hat{f}^*$, we then define the vectors
\begin{align}
\hat{f}^*_{\vec{\omega}}(\omega) &= \begin{cases} \hat{f}^*(\omega) \text{ if } \omega\in\tilde{\Omega}_{\vec{\omega}} ,\\
0 \text{ else},
\end{cases}\\
\hat{f}^*_{/\vec{\omega}}(\omega) &= \begin{cases} 0 \text{ if } \omega\in\tilde{\Omega}_{\vec{\omega}} ,\\
\hat{f}^*(\omega) \text{ else}.
\end{cases}
\end{align}
Note that with these definitions, $\hat{f}^* = \hat{f}^*_{\vec{\omega}} + \hat{f}^*_{/\vec{\omega}}$. Using this, we have that
\begin{align}
\|\hat{f}^* - \hat{g}_{\vec{\omega}}\|_2^2 &= \|\hat{f}^*_{\vec{\omega}} + \hat{f}^*_{/\vec{\omega}} - \hat{g}_{\vec{\omega}}\|_2^2 \label{eq:two_norm_split_1}\\
\nonumber
&= \|\hat{f}^*_{/\vec{\omega}}\|^2_2 + \|\hat{f}^*_{\vec{\omega}} - \hat{g}_{\vec{\omega}}\|_2^2\\
&\geq \|\hat{f}^*_{/\vec{\omega}}\|^2_2 \label{eq:two_norm_split_2}\\
&= \|\hat{f}^*\|^2_2 - \|\hat{f}^*_{\vec{\omega}}\|_2^2. \label{eq:two_norm_split}
\end{align}
Using this expression, we can then lower-bound $\hat{\epsilon}$, the expected $L^2$-norm of the difference between the optimal PQC function and the output of the RFF procedure, recalling that $\xi(\vec{\omega})$ is the probability of sampling the vector of frequencies $\vec{\omega}$,
\begin{align}
\hat{\epsilon} &=\sum_{\vec{\omega}\in \Omega^M_\mathcal{D}}\|f^* - g_{\vec{\omega}}\|_2^2\,\xi(\vec{\omega})\label{eq:before_pars}\\
&= (2\pi)^d\sum_{\vec{\omega}\in \Omega^M_\mathcal{D}}\|\hat{f}^* - \hat{g}_{\vec{\omega}}\|_2^2\xi(\vec{\omega})\hspace{5em} \text{ [via Parseval's identity]}\label{eq:after_pars}\\
\nonumber
&\geq (2\pi)^d\sum_{\vec{\omega}\in \Omega^M_\mathcal{D}}\left[\|\hat{f}^*\|^2_2 - \|\hat{f}^*_{\vec{\omega}}\|_2^2\right]\xi(\vec{\omega})\hspace{3em} \text{ [via Eq.~\eqref{eq:two_norm_split}]} \\
&= (2\pi)^d\|\hat{f}^*\|^2_2 - (2\pi)^d\sum_{\vec{\omega}\in \Omega^M_\mathcal{D}}\|\hat{f}^*_{\vec{\omega}}\|_2^2\,\xi(\vec{\omega})\label{eq:sub_into}.
\end{align}
Note that we used the assumption of integer-valued frequency vectors -- and therefore orthogonal components of the feature map -- when invoking Parseval's identity to move from line~\eqref{eq:before_pars} to \eqref{eq:after_pars}. Using the short-hand notation $0$ to denote the frequency vector $(0,\ldots,0)$, we can now analyze the final term as 
\begin{align}
\nonumber
\sum_{\vec{\omega} \in \Omega^M_\mathcal{D}} \|\hat{f}^*_{\vec{\omega}}\|_2^2 \; \xi(\vec{\omega})
&= \sum_{\substack{\vec{\omega} \in \Omega^M_\mathcal{D}\\ 
\nonumber
0\notin\Omega_{\vec{\omega}}}} \sum_{i=1}^{M} \left(|\hat{f}^*(-\omega_i)|^2+|\hat{f}^*(\omega_i)|^2\right) \xi(\vec{\omega})\\
&\qquad\qquad+\sum_{\substack{\vec{\omega} \in \Omega^M_\mathcal{D}\\ 
\nonumber
0\in\Omega_{\vec{\omega}}}}\left[ \sum_{i=1}^{M-1} \left(|\hat{f}^*(-\omega_i)|^2+|\hat{f}^*(\omega_i)|^2\right) + |\hat{f}^*(0)|^2 \right] \xi(\vec{\omega})\\
\nonumber
&\leq \sum_{\substack{\vec{\omega} \in \Omega^M_\mathcal{D}\\ 
\nonumber
0\notin\Omega_{\vec{\omega}}}} \sum_{i=1}^{M} 2|\hat{f}^*(\omega_i)|^2 \xi(\vec{\omega})\\
\nonumber
&\qquad\qquad+ \sum_{\substack{\vec{\omega} \in \Omega^M_\mathcal{D}\\ 
\nonumber
0\in\Omega_{\vec{\omega}}}}\left[ \sum_{i=1}^{M-1} 2|\hat{f}^*(\omega_i)|^2 + 2|\hat{f}^*(0)|^2 \right] \xi(\vec{\omega})\\
\nonumber
&=  2\sum_{i=1}^{M} \sum_{\omega_1 \in \Omega_\mathcal{D}}\cdot\cdot\cdot \sum_{\omega_{M} \in \Omega_\mathcal{D}} |\hat{f}^*(\omega_i)|^2 p(\omega_1)\cdot\cdot\cdot p(\omega_{M}) \\
\nonumber
&= 2\sum_{i=1}^{M} \sum_{\omega_i \in \Omega_\mathcal{D}} |\hat{f}^*(\omega_i)|^2 p(\omega_i)\\
&= 2M \sum_{\nu \in \Omega_\mathcal{D}} |\hat{f}^*(\nu)|^2p(\nu).\label{eq:final_step}
\end{align}
Substituting Eq.~\eqref{eq:final_step} into Eq.~\eqref{eq:sub_into} then gives the statement of the Lemma.
\end{proof}
\newpage
\printbibliography

\end{document}